
\documentclass[nonacm,acmsmall]{acmart}

\usepackage{preamble}

\tr{
\setcopyright{cc}
\setcctype[4.0]{by}
}{
\setcopyright{rightsretained}
\acmJournal{PACMMOD}
\acmYear{2024} \acmVolume{2} \acmNumber{N1 (SIGMOD)}
\acmArticle{2} \acmMonth{2} \acmPrice{15.00}
\acmDOI{10.1145/3639257}
}


\makeatletter
\gdef\@copyrightpermission{
  \begin{minipage}{0.2\columnwidth}
   \href{https://creativecommons.org/licenses/by/4.0/}{\includegraphics[width=0.90\textwidth]{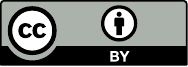}}
  \end{minipage}\hfill
  \begin{minipage}{0.8\columnwidth}
   \href{https://creativecommons.org/licenses/by/4.0/}{This work is licensed under a Creative Commons Attribution International 4.0 License.}
  \end{minipage}
  \vspace{5pt}
}
\makeatother

\begin{CCSXML}
<ccs2012>
   <concept>
       <concept_id>10010147.10010919</concept_id>
       <concept_desc>Computing methodologies~Distributed computing methodologies</concept_desc>
       <concept_significance>500</concept_significance>
       </concept>
   <concept>
       <concept_id>10002951.10002952.10003190.10003192.10003210</concept_id>
       <concept_desc>Information systems~Query optimization</concept_desc>
       <concept_significance>500</concept_significance>
       </concept>
 </ccs2012>
\end{CCSXML}

\ccsdesc[500]{Computing methodologies~Distributed computing methodologies}
\ccsdesc[500]{Information systems~Query optimization}

\keywords{Distributed Systems, Query Optimization, Paxos, 2PC, Relational Algebra, Datalog, Partitioning, Dataflow, Monotonicity}

\AtBeginDocument{%
  \providecommand\BibTeX{{%
    \normalfont B\kern-0.5em{\scshape i\kern-0.25em b}\kern-0.8em\TeX}}}

\begin{document}
\tr{
\title{Optimizing Distributed Protocols with Query Rewrites [Technical Report]}
}{
\title{Optimizing Distributed Protocols with Query Rewrites}}

\author{David C. Y. Chu}
\orcid{0000-0001-9922-1994}
\affiliation{
    \institution{University of California, Berkeley}
    \country{USA}
}
\email{thedavidchu@berkeley.edu}

\author{Rithvik Panchapakesan}
\orcid{0009-0004-1428-5024}
\affiliation{
    \institution{University of California, Berkeley}
    \country{USA}
}
\email{rithvik@berkeley.edu}

\author{Shadaj Laddad}
\orcid{0000-0002-6658-6548}
\affiliation{
    \institution{University of California, Berkeley}
    \country{USA}
}
\email{shadaj@berkeley.edu}

\author{Lucky E. Katahanas}
\orcid{0009-0008-3073-0844}
\affiliation{
    \institution{Sutter Hill Ventures}
    \country{USA}
}
\email{lucky@shv.com}

\author{Chris Liu}
\orcid{0009-0002-1880-1941}
\affiliation{
    \institution{University of California, Berkeley}
    \country{USA}
}
\email{chris-liu@berkeley.edu}

\author{Kaushik Shivakumar}
\orcid{0009-0002-5943-9301}
\affiliation{
    \institution{University of California, Berkeley}
    \country{USA}
}
\email{kaushiks@berkeley.edu}

\author{Natacha Crooks}
\orcid{0000-0002-3567-801X}
\affiliation{
    \institution{University of California, Berkeley}
    \country{USA}
}
\email{ncrooks@berkeley.edu}

\author{Joseph M. Hellerstein}
\orcid{0000-0002-7712-4306}
\affiliation{
    \institution{University of California, Berkeley}
    \country{USA}
}
\affiliation{
    \institution{Sutter Hill Ventures}
    \country{USA}
}
\email{hellerstein@berkeley.edu}

\author{Heidi Howard}
\orcid{0000-0001-5256-7664}
\affiliation{
    \institution{Azure Research, Microsoft}
    \country{UK}
}
\email{heidi.howard@microsoft.com}

\renewcommand{\shortauthors}{David C. Y. Chu et al.}

\begin{abstract}
Distributed protocols such as 2PC and Paxos lie at the core of many systems in the cloud, but standard implementations do not scale.
New scalable distributed protocols are developed through careful analysis and rewrites, but this process is ad hoc and error-prone.
This paper presents an approach for scaling \emph{any} distributed protocol by applying rule-driven rewrites, borrowing from query optimization.
Distributed protocol rewrites entail a new burden: reasoning about spatiotemporal correctness.
We leverage order-insensitivity and data dependency analysis to systematically identify correct coordination-free scaling opportunities.
We apply this analysis to create preconditions and mechanisms for coordination-free decoupling and partitioning, two fundamental vertical and horizontal scaling techniques.
Manual rule-driven applications of decoupling and partitioning improve the throughput of 2PC by $5\times$ and Paxos by $3\times$, and match state-of-the-art throughput in recent work.
These results point the way toward automated 
optimizers for distributed protocols based on 
correct-by-construction rewrite rules.
\end{abstract}
\maketitle

\section{Introduction}
\label{sec:intro}


Promises of better cost and scalability have driven the migration of database systems to the cloud. 
Yet, the distributed protocols at the core of these systems, such as 2PC~\cite{mohan1986transaction} or Paxos~\cite{paxos}, are not designed to scale:
when the number of machines grows, overheads often increase and throughput drops.
As such, there has been a wealth of research on developing new, scalable distributed protocols.
Unfortunately, each new design requires careful examination of prior work and new correctness proofs; the process is ad hoc and often
error-prone~\cite{epaxos-broken,zyzzyvaBug, craqBug, raftBug, raftDissertation, protocolBugsList}.
Moreover, due to the heterogeneity
of proposed approaches, each new insight is localized to its particular protocol and cannot
easily be composed with other efforts.

This paper offers an alternative approach.
Instead of creating new distributed protocols from scratch, we formalize scalability optimizations into \textit{\changebars{rule-based}{rule-driven} rewrites} that are correct by construction 
and can be applied to \textit{any} distributed protocol.

To rewrite distributed protocols, we take a page from traditional SQL query optimizations.
Prior work has shown that distributed protocols
can be expressed declaratively as sets of queries in a SQL-like language such as Dedalus~\cite{dedalus}, which we adopt here. 
Applying query optimization to these protocols thus seems like an appealing way forward. Doing so correctly however, requires care, as the domain of distributed protocols requires optimizer transformations whose correctness is subtler than classical matters like the associativity and commutativity of join.
In particular, transformations to scale across machines must reason about program equivalence in the face of changes to spatiotemporal semantics like the order of data arrivals and the location of state.
We focus on applying two fundamental scaling optimizations in this paper: \emph{decoupling} and \emph{partitioning}, which correspond to vertical and horizontal scaling.
We target these two techniques because (1) they can be generalized across protocols and (2) were recently shown by Whittaker et al.~\cite{compartmentalized} to achieve state-of-the-art throughput on complex distributed protocols such as Paxos.
While Whittaker's rewrites are handcrafted specifically for Paxos, our goal is to \changebars{systematically decouple and partition}{rigorously define the general preconditions and mechanics for decoupling and partitioning, so they can be used to correctly rewrite} \emph{any} distributed protocol.



\textit{Decoupling} improves scalability by spreading \textit{logic} across machines to take advantage of additional physical resources and pipeline parallel computation.
Decoupling rewrites data dependencies on a single node into messages that are sent via asynchronous channels between nodes.
Without coordination, the original timing and ordering of messages cannot be guaranteed once these channels are introduced.
To preserve correctness without introducing coordination, we decouple sub-components that produce the same responses regardless of message ordering or timing: these sub-components are \emph{order-insensitive}.
Order-insensitivity is easy to systematically identify in Dedalus
thanks to its relational model: Dedalus programs are an (unordered) set of queries over (unordered) relations, so the logic for ordering---time, causality, log sequence numbers---is the exception, not the norm, and easy to identify.
By avoiding decoupling the logic that explicitly relies on order, we can decouple the remaining order-insensitive sub-components without coordination.

\textit{Partitioning} improves scalability by spreading \textit{state} across machines and parallelizing compute, a technique widely used in query processing~\cite{gamma,grace}.
Textbook discussions focus on partitioning data to satisfy a single query operator like join or group-by.
If the next operator downstream requires a different partitioning, then data must be forwarded or ``shuffled'' across the network.
We would like to partition data in such a way that \emph{entire sub-programs} can compute on local data without reshuffling.
We leverage relational techniques like functional dependency analysis to find data partitioning schemes that can allow as much code as possible to work on local partitions without reshuffling between operators.
This is a benefit of choosing to express distributed protocols in the relational model:
functional dependencies are far easier to identify in a relational language than a procedural language.

We demonstrate the generality of our optimizations by \changebars{systematically}{methodically} applying rewrites to three seminal distributed protocols: voting, 2PC, and Paxos.
We specifically target Paxos~\cite{paxosComplex} as it is a protocol with many distributed invariants and it is challenging to verify~\cite{verdi,ironfleet,distai}.
The throughput of the optimized voting, 2PC, and Paxos protocols scale by $2\times$, $5\times$, and $3\times$ respectively, a scale-up factor that matches the performance of \changebars{manual}{ad hoc} rewrites~\cite{compartmentalized} when the underlying language of each implementation is accounted for and achieves state-of-the-art performance for Paxos.

Our correctness arguments focus on the equivalence of localized,
``peephole'' optimizations of dataflow graphs.
Traditional protocol optimizations often make wholesale modifications to protocol logic and therefore require holistic reasoning to prove correctness.
We take a different approach. Our rewrite rules modify existing programs with small local changes, each of which is proven to preserve semantics. 
As a result, each rewritten subprogram is provably indistinguishable to an observer (or client) from the original.
We do not need to prove that holistic protocol invariants are preserved---they must be.
Moreover, because rewrites are local and preserve semantics, they can be \emph{composed} to produce protocols with multiple optimizations, as we demonstrate in~\Cref{sec:optimization-and-performance}.

Our local-first approach naturally has a potential cost: the space of protocol optimization is limited by design as it treats the initial implementation as ``law''.
It cannot distinguish between true protocol invariants and implementation artifacts, limiting the space of potential optimizations. Nonetheless, we find that, when applying our results to seminal distributed system algorithms, we easily match the results of their (manually proven) optimized implementations.

In summary, we make the following contributions:
\begin{enumerate}
    \item We present the preconditions and mechanisms for applying multiple correct-by-construction rewrites of two fundamental transformations: decoupling and partitioning.
    \item We demonstrate the application of these \changebars{rewrites in}{rule-driven rewrites by manually applying them to} complex distributed protocols such as Paxos.
    \item We evaluate our optimized programs and observe $2-5\times$ improvement in throughput across protocols with state-of-the-art throughput in Paxos,
    validating the role of correct-by-construction rewrites for distributed protocols.
\end{enumerate}

\tr{}{Due to a lack of space, the full precondition, mechanism, and proof of correctness of each rewrite in this paper can be found in the technical report~\cite{autocompTR}.}
\section{Background}
\label{sec:background}

Our contributions begin with the program rewriting rules in \Cref{sec:decoupling}.
Naturally, the correctness of those rules depends on the details of the language we are rewriting, Dedalus.
Hence in this section we pause to review the syntax and semantics of Dedalus, as well as additional terminology we will use in subsequent discussion.

Dedalus is a spatiotemporal logic for distributed systems~\cite{dedalus}.
As we will see in Section~\ref{sec:dedalus}, Dedalus captures specifications for the state, computation and messages of
a set of distributed \textbf{nodes} over time. Each node (a.k.a.\ machine, thread) has its own explicit ``clock'' that marks out local time sequentially.
Dedalus (and hence our work here) assumes a standard asynchronous model 
in which messages between correct nodes can be arbitrarily delayed and reordered, but must eventually be delivered after an infinite amount of time~\cite{dwork1988consensus}.

Dedalus is a dialect of Datalog$^\neg$, which is itself a SQL-like declarative logic language that supports familiar constructs like joins, selection, and projection, with additional support for recursion, aggregation (akin to \ded{GROUP BY} in SQL), and negation (\ded{NOT IN}). Unlike SQL, Datalog$^\neg$ has set semantics.

\subsection{Running example}
\label{sec:running-example}

\begin{figure}[t]
    \centering
    \includegraphics[width=0.5\linewidth]{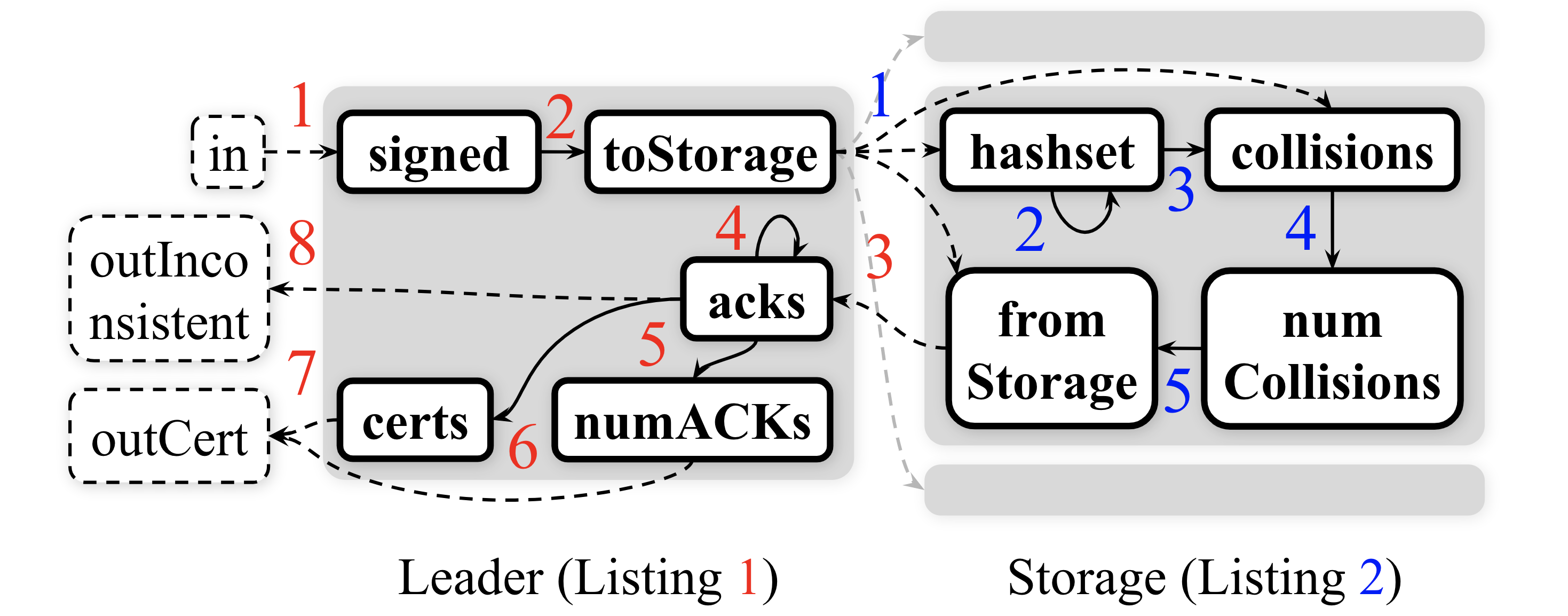}
    \caption{Dataflow diagram for a verifiably-replicated KVS.
    Edges are labeled with corresponding line numbers; dashed edges represent asynchronous channels. Each gray bounding box represents a node; select nodes' dataflows are presented.}
    \label{fig:running-example}
\end{figure}

As a running example, we focus on a verifiably replicated key-value store with hash-conflict detection inspired by~\cite{basil}.
We use this example to explain the core concepts of Dedalus and to illustrate in \Cref{sec:decoupling,sec:partitioning} how our transformations can be applied. In \Cref{sec:eval} we turn our attention to 
more complex and realistic examples, including Paxos and 2PC.
\Cref{fig:running-example} provides a high level diagram of the example; we explain the corresponding Dedalus code (\Cref{lst:storage,lst:leader}) in the next subsection.

The running example consists of a leader node and multiple storage nodes and allows clients to write to storage nodes, with the ability to detect concurrent writes.
The leader node cryptographically signs each client message and broadcasts both the message and signature to each storage node.
Each storage node then stores the message and the hash of the message in a local table if the signature is valid.
The storage nodes also calculate the number of unique existing messages in the table whose hash collides with the hash of the message.
The storage nodes then sign the original message and respond to the leader node.
Upon collecting a response from each storage node, if the number of hash collisions is consistent across responses, the leader creates a certificate of all the responses and replies to the client.
If any two storage nodes report differing numbers of hash collisions, the leader notifies the client of the inconsistency.
We use this simple protocol for illustration, and present more complete protocols---2PC and Paxos---in~\Cref{sec:eval}.

\subsection{Datalog$^\neg$}
\label{sec:datalog}
We now introduce the necessary Datalog$^\neg$ terminology,
copying code snippets from Listings~\ref{lst:leader} and~\ref{lst:storage} to introduce key concepts.


A Datalog$^\neg$ \textbf{program} is a set of \textbf{rules} in no particular order. 
A rule $\varphi$ is like a view definition in SQL, defining a virtual relation via a query over other relations.
A \textbf{literal} in a rule is either a relation, a negated relation, or a boolean expression.
A rule consists of a deduction operator \ded{:-} defining a single left-hand-side relation (the \textbf{head} of the rule) via a list of right-hand-side literals (the \textbf{body}).

Consider \Cref{line:storage-collisions} of \Cref{lst:storage}, which computes hash collisions:
\begin{lstlisting}[language=Dedalus, float=false, firstnumber=3]
collisions(val2,hashed,l,t) :- toStorage(val1,leaderSig,l,t), hash(val1,hashed), hashset(hashed,val2,l,t)
\end{lstlisting}
In this example, the head literal is \ded{collisions}, and the body literals are \ded{toStorage}, \ded{hash}, and \ded{hashset}.
Each body literal can be a (possibly negated) \textbf{relation} $r$ consisting of multiple \textbf{attributes} $A$, \emph{or}
a boolean expression; the head literal must be a relation. For example, \ded{hashset} is a relation with four attributes representing the hash, message value, 
location, and time in that order.
Each attribute must be bound to a constant or \textbf{variable}; attributes in the head literal can also be bound to \textbf{aggregation functions}.
In the example above, the attribute representing the message value in \ded{hashset} is bound to the variable \ded{val2}.
Positive literals in the body of the rule are joined together; negative literals are anti-joined (SQL's \ded{NOT IN}). 
Attributes bound to the same variable form an equality predicate---in the rule above, the first attribute of \ded{toStorage} must be equal to the first attribute of \ded{hash} since they are both bound to \ded{val1}; this specifies an equijoin of those two relations. 
Two positive literals in the same body that share no common variables form a cross-product.
Multiple rules may have the same head relation; the head relation is defined as the disjunction (SQL \ded{UNION}) of the rule bodies.

Note how library functions like \ded{hash} are simply modeled as infinite relations of the form \ded{(input, output)}.
Because these are infinite relations, they can only be used in a rule body if the input variables are bound to another attribute---this corresponds to ``lazily evaluating'' the function only for that attribute's finite set of values.
For example, the relation \ded{hash} contains the fact \ded{(x, y)} if and only if \ded{hash(x)} equals {y}.

Relations $r$ are populated with \textbf{facts} $f$, which are tuples of values, one for each attribute of $r$.
We will use the syntax $\pi_A(f)$ to project $f$ to the value of attribute $A$.
Relations with facts stored prior to execution are traditionally called \emph{extensional} relations, and the set of extensional relations is called the \textbf{EDB}. Derived relations, defined in the heads of rules, are traditionally called \emph{intensional} relations, and the set of them is called the \textbf{IDB}.
Boolean operators and library functions like \ded{hash} have pre-defined content, hence they are (infinite) EDB relations.


Datalog$^\neg$ also supports negation and aggregations. An example of aggregation is seen in \Cref{lst:storage} \Cref{line:storage-num-collisions}, which counts the number of hash collisions with the \ded{count} aggregation: 
\begin{lstlisting}[language=Dedalus, float=false, firstnumber=4]
numCollisions(count<val>,hashed,l,t) :- collisions(val,hashed,l,t)
\end{lstlisting}
In this syntax, attributes that appear outside of aggregate functions form the \ded{GROUP BY} list; attributes inside the functions are aggregated.
In order to compute aggregation in any rule $\varphi$, we must first compute the full content of all relations $r$ in the body of $\varphi$.
Negation works similarly: if we have a literal \ded{!r(x)} in the body, we can only check that \ded{r} is empty after
we're sure we have computed the full contents of \ded{r(x)}. 
We refer the reader to~\cite{aggAndNeg,alice} for further reading on aggregation and negation.



\subsection{Dedalus}
\label{sec:dedalus}
Dedalus programs are legal Datalog$^\neg$ programs, constrained to adhere to three additional rules on the syntax.

\textbf{(1) Space and Time in Schema:} All IDB relations must contain two attributes at their far right: location $L$ and time $T$.
Together, these attributes model \textit{where} and \textit{when} a fact exists in the system.
For example, in the rule on \Cref{line:storage-collisions} discussed above, a \ded{toStorage} message $m$ and signature $sig$ that arrives at time $t$ at a node with location $addr$ is represented as a fact \mbox{\ded{toStorage(}$m, sig, addr ,t$\ded{)}}.

\textbf{(2) Matching Space-Time Variables in Body:} 
The location and time attributes in \emph{all} body literals must be bound to the same variables $l$ and $t$, respectively. This models the physical property that two facts can be joined only if they exist at the same time and location.
In \Cref{line:storage-collisions}, a \ded{toStorage} fact that appears on node $l$ at time $t$ can only match with \ded{hashset} facts that are also on $l$ at time $t$. 

We model library functions like \ded{hash} as relations
that are known (replicated) across all nodes $n$ and unchanging across all timesteps $t$. Hence we elide $L$ and $T$ from function and expression literals as a matter of syntax sugar, and assume they can join with other literals at all locations and times.

\textbf{(3) Space and Time Constraints in Head:} The location and time variables in the \emph{head} of rules must obey certain syntactic constraints, which 
ensure that the ``derived'' locations and times correspond to physical reality. 
These constraints differ across three types of rules.
\textbf{Synchronous} (``deductive''~\cite{dedalus}) 
rules are captured by having the same time variable in the head literal as in the body literals. Having these derivations assigned to the same timestep $t$ is only physically possible on a single node, so the location in the head of a synchronous rule must match the body as well. 
\textbf{Sequential} (``inductive''~\cite{dedalus}) rules are captured by having the head literal's time be the successor (\ded{t+1}) of the body literals' times \ded{t}.
Again, sequentiality can only be guaranteed physically on a single node in an asychronous system, so the location of the head in a sequential rule must match the body.
\textbf{Asynchronous} rules capture message passing between nodes, by having different
time and location variables in the head than the body. In an asynchronous system, messages are delivered at an arbitrary time in the future.
We discuss how this is modeled next. 

In an asynchronous rule $\varphi$,
the location attribute of the head and body relations in $\varphi$ are bound to different variables; a different location in the head of $\varphi$ indicates the arrival of the fact on a new node.
Asynchronous rules are constrained to capture non-deterministic delay by including a body literal for the built-in \ded{delay} relation (a.k.a. \ded{choose}~\cite{dedalus}, \ded{chosen}~\cite{dedalusSemantics}), a non-deterministic function that independently maps each  head fact to an arrival time. 
The logical formalism of the \ded{delay} function is discussed in~\cite{dedalusSemantics}; for our purposes it is sufficient to know that \ded{delay} is constrained to reflect Lamport's ``happens-before'' relation for each fact. That is, a fact sent at time $t$ on $l$ arrives at time $t'$ on $l'$, where $t < t'$.
We focus on 
\Cref{lst:storage}, \Cref{line:storage-ACK} from our running example.
\begin{lstlisting}[language=Dedalus, float=false, firstnumber=5]
fromStorage(l,sig,val,collCnt,l',t') :- toStorage(val,leaderSig,l,t), hash(val,hashed), numCollisions(collCnt,hashed,l,t), sign(val,sig), leader(l'), delay((sig,val,collCnt,l,t,l'),t')
\end{lstlisting}
This is an asynchronous rule where a storage node $l$ sends the count of hash collisions for each distinct storage request back to the leader $l'$.
Note the \ded{l'} and \ded{t'} in the head literal: they are derived from the body literals \ded{leader} (an EDB relation storing the leader address) and the built-in \ded{delay}. 
Note also how the first attribute of \ded{delay} (the function ``input'') is a tuple of variables that, together, distinguish each individual head fact. This allows \ded{delay} to choose a different \ded{t'} for every head fact~\cite{dedalusSemantics}.
The \ded{l} in the head literal represents the storage node's address and is used by the leader to count the number of votes; it is unrelated to asynchrony.

So far, we have only talked about facts that exist at a point in time $t$.
State change in Dedalus is modeled through the existence or non-existence of facts \emph{across} time.
\textbf{Persistence rules} like the one below from \Cref{line:storage-persist} of Listing~\ref{lst:storage} ensure, inductively, that facts in \ded{hashset} that exist at time $t$ exist at time $t+1$.
Relations with persistence rules---like \ded{hashset}---are \textbf{persisted}.
\begin{lstlisting}[language=Dedalus, float=false, firstnumber=2]
hashset(hashed,val,l,t') :- hashset(hashed,val,l,t), t'=t+1
\end{lstlisting}

\subsection{Further terminology}
\label{sec:further-terminology}

We introduce some additional terminology to capture 
the rewrites we wish to perform on Dedalus programs.

We assume that Dedalus programs are composed of separate \textbf{components} $C$, each with a non-empty set of rules $\overline{\varphi}$.
In our running example, \Cref{lst:leader,lst:storage} define the leader component and the storage component.
All the rules of a component are executed together on a single physical node.
Many instances of a component may be deployed, each on a different node.
The node at location \ded{addr} only has access to facts $f$ with $\pi_L(f) =$ \ded{addr}, modeling 
the shared-nothing property of distributed systems.

We define a rule's \textbf{references} as the IDB relations in its body; a component
references the set of relations referenced by its rules.
For example,
the storage component in Listing~\ref{lst:storage} references \ded{toStorage}, \ded{hashset}, \ded{collisions}, and \ded{numCollisions}.
A IDB relation is an \textbf{input} of a component $C$ if it is referenced in $C$ and it is not in the head of any rules of $C$; \ded{toStorage} is an input to the storage component.
A relation that is not referenced in $C$ but appears in the head of rules in $C$ is an \textbf{output} of $C$; \ded{fromStorage} is an output of the storage component. 
Note that this formulation explicitly allows a component to have multiple inputs and multiple outputs.
Inputs and outputs of the component correspond to asynchronous input and output channels of each node.

Our discussion so far has been at the level of rules; we will also need to reason about individual facts. 
A \textbf{proof tree}~\cite{alice} can be constructed for each IDB fact $f$, where $f$ lies at the root of the tree,
each leaf is an EDB or input fact, and each internal node is an IDB fact derived from its children via a single rule.
Below we see
a proof tree for one fact in \ded{toStorage}:

\begin{tikzpicture}[
every node/.style={fill=lightgray},
level 1/.style={sibling distance=45mm, level distance=0.8cm},
level 2/.style={sibling distance=40mm, level distance=0.8cm},
level 3/.style={sibling distance=20mm},
mylabel/.style={draw=none, fill=none, text=gray, font=\footnotesize, inner sep=0pt}
]
\node (root) {\fact{toStorage('hi', 0x7465, b.b.us:5678, 9)}} {
child { node (signed) {\fact{signed('hi', 0x7465, a.b.us:5678, 6)}} {
child { node[right] {\fact{in('hi', a.b.us:5678, 6)}}}
child { node[right] {\fact{sign('hi', 0x7465)}}}
}}
child { node {\fact{storageNodes(b.b.us:5678)}} }
child { node {\fact{delay(('hi', 0x7465, a.b.us:5678, 6, b.b.us:5678), 9)}} }
};

\node[mylabel, above right=-1mm and 6mm of root] (label1) {{\tiny \Cref{line:leader-broadcast}}};
\draw[dashed, ->, gray] (label1) to[bend right=20] (root);

\node[mylabel, above left=2mm and -8mm of signed] (label2) {{\tiny \Cref{line:leader-sign}}};
\draw[dashed, ->, gray] (label2) to[bend left=30] (signed);
\end{tikzpicture}




\begin{lstlisting}[language=Dedalus, label={lst:leader}, caption={Hashset leader in Dedalus.}]
signed(val,leaderSig,l,t) :- in(val,l,t), sign(val,leaderSig) |\label{line:leader-sign}|
toStorage(val,leaderSig,l',t') :- signed(val,leaderSig,l,t), storageNodes(l'), delay((val,leaderSig,l,t,l'),t') |\label{line:leader-broadcast}|
acks(src,sig,val,collCnt,l,t) :- fromStorage(src,sig,val,collCnt,l,t) |\label{line:leader-ACK}|
acks(src,sig,val,collCnt,l,t') :- acks(src,sig,val,collCnt,l,t), t'=t+1 |\label{line:leader-persist}|
numACKs(count<src>,val,collCnt,l,t) :- acks(src,sig,val,collCnt,l,t) |\label{line:leader-count-ACKs}|
certs(cert<sig>,val,collCnt,l,t) :- acks(src,sig,val,collCnt,l,t) |\label{line:leader-cert}|
outCert(cer,val,collCnt,hashed,l',t') :- certs(ce,val,collCnt,l,t), numACKs(cnt,val,collCnt,l,t), numNodes(cnt), client(l'), delay((cer,val,collCnt,hashed,l,t,l'),t') |\label{line:leader-cert-out}|
outInconsistent(val,l',t') :- acks(src1,sig1,val,collCnt1,l,t), acks(src2,sig2,val,collCnt2,l,t), collCnt1 != collCnt2, client(l'), delay((val,l,t,l'),t') |\label{line:leader-inconsistency-out}|
\end{lstlisting}

\begin{lstlisting}[language=Dedalus, label={lst:storage}, caption={Hashset storage node in Dedalus.}]
hashset(hashed,val,l,t') :- toStorage(val,leaderSig,l,t), hash(val,hashed), verify(val,leaderSig), t'=t+1 |\label{line:storage-write}|
hashset(hashed,val,l,t') :- hashset(hashed,val,l,t), t'=t+1 |\label{line:storage-persist}|
collisions(val2,hashed,l,t) :- toStorage(val1,leaderSig,l,t), hash(val1,hashed), hashset(hashed,val2,l,t) |\label{line:storage-collisions}|
numCollisions(count<val>,hashed,l,t) :- collisions(val,hashed,l,t) |\label{line:storage-num-collisions}|
fromStorage(l,sig,val,collCnt,l',t') :- toStorage(val,leaderSig,l,t), hash(val,hashed), numCollisions(collCnt,hashed,l,t), sign(val,sig), leader(l'), delay((sig,val,collCnt,l,t,l'),t') |\label{line:storage-ACK}|
\end{lstlisting}

\subsection{Correctness}
\label{sec:general-correctness}
This paper transforms single-node Dedalus components into ``equivalent'' multi-component, multi-node Dedalus programs; the transformations can be composed to scale entire distributed protocols.
For equivalence, we want a definition that satisfies any client (or observer) of the input/output channels of the original program.
To this end we employ equivalence of concurrent histories as defined for linearizability~\cite{linearizability}, the gold standard in distributed systems.


We assume that a history $H$ can be constructed from any run of a given Dedalus program $P$. Linearizability traditionally expects every program to include a specification that defines what histories are "legal". We make no such assumption and we consider any possible history generated by the unoptimized program $P$ to define the specification.  As such, the optimized program $P'$ is linearizable if any run of $P'$ generates the same output facts with the same timestamps as some run of $P$.

Our rewrites are safe over protocols that assume the following fault model:
an asynchronous network (messages between correct nodes will eventually be delivered) where up to $f$ nodes can suffer from
general omission failures~\cite{generalOmission} (they may fail to send or receive some messages).
After optimizing, one original node $n$ may be replaced by multiple nodes $n_1, n_2, \ldots$; the failure of any of nodes $n_i$ corresponds to a partial failure of the original node $n$, which is equivalent to the failure of $n$ under general omission.


\tr{Full proofs, preconditions, and mechanisms for the rewrites described in \Cref{sec:decoupling,sec:partitioning} can be found in \Cref{app:decoupling}.}
{Due to a lack of space, we omit the proofs of correctness of the rewrites described in~\Cref{sec:decoupling,sec:partitioning}.
Full proofs, preconditions, and rewrite mechanisms can be found in the appendix of our technical report~\cite{autocompTR}.}
\section{Decoupling}
\label{sec:decoupling}

Decoupling partitions code; it takes a Dedalus component running on a single node, and breaks it into multiple components that can run in parallel across many nodes.
Decoupling can be used to alleviate single-node bottlenecks by scaling up available resources.
Decoupling can also introduce pipeline parallelism: if one rule produces facts in its head that another rule consumes in its body, decoupling those rules across two components can allow the producer and consumer to run in parallel.

Because Dedalus is a language of unordered rules, decoupling a component is syntactically easy: we simply partition
the component's ruleset into multiple subsets, and assign each subset to a different node.
The result is syntactically legal, but the correctness story is not quite that simple.
To decouple \emph{and} retain the original program semantics, we must address
classic distributed systems challenges: how to get the right data to the right nodes (space), and how 
to ensure that introducing asynchronous messaging between nodes does not affect correctness (time).


In this section we step through a progression of decoupling scenarios, and introduce analyses and rewrites that provably address our concerns regarding space and time. Throughout, our goal is to avoid introducing
any \emph{coordination}---i.e. extra messages beyond the data passed between rules in the original program. 

\noindent
\textbf{General Construction for Decoupling:}
In all our scenarios we will consider a component $C$ at network location \ded{addr}, consisting of a set of rules $\overline{\varphi}$.
We will, without loss of generality, decouple $C$ into two components: $C_1 = \overline{\varphi}_1$, which stays at location \ded{addr}, 
and $C_2 = \overline{\varphi}_2$ which is placed at a new location \ded{addr2}.
The rulesets of the two new components partition the original ruleset: $\overline{\varphi}_1 \cap \overline{\varphi}_2 = \emptyset$ and 
$\overline{\varphi}_1 \cup \overline{\varphi}_2 \supseteq \overline{\varphi}$. 
Note that we may add new rules during decoupling to 
achieve equivalence.



\subsection{Mutually Independent Decoupling}
\label{sec:mutually-independent-decoupling}

Intuitively, if the component $C_1$ never communicates with $C_2$, then running them on two separate nodes should not change program semantics.
We simply need to ensure that inputs from other components are sent to \ded{addr} or \ded{addr2} appropriately.

\begin{figure}[t]
    \centering
    \includegraphics[width=0.5\linewidth]{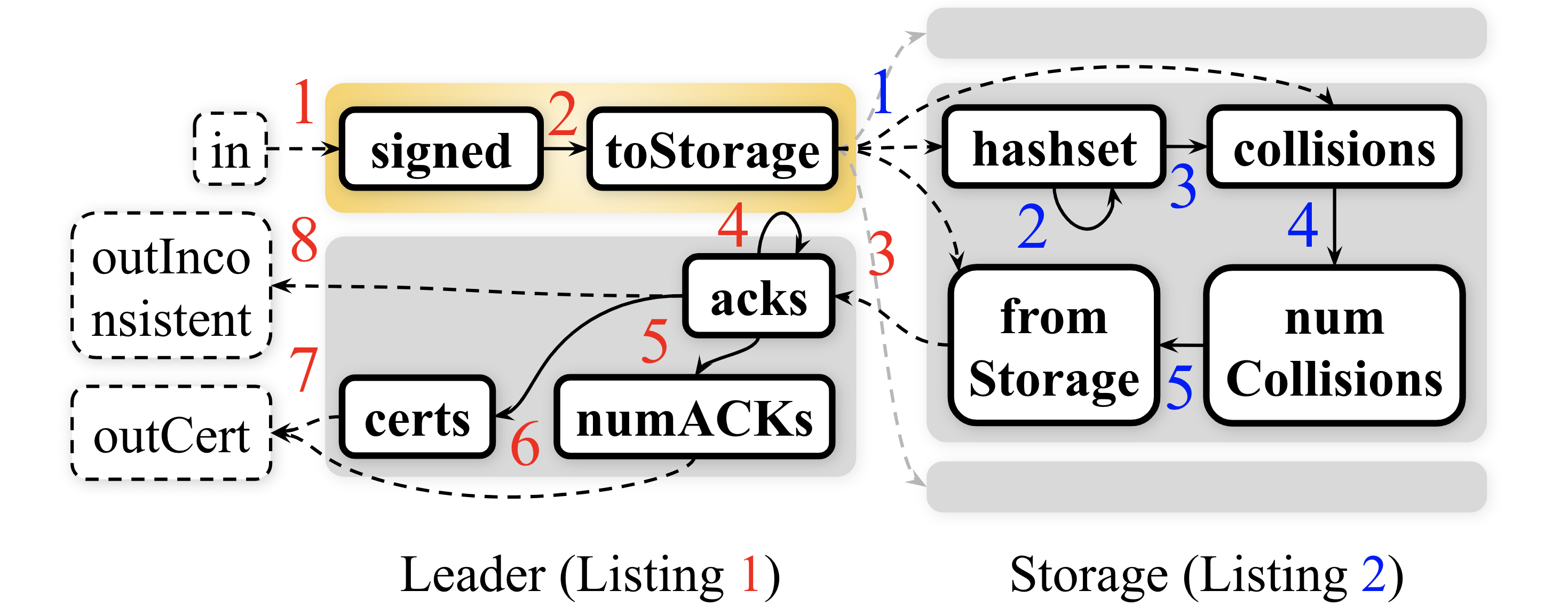}
    \caption{Running example after mutually independent decoupling.}
    \label{fig:running-example-mutually-independent-decoupling}
\end{figure}


Consider the component defined in \Cref{lst:leader}.
There is no dataflow between the relations in \Cref{line:leader-sign,line:leader-broadcast} and the relations in the remainder of the rules in the component.
One possible decoupling would place \Cref{line:leader-sign,line:leader-broadcast} on $C_1$, the remainder of \Cref{lst:leader} on $C_2$, and reroute \ded{fromStorage} messages from $C_1$ to $C_2$, as seen in \Cref{fig:running-example-mutually-independent-decoupling}.

We now define a precondition that determines when this rewrite can be applied:

\noindent
\textbf{Precondition:} 
$C_1$ and $C_2$ are mutually independent.

Recall the definition of \emph{references} from \Cref{sec:further-terminology}:
a component $C$ references IDB relation $r$ if some rule $\varphi \in C$ has $r$ in its body.
A component $C_1$ is \textit{independent} of component $C_2$ if
(a) the two components reference mutually exclusive sets of relations, and
(b) $C_1$ does not reference the outputs of $C_2$.
Note that this property is asymmetric:
$C_2$ may still be dependent upon $C_1$ by referencing $C_1$'s outputs. Hence our precondition requires \emph{mutual} independence.

\noindent
\textbf{Rewrite: Redirection.} 
Because $C_2$ has changed address, we need to direct facts from any relation $r$ referenced by $C_2$ to \ded{addr2}.
We simply add a ``redirection'' EDB relation to the body of each rule 
whose head is referenced in $C_2$, which maps \ded{addr} to \ded{addr2}, and any other address to itself. For our example above, we need to ensure that \ded{fromStorage} is sent 
to \ded{addr2}. To enforce this we rewrite \Cref{line:storage-ACK} of \Cref{lst:storage} as follows (note variable \ded{l''} in the head, and \ded{forward} in the body):
\begin{lstlisting}[language=Dedalus, float=false, firstnumber=5]
fromStorage(l,sig,val,collCnt,l'',t') :- toStorage(val,leaderSig,l,t), hash(val,hashed), numCollisions(collCnt,hashed,l,t), sign(val,sig), leader(l'), forward(l',l'') delay((l,sig,val,collCnt,l,t,l''),t')
\end{lstlisting}





\subsection{Monotonic Decoupling}
\label{sec:monotonic-decoupling}

Now consider a scenario in which $C_1$ and $C_2$ are not mutually independent.
If $C_2$ is dependent on $C_1$, decoupling changes the dataflow from $C_1$ to $C_2$ 
to traverse asynchronous channels.
After decoupling, facts that co-occurred in $C$ may be spread across time in $C_2$; similarly, two facts that were ordered or timed in a particular way in $C$ may be ordered or timed differently in $C_2$.
Without coordination, very little can be guaranteed about the behavior of a component after the ordering or timing of facts is modified.


Fortunately, the CALM Theorem~\cite{calm} tells us that \textit{monotonic} components eventually produce the same output independent of any network delays, including changes to co-occurrence, ordering, or timing of inputs. 
A component $C_2$ is monotonic if increasing its input set from $I$ to $I' \supseteq I$ implies that the output set $C_2(I') \supseteq C_2(I)$\footnote{There is some abuse of notation here treating $C_2$ as a function from one set of facts to to another, since the facts may be in different relations.
A more proper definition would be based on sets of multiple relations: input and EDB relations at the input, IDB relations at the output.};
in other words, each referenced relation and output of $C_2$ will monotonically accumulate a growing set of facts as inputs are received over time, independent of the order in which they were received.
The CALM Theorem ensures that if $C_2$ is shown to be monotonic, then we can safely decouple $C_1$ and $C_2$ without any coordination.


\begin{figure}[t]
    \centering
    \includegraphics[width=0.5\linewidth]{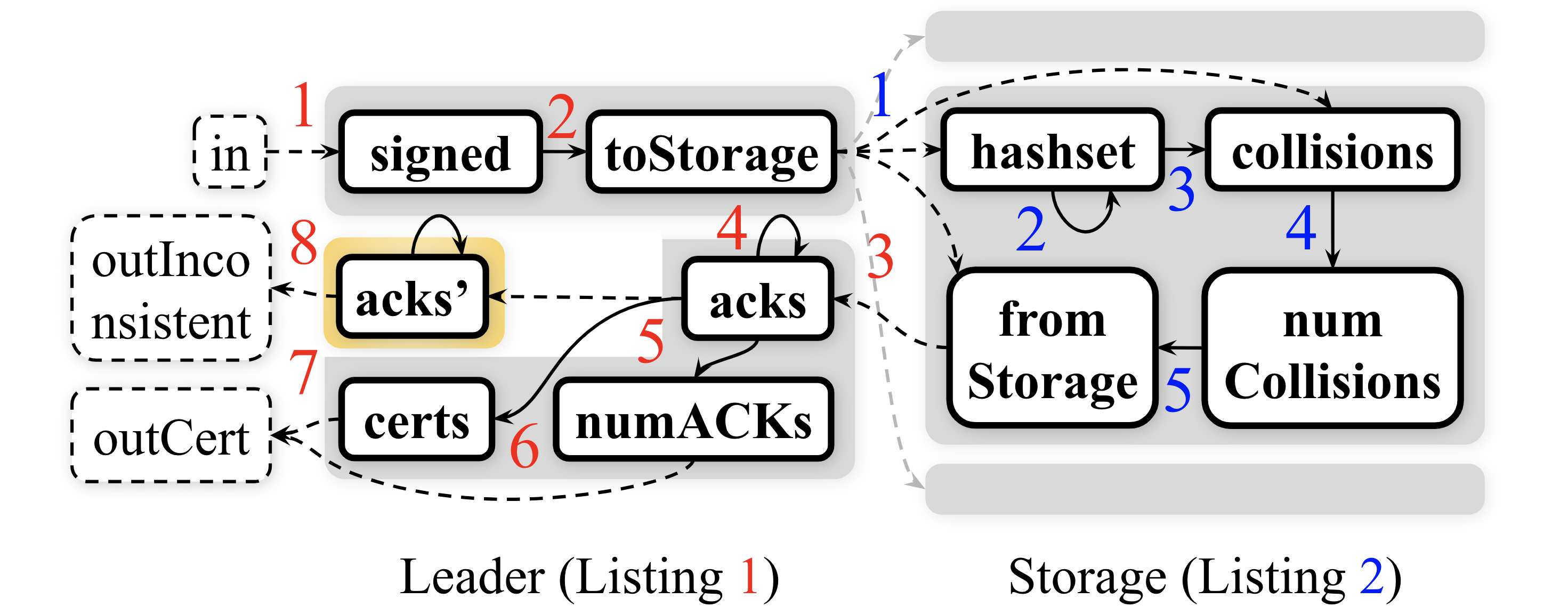}
    \caption{Running example after monotonic decoupling.}
    \label{fig:running-example-monotonic-decoupling}
\end{figure}

In our running example, the leader (\Cref{lst:leader}) is responsible for both creating certificates from a set of signatures (\Cref{line:leader-count-ACKs,line:leader-cert,line:leader-cert-out}) and checking for inconsistent ACKs (\Cref{line:leader-inconsistency-out}).
Since ACKs are persisted, once a pair is inconsistent, they will always be inconsistent; \Cref{line:leader-inconsistency-out} is monotonic.
Monotonic decoupling of \Cref{line:leader-inconsistency-out} allows us to
offload inconsistency-checking from a single leader to the decoupled ``proxy'' as highlighted in yellow in \Cref{fig:running-example-monotonic-decoupling}.

\noindent
\textbf{Precondition:}
$C_1$ is independent of $C_2$, and $C_2$ is \textbf{monotonic}.

Monotonicity of a Datalog$^\neg$ (hence Dedalus) component is undecidable~\cite{monotonicityUndecidable}, but effective conservative tests for monotonicity are well known.
A simple sufficient condition for monotonicity is to ensure that (a) $C_2$'s input relations are persisted,
and (b) $C_2$'s rules do not contain negation or aggregation.
In \tr{\Cref{app:monotonic-decoupling}}{the technical report} we relax each of these checks to be more permissive.

\noindent
\textbf{Rewrite: Redirection With Persistence.} 
Note that in this case we may have relations $r$ that are outputs of $C_1$ and inputs to $C_2$.
We use the same rewrite as in the previous section with one addition: we add a persistence rule to $C_2$ for each $r$ that is in the output of $C_1$ and the input of $C_2$, guaranteeing that all inputs of $C_2$ remain persisted.


The alert reader may notice performance concerns. First, $C_1$ may redundantly resend persistently-derived facts to $C_2$ each tick, even though $C_2$ is persistently storing them anyway via the rewrite.
Second, $C_2$ is required to persist facts indefinitely, potentially long after they are needed.
Solutions to this problem were explored in prior work~\cite{edelweiss} and can be incorporated here as well without affecting semantics.



\subsection{Functional Decoupling}
\label{sec:order-insensitive-decoupling}
Consider a component that behaves like a ``map'' operator for a pure function $F$ on individual facts: for each fact $f$ it receives as input, it outputs $F(f)$.
Surely these should be easy to decouple! 
Map operators are monotonic (their output set grows with their input set),
but they are also independent per fact---each output is determined only by its corresponding input, and in particular is not affected by previous inputs.
This property allows us to forgo the persistence rules we introduce for more general monotonic decoupling; we refer to this special case of monotonic decoupling as \emph{functional decoupling}.

\begin{figure}[t]
    \centering
    \includegraphics[width=0.5\linewidth]{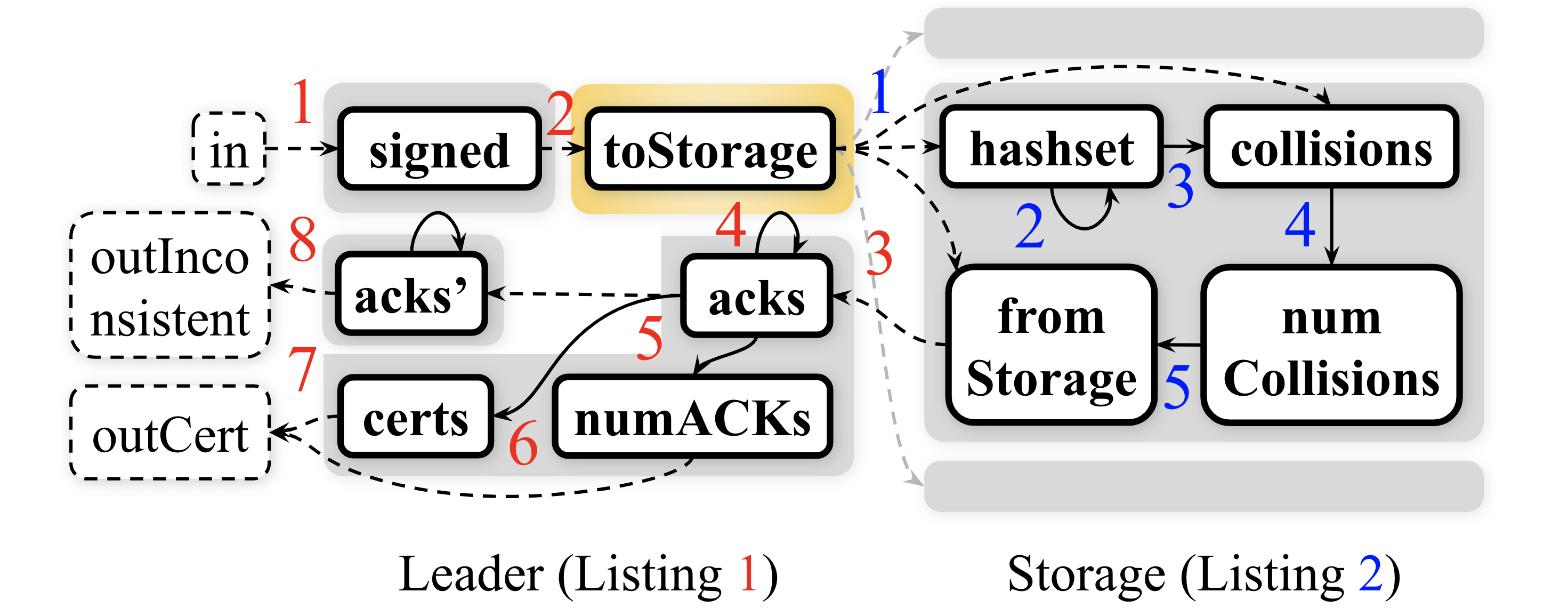}
    \caption{Running example after functional decoupling.}
    \label{fig:running-example-functional-decoupling}
\end{figure}

Consider again \Cref{line:leader-sign,line:leader-broadcast} in \Cref{lst:leader}.
Note that \Cref{line:leader-sign} works like a function on one input: each fact from \ded{in} results in an independent signed fact in \ded{signed}.
Hence we can decouple further, placing \Cref{line:leader-sign} on one node and \Cref{line:leader-broadcast} on another, forwarding signed values to \ded{toStorage}.
Intuitively, this decoupling does not change program semantics because \Cref{line:leader-broadcast} simply sends messages, regardless of which messages have come before: it behaves like pure functions.

\noindent
\textbf{Precondition:} $C_1$ is independent of $C_2$, and $C_2$ is \textbf{functional}---that is, (1) it does not contain aggregation or negation, and (2) each rule body in $C_2$ has at most one IDB relation.

\noindent
\textbf{Rewrite: Redirection.} We reuse the rewrite from Section~\ref{sec:mutually-independent-decoupling}.


As a side note, recall that persisted relations in Dedalus are by definition IDB relations.
Hence Precondition (2) prevents $C_2$ from joining current inputs (an IDB relation) with previous persisted data (another IDB relation)!
In effect, persistence rules are irrelevant to the output of a functional component, rendering functional components effectively ``stateless''.

\section{Partitioning}
\label{sec:partitioning}

Decoupling is the distribution of \emph{logic} across nodes;
partitioning (or ``sharding'') is the distribution of \emph{data}.
By using a relational language like Dedalus, we
can scale protocols using a variety of techniques that query optimizers use to maximize partitioning without excessive ``repartitioning'' (a.k.a. ``shuffling'') of data at runtime.

Unlike decoupling, which introduces new components, partitioning introduces additional nodes on which to run instances of each component. 
Therefore, each fact may be rerouted to any of the many nodes, depending on the partitioning scheme.
Because each rule still executes locally on each node, we must reason about changing the \emph{location} of facts.

We first need to define partitioning schemes, and what it means for a partitioning to be correct for a set of rules.
Much of this can be borrowed from recent theoretical literature~\cite{distConstraints,distSchemes,parallelCorrectness, parallelTransfer}.
A partitioning scheme is described by a \textit{distribution policy} $D(f)$ that outputs some node address \ded{addr\_i} for any fact $f$.
A partitioning 
preserves the semantics of the rules in a component
if it is \textbf{parallel disjoint correct}~\cite{distSchemes}.
Intuitively, this property says that the body facts that need to be colocated remain colocated after partitioning.
We adapt the parallel disjoint correctness definitions to the context of Dedalus as follows:

\begin{definition}
A distribution policy $D$ over component $C$ is \textit{parallel disjoint correct} if for any fact $f$ of $C$, for any two facts $f_1,f_2$ in the proof tree of $f$, $D(f_1) = D(f_2)$.
\end{definition}


Ideally we can find a single distribution policy that is parallel disjoint correct over the component in question. To do so,
we need to partition each relation based on the set of attributes used for joining or grouping the relation in the component's rules.
Such distribution policies are said to satisfy the co-hashing constraint (\Cref{sec:co-hashing-partitioning}).
Unfortunately, it is common for a single relation to be referenced in two rules with different join or grouping attributes.
In some cases, dependency analysis can still find a distribution policy that will be correct~(\Cref{sec:dependencies}).
If no parallel disjoint correct distribution policy can be found, we can resort to partial partitioning (\Cref{sec:partial-partitioning}), which replicates facts across multiple nodes.

To discuss partitioning rewrites on generic Dedalus programs, we consider without loss of generality a component $C$ with a set of rules $\overline{\varphi}$ at network location \ded{addr}.
We will partition the data at \ded{addr} across a set of new locations \ded{addr1}, \ded{addr2}, etc, each executing the same rules $\overline{\varphi}$.

\subsection{Co-hashing}
\label{sec:co-hashing-partitioning}

We begin with co-hashing~\cite{distSchemes,distConstraints}, a well studied constraint that avoids repartitioning data. 
Our goal is to co-locate facts that need to be combined because they (a) share a join key, (b) share a group key, or (c) share an antijoin key. 

Consider two relations $r_1$ and $r_2$ that appear in the body of a rule $\varphi$, with matching variables bound to attributes $A$ in $r_1$ and corresponding attributes $B$ in $r_2$.
Henceforth we will say that $r_1$ and $r_2$ ``share keys'' on attributes $A$ and $B$.
Co-hashing states that if $r_1$ and $r_2$ share keys on attributes $A$ and $B$, then all facts from $r_1$ and $r_2$ with the same values for $A$ and $B$ must be routed to the same partition.

\begin{figure}[t]
    \centering
    \includegraphics[width=0.5\linewidth]{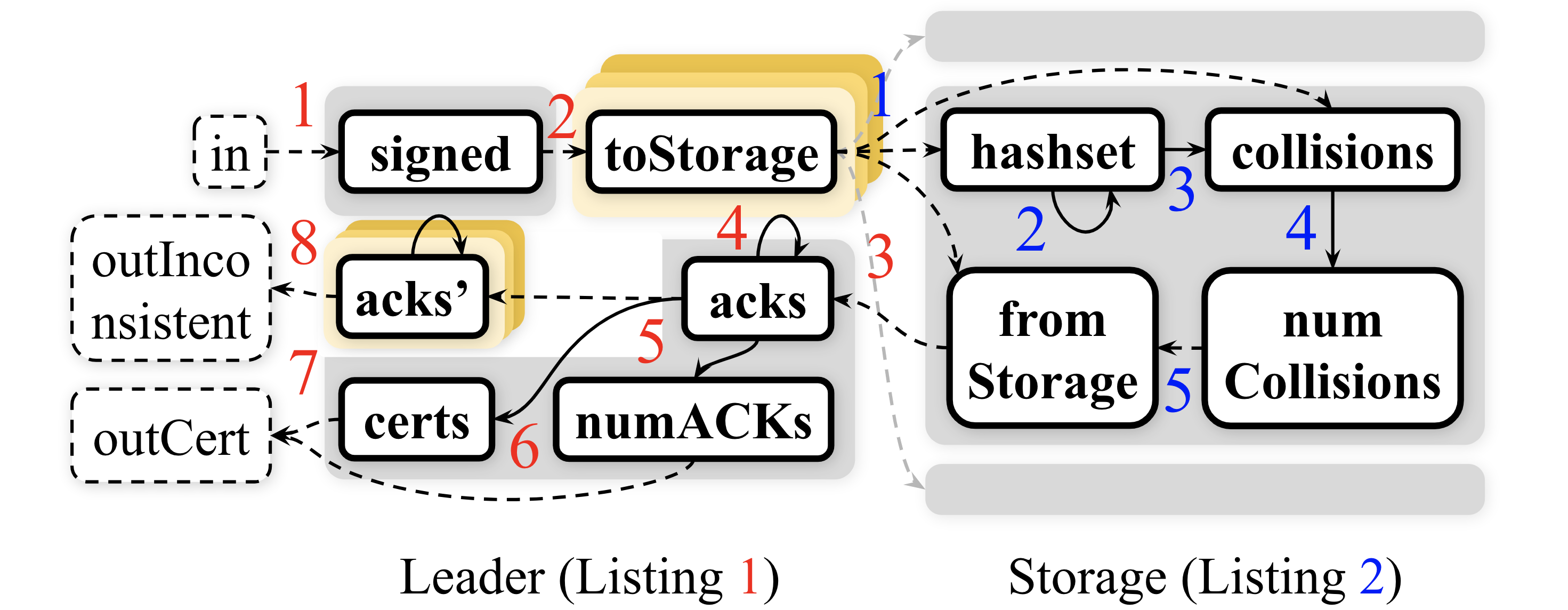}
    \caption{Running example after partitioning with co-hashing.}
    \label{fig:running-example-partitioning-with-co-hashing}
\end{figure}


Note that even if co-hashing is satisfied for individual rules, $r$ might need to be repartitioned \emph{between} the rules, because a relation $r$ might share keys with another relation on attributes $A$ in one rule and $A'$ in another.
To avoid repartitioning, we would like the distribution policy to partition consistently with co-hashing in \emph{every} rule of a component.

Consider \Cref{line:leader-inconsistency-out} of \Cref{lst:leader}, assuming it has already been decoupled.
Inconsistencies between ACKs are detected on a per-value basis and can be partitioned over the attribute bound to the variable \ded{val}; this is evidenced by the fact that the relation \ded{acks} is always joined with other IDB relations using the same attribute (bound to \ded{val}).
\Cref{line:leader-broadcast} and \Cref{lst:storage} \Cref{line:storage-ACK} are similarly partitionable by value, as seen in \Cref{fig:running-example-partitioning-with-co-hashing}.

Formally, a distribution policy 
$D$ \textbf{partitions} relation $r$ by attribute $A$ if for any pair of facts $f_1,f_2$ in $r$, $\pi_A(f_1) = \pi_A(f_2)$ implies $D(f_1) = D(f_2)$.
Facts are distributed according to their partitioning attributes.

$D$ \textbf{partitions consistently with co-hashing} if for any pair of referenced relations $r_1,r_2$ in rule $\varphi$ of $C$, $r_1$ and $r_2$ share keys on attribute lists $A_1$ and $A_2$ respectively, such that for any pair of facts $f_1 \in r_1, f_2 \in r_2$, $\pi_{A_1}(f_1) = \pi_{A_2}(f_2)$ implies $D(f_1) = D(f_2)$.
Facts will be successfully joined, aggregated, or negated after partitioning because they are sent to the same locations.

\noindent
\textbf{Precondition:} 
There exists a distribution policy $D$ for relations referenced by component $C$ that partitions consistently with co-hashing.

We can discover candidate distribution policies through a static analysis of the join and grouping attributes in every rule $\varphi$ in $C$.

\noindent
\textbf{Rewrite: Redirection With Partitioning.}
We are given a distribution policy $D$ from the precondition.
For any rules in $C'$ whose head is referenced in $C$, we modify the ``redirection'' relation such that messages $f$ sent to $C$ at \ded{addr} are instead sent to the appropriate node of $C$ at $D(f)$. 



\subsection{Dependencies}
\label{sec:dependencies}



By analyzing Dedalus rules, we can identify dependencies between attributes that (1) strengthen partitioning by showing that partitioning on one attribute can imply partitioning on another, and (2) loosen the co-hashing constraint.

For example, consider a relation $r$ that contains both an original string attribute \ded{Str} and its uppercased value in attribute \ded{UpStr}.
The \textbf{functional dependency} (FD) $\ded{Str} \to \ded{UpStr}$ strengthens partitioning: partitioning on \ded{UpStr} implies partitioning on \ded{Str}.
Formally, relation $r$ has a \textit{functional dependency} $g: A \to B$ on attribute lists $A,B$ if for all facts $f \in r$, $\pi_B(f) = g(\pi_A(f))$ for some function $g$. That is, the values $A$ in the domain of $g$ determine the values in the range, $B$.
This reasoning allows us to satisfy multiple co-hashing constraints simultaneously.

Now consider the following joins in the body of a rule: \ded{p(str), r(str, upStr), q(upStr)}.
Co-hashing would not allow partitioning, because $p$ and $q$ do not share keys over their attributes.
However, if we know the functional dependency $\ded{Str} \rightarrow \ded{UpStr}$ over $r$, then we can partition $p,q,r$ on the uppercase values of the strings and still avoid reshuffling.
This \textbf{co-partition dependency} (CD) between the attributes of $p$ and $q$ loosens the co-hashing constraint beyond sharing keys.
Formally, relations $r_1$ and $r_2$ have a \textit{co-partition dependency} $g: A \hookrightarrow B$ on attribute lists $A,B$ if for all proof trees containing facts $f_1 \in r_1$, $f_2 \in r_2$, we have $\pi_B(f_1) = g(\pi_A(f_2))$ for some function $g$. If we partition by $B$ (the range of $g$) we also successfully partition by $A$ (the domain of $g$).


\begin{figure}[t]
    \centering
    \includegraphics[width=0.5\linewidth]{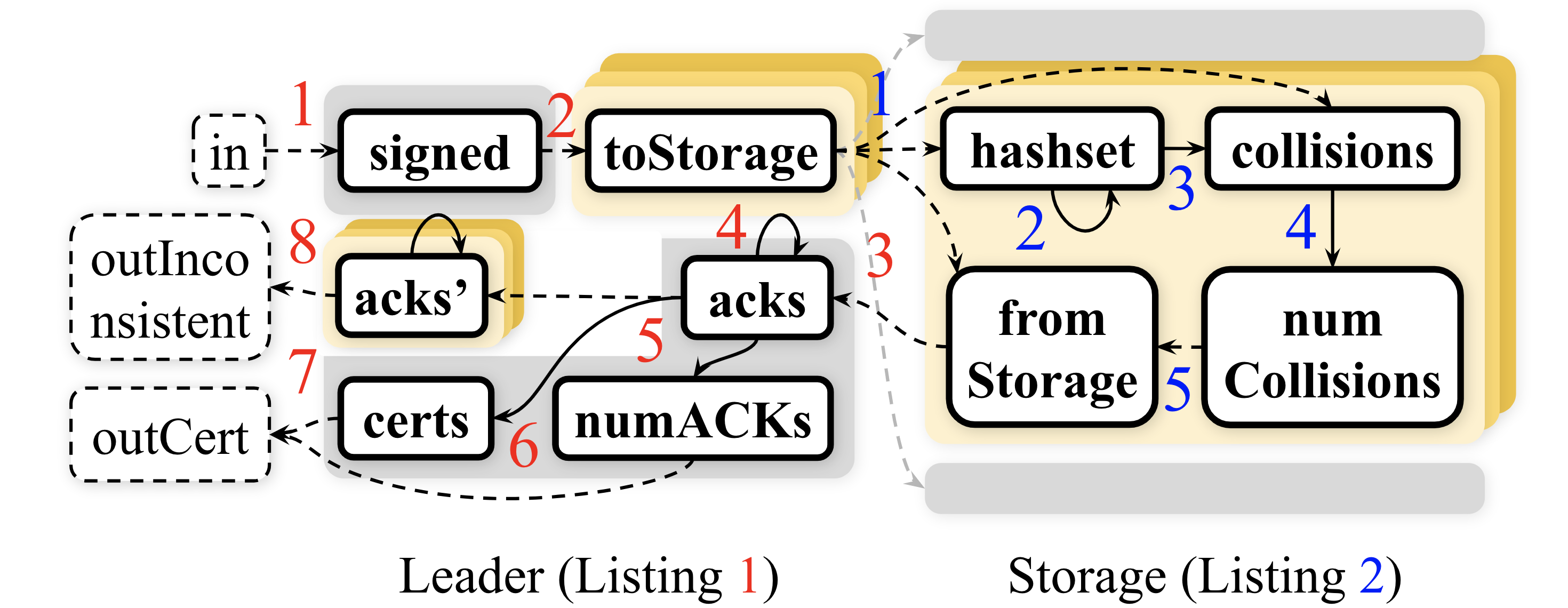}
    \caption{Running example after partitioning with dependencies.}
    \label{fig:running-example-partitioning-with-dependencies}
\end{figure}

We return to the running example to see how CDs and FDs can be combined to enable coordination-free partitioning where co-hashing forbade it.
\Cref{lst:storage} cannot be partitioned with co-hashing because \ded{toStorage} does not share keys with \ded{hashset} in \Cref{line:storage-collisions}.
No distribution policy can satisfy the co-hashing constraint if there exists two relations in the same rule that do not share keys.
However, we know that \emph{the hash is a function of the value}; there is an FD $\ded{hash.1} \to \ded{hash.2}$.
Hence partitioning on \ded{hash.2} implies partitioning on \ded{hash.1}.
The first attributes of \ded{toStorage} and \ded{hashset} are joined through the attributes of the \ded{hash} relation in all rules, forming a CD.
Let the first attributes of \ded{toStorage} and \ded{hashset}---representing a value and a hash---be $V$ and $H$ respectively: a fact $f_v$ in \ded{toStorage} can only join with a fact $f_h$ in \ded{hashset} if \ded{hash}$(\pi_{V}(f_v))$ equals $\pi_{H}(f_h)$.
This reasoning can be repeatedly applied to partition all relations by the attributes corresponding the repeated variable \ded{hashed}, as seen in \Cref{fig:running-example-partitioning-with-dependencies}.

\noindent
\textbf{Precondition:}
There exists a distribution policy $D$ for relations $r$ referenced in $C$ that partitions consistently with the CDs of $r$.

Assume we know all CDs $g$ over attribute sets $A_1,A_2$ of relations $r_1,r_2$.
A distribution policy \textbf{partitions consistently with CDs} if for any pair of facts $f_1,f_2$ over referenced relations $r_1,r_2$ in rule $\varphi$ of $C$, if $\pi_{A_1}(f_1) = g(\pi_{A_2}(f_2))$ for each attribute set, then $D(f_1) = D(f_2)$.


\tr{We describe the mechanism for systematically finding FDs and CDs in \Cref{app:partitioning-with-dependencies-checks}.}{We describe the mechanism for systematically finding FDs and CDs in the technical report.}


\noindent
\textbf{Rewrite:} Identical to Redirection with Partitioning.


\subsection{Partial partitioning}
\label{sec:partial-partitioning}
It is perhaps surprising, but sometimes additional coordination can actually help distributed protocols (like Paxos) scale.

There exist Dedalus components that cannot be partitioned even with dependency analysis.
If the non-partitionable relations are rarely written to, it may be beneficial to replicate the facts in those relations across nodes so each node holds a local copy. 
This can support multiple local reads in parallel, at the expense of occasional writes that require coordination.

We divide the component $C$ into $C_1$ and $C_2$, where relations referenced in $C_2$ can be partitioned using techniques in prior sections, but relations referenced in $C_1$ cannot.
In order to fully partition $C$, facts in relations referenced in $C_1$ must be replicated to all nodes and kept consistent so that each node can perform local processing.
To replicate those facts, inputs that modify the replicated relations are broadcasted to all nodes.

Coordination is required in order to maintain consistency between nodes with replicated facts.
Each node orders replicated inputs by buffering other inputs when replicated facts $f$ arrive, only flushing the buffer after the node is sure that all other nodes have also received $f$.
Knowledge of whether a node has received $f$ can be enforced through a distributed commit or consensus mechanism.

\noindent
\textbf{Precondition:}
$C_1$ is independent of $C_2$ and both behave like state machines.

\tr{We define ``state machines'' in \Cref{app:state-machine-decoupling} and the rewrites for partial partitioning in \Cref{app:partial-partitioning}.}{We define ``state machines'' and the rewrites for partial partitioning in the technical report.}

\section{Evaluation}
\label{sec:eval}
\changebars{}{We will refer to our approach of manually modifying distributed protocols with the mechanisms described in this paper as \emph{rule-driven rewrites}, and the traditional approach of modifying distributed protocols and proving the correctness of the optimized protocol as \emph{ad hoc rewrites}.}

In this section we address the following questions:
\begin{enumerate}
    \item How can \ourApproach{}s be applied to foundational distributed protocols, and how well do the optimized protocols scale? (\Cref{sec:optimization-and-performance})
    \item \changebars{Which of the rewrites can be applied automatically compared to manual rewrites?}{Which of the ad hoc rewrites can be reproduced via the application of (one or more) rules, and which cannot?} (\Cref{sec:compare})
    \item What is the effect of the individual \ourApproach{}s on throughput? (\Cref{sec:microbenchmark})
\end{enumerate}

\subsection{Experimental setup}
\label{sec:experimental-setup}

All protocols are implemented as Dedalus programs 
and compiled to Hydroflow~\cite{hydroflow}, a Rust dataflow runtime for distributed systems.
We deploy all protocols on
GCP using n2-standard-4 machines with 4 vCPUs, 16 GB RAM, and 10 Gbps network bandwidth, with one machine per Dedalus node.

We measure throughput/latency over one minute runs, following a 30 second warmup period.
Each client sends 16 byte commands in a closed loop.
The ping time between machines is 0.22ms.
We assume the client is outside the scope of our rewrites, and any rewrites that requires modifying the client cannot be applied.


\subsection{Rewrites and scaling}
\label{sec:optimization-and-performance}

\begin{figure*}[t]
    \centering
    \includegraphics[width=0.7\textwidth]{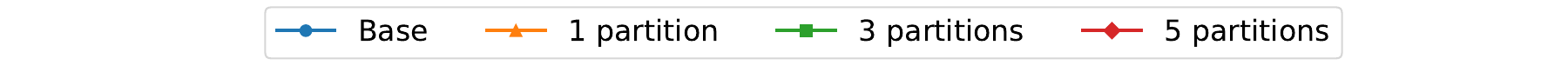}
    \centering
    \begin{subfigure}[b]{0.3\textwidth}
         \centering
         \includegraphics[width=\textwidth]{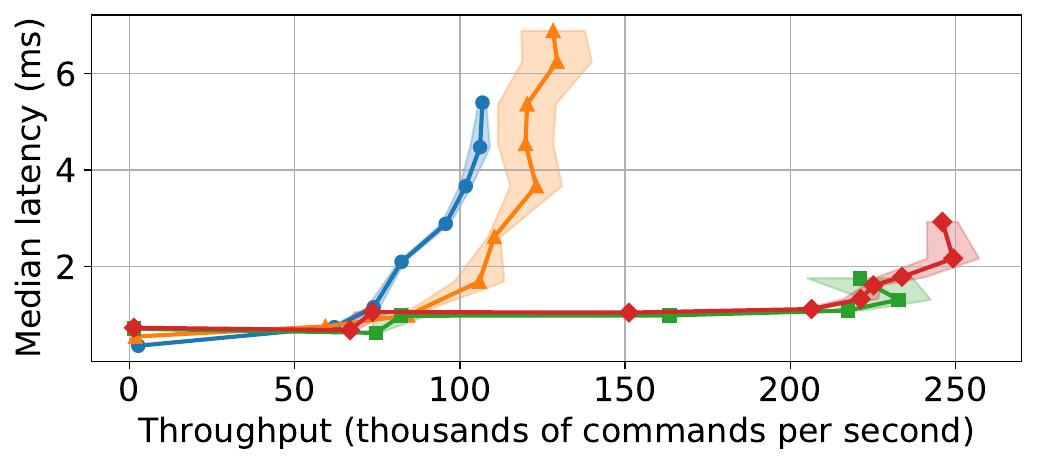}
         \caption{Voting}
         \label{fig:eval-voting-lt}
     \end{subfigure}
     \hfill
     \begin{subfigure}[b]{0.3\textwidth}
         \centering
         \includegraphics[width=\textwidth]{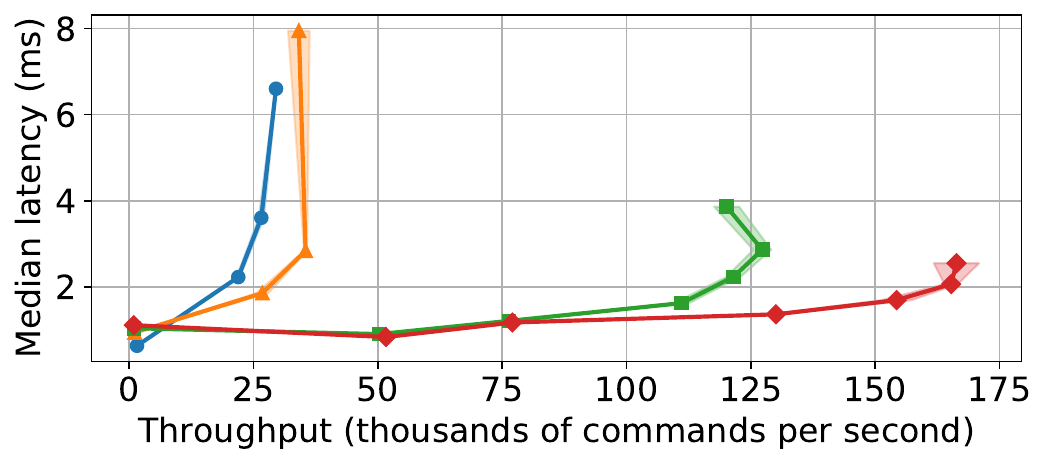}
         \caption{2PC}
         \label{fig:eval-2pc-lt}
     \end{subfigure}
     \hfill
     \begin{subfigure}[b]{0.3\textwidth}
         \centering
         \includegraphics[width=\textwidth]{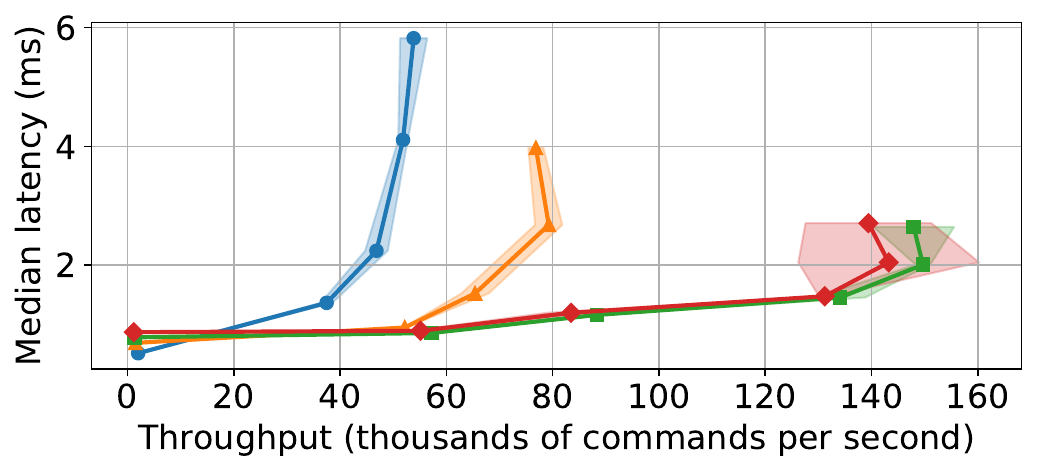}
         \caption{Paxos}
         \label{fig:eval-paxos-lt}
     \end{subfigure}
    \caption{Throughput/latency comparison between distributed protocols before and after \ourApproach{}s.}
    \label{fig:eval-lt}
\end{figure*}

We \changebars{}{manually }apply \ourApproach{}s to scale three fundamental distributed protocols---voting, 2PC, and Paxos.
We will refer to our unoptimized implementations as \DedalusVoting{}, \DedalusTwoPC{}, and \DedalusPaxos{}, and the rewritten implementations as \AutoVoting{}, \AutoTwoPC{}, and \AutoPaxos{}.
In general, we will prepend the word ``Base'' to any unoptimized implementation, ``Scalable'' to any \changebars{systematically rewritten implementation}{implementation created by applying rule-driven rewrites}, and ``\Rust{}'' to any implementation in Dedalus.
We measure the performance of each configuration with an increasing set of clients until throughput saturates, averaging across 3 runs, with standard deviations of throughput measurements shown in shaded regions.
Since the minimum configuration of Paxos (with $f=1$) requires 3 acceptors, we will also test voting and 2PC with 3 participants.

For decoupled-and-partitioned implementations, we measure scalability by changing the number of partitions for partitionable components, as seen in \Cref{fig:eval-lt}.
Decoupling contributes to the throughput differences between the unoptimized implementation and the 1-partition configuration.
Partitioning contributes to the differences between the 1, 3, and 5 partition configurations.

These experimental configurations demonstrate the scalability of the rewritten protocols.
They do not represent the most cost-effective configurations, nor the configurations that maximize throughput.
We manually applied rewrites on the critical path, selecting rewrites with low overhead, where we suspect the protocols may be bottlenecked.
Across the protocols we tested, these bottlenecks often occurred where the protocol (1) broadcasts messages, (2) collects messages, and (3) logs to disk.
\changebars{}{These bottlenecks can usually be decoupled from the original node, and because messages are often independent of one another, the decoupled nodes can then be partitioned such that each node handles a subset of messages.}
The process of identifying bottlenecks, applying suitable rewrites, and finding optimal configurations may eventually be automated.

\textbf{Voting.}
Client payloads arrive at the leader, which broadcasts payloads to the participants, collects votes from the participants, and responds to the client once all participants have voted.
Multiple rounds of voting can occur concurrently.
\DedalusVoting{} is implemented with 4 machines, 1 leader and 3 participants, achieving a maximum throughput of 100,000 commands/s, bottlenecking at the leader.

We created \AutoVoting{} from \DedalusVoting{} through \textit{Mutually Independent Decoupling, Functional Decoupling}, and \textit{Partitioning with Co-hashing}.
Broadcasters broadcast votes for the leader; they are decoupled from the leader through functional decoupling.
Collectors collect and count votes for the leader; they are decoupled from the leader through mutually independent decoupling.
The remaining ``leader'' component only relays commands to broadcasters.
All components except the leader are partitioned with co-hashing. 
The leader cannot be partitioned since that would require modifying the client to know how to reach one of many leader partitions.
With 1 leader, 5 broadcasters, 5 partitions for each of the 3 participants, and 5 collectors, the maximum configuration for \AutoVoting{} totals 26 machines, achieving a maximum throughput of 250,000 commands/s---a $2\times$ improvement over the baseline.

\textbf{2PC (with Presumed Abort).}
The coordinator receives client payloads and broadcasts \ded{voteReq} to participants.
Participants log and flush to disk, then reply with \ded{vote}s.
The coordinator collects \ded{vote}s, logs and flushes to disk, then broadcasts \ded{commit} to participants.
Participants log and flush to disk, then reply with \ded{ack}s.
The coordinator then logs and replies to the client.
Multiple rounds of 2PC can occur concurrently.
\DedalusTwoPC{} is implemented with 4 machines, 1 coordinator and 3 participants, achieving a maximum throughput of 30,000 commands/s, bottlenecking at the coordinator.

We created \AutoTwoPC{} from \DedalusTwoPC{} similarly through \textit{Mutually Independent Decoupling, Functional Decoupling}, and \textit{Partitioning with Co-hashing}.
Vote Requesters are functionally decoupled from coordinators: they broadcast \ded{voteReq} to participants.
Committers and Enders are decoupled from coordinators through mutually independent decoupling.
Committers collect \ded{vote}s, log and flush commits, then broadcast \ded{commit} to participants.
Enders collect \ded{ack}s, log, and respond to the client.
The remaining ``coordinator'' component relays commands to vote requesters.
Each participant is mutually independently decoupled into Voters and Ackers.
Participant Voters log, flush, then send \ded{vote}s; Participant Ackers log, flush, then send \ded{ack}s.
All components (except the coordinator) can be partitioned with co-hashing.
With 1 coordinator, 5 vote requesters, 5 ackers and 5 voters for each of the 3 participant, 5 committers, and 5 enders, the maximum configuration of \AutoTwoPC{} totals 46 machines, achieving a maximum throughput of 160,000 commands/s---a $5\times$ improvement.


\textbf{Paxos.}
Paxos solves consensus while tolerating up to $f$ failures.
Paxos consists of $f+1$ proposers and $2f+1$ acceptors.
Each proposer has a unique, dynamic ballot number; the proposer with the highest ballot number is the leader.
The leader receives client payloads, assigns each payload a sequence number, and broadcasts a \ded{p2a} message containing the payload, sequence number, and its ballot to the acceptors.
Each acceptor stores the highest ballot it has received and rejects or accepts payloads into its log based on whether its local ballot is less than or equal to the leader's.
The acceptor then replies to the leader via a \ded{p2b} message that includes the acceptor's highest ballot.
If this ballot is higher than the leader's ballot, the leader is preempted.
Otherwise, the acceptor has accepted the payload, and when $f+1$ acceptors accept, the payload is committed.
The leader relays committed payloads to the replicas, which execute the payload command and notify the clients.
\DedalusPaxos{} is implemented with 8 machines---2 proposers, 3 acceptors, and 3 replicas (matching \ScalaPaxos{} in \Cref{sec:compare})---tolerating $f=1$ failures, achieving a maximum throughput of 50,000 commands/s, bottlenecking at the proposer.

We created \AutoPaxos{} from \DedalusPaxos{} through \textit{Mutually Independent Decoupling, (Asymmetric)\footnote{Asymmetric decoupling is defined in \tr{\Cref{app:asymmetric-decoupling}}{the technical report}.
It applies when we decouple $C$ into $C_1$ and $C_2$, where $C_2$ is monotonic, but $C_2$ is independent of $C_1$.} Monotonic Decoupling, Functional Decoupling, Partitioning with Co-hashing}, and \textit{Partial Partitioning with Sealing}\footnote{Partitioning with sealing is defined in \tr{\Cref{app:partitioning-sealing}}{the technical report}.
It applies when a partitioned component originally sent a batched set of messages that must be recombined across partitions after partitioning.}.
P2a proxy leaders are functionally decoupled from proposers and broadcast \ded{p2a} messages.
P2b proxy leaders collect \ded{p2b} messages and broadcast committed payloads to the replicas; they are created through asymmetric monotonic decoupling, since the collection of \ded{p2b} messages is monotonic but proposers must be notified when the messages contain a higher ballot.
Both can be partitioned on sequence numbers with co-hashing.
Acceptors are partially partitioned with sealing on sequence numbers, replicating the highest ballot across partitions, necessitating the creation of a coordinator for each acceptor.
With 2 proposers, 3 p2a proxy leaders and 3 p2b proxy leaders for each of the 2 proposers, 1 coordinator and 3 partitions for each of the 3 acceptors, and 3 replicas, totalling 29 machines, \AutoPaxos{} achieves a maximum throughput of 150,000 commands/s---a $3\times$ improvement, bottlenecking at the proposer.

Across the protocols, the additional latency overhead from decoupling is negligible.

Together, these experiments demonstrate that \ourApproach{}s can be applied to scale a variety of distributed protocols, and that performance wins can be found fairly easily via choosing the rules to apply manually.
\changebars{We aim to automate this process in future work.}{A natural next step is to develop cost models for our context, and integrate into a search algorithm in order to create an automatic optimizer for distributed systems.
Standard techniques may be useful here, but we also expect new challenges in modeling dynamic load and contention.
It seems likely that adaptive query optimization and learning could prove relevant here to enable autoscaling~\cite{deshpande2007adaptive,trummer2018skinnerdb}.}


\subsection{Comparison to \traditionalApproach{}s}
\label{sec:compare}

\begin{figure}[t]
    \centering
    \includegraphics[width=0.5\linewidth]{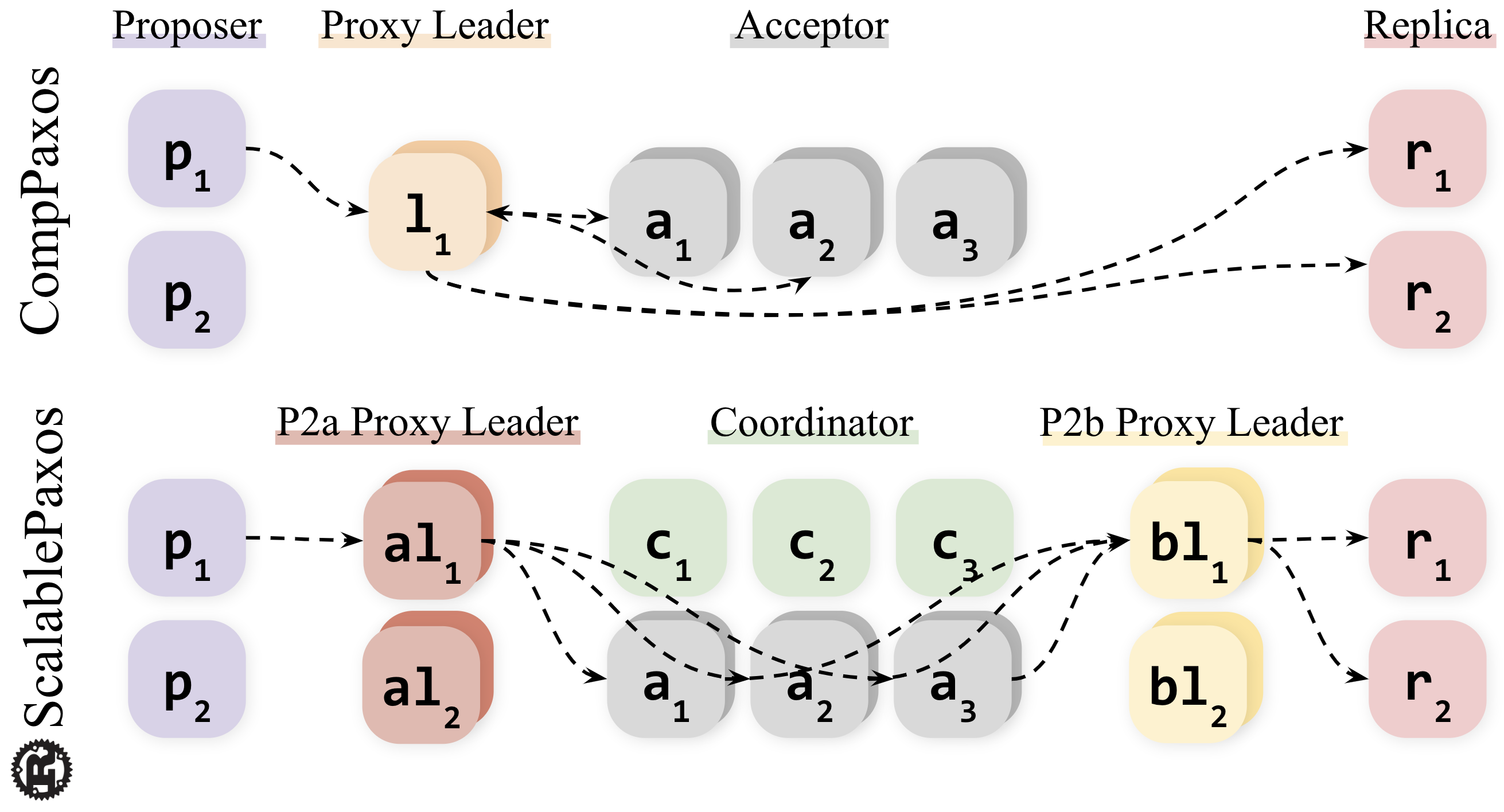}
    \caption{The common path taken by \CompPaxos{} and \AutoPaxos{}, assuming $f=1$ and any partitionable component has 2 partitions. The acceptors outlined in red represent possible quorums for leader election.}
    \label{fig:comp-vs-auto-paxos-architecture}
\end{figure}

\begin{figure}[t]
    \centering
    \includegraphics[width=0.5\linewidth]{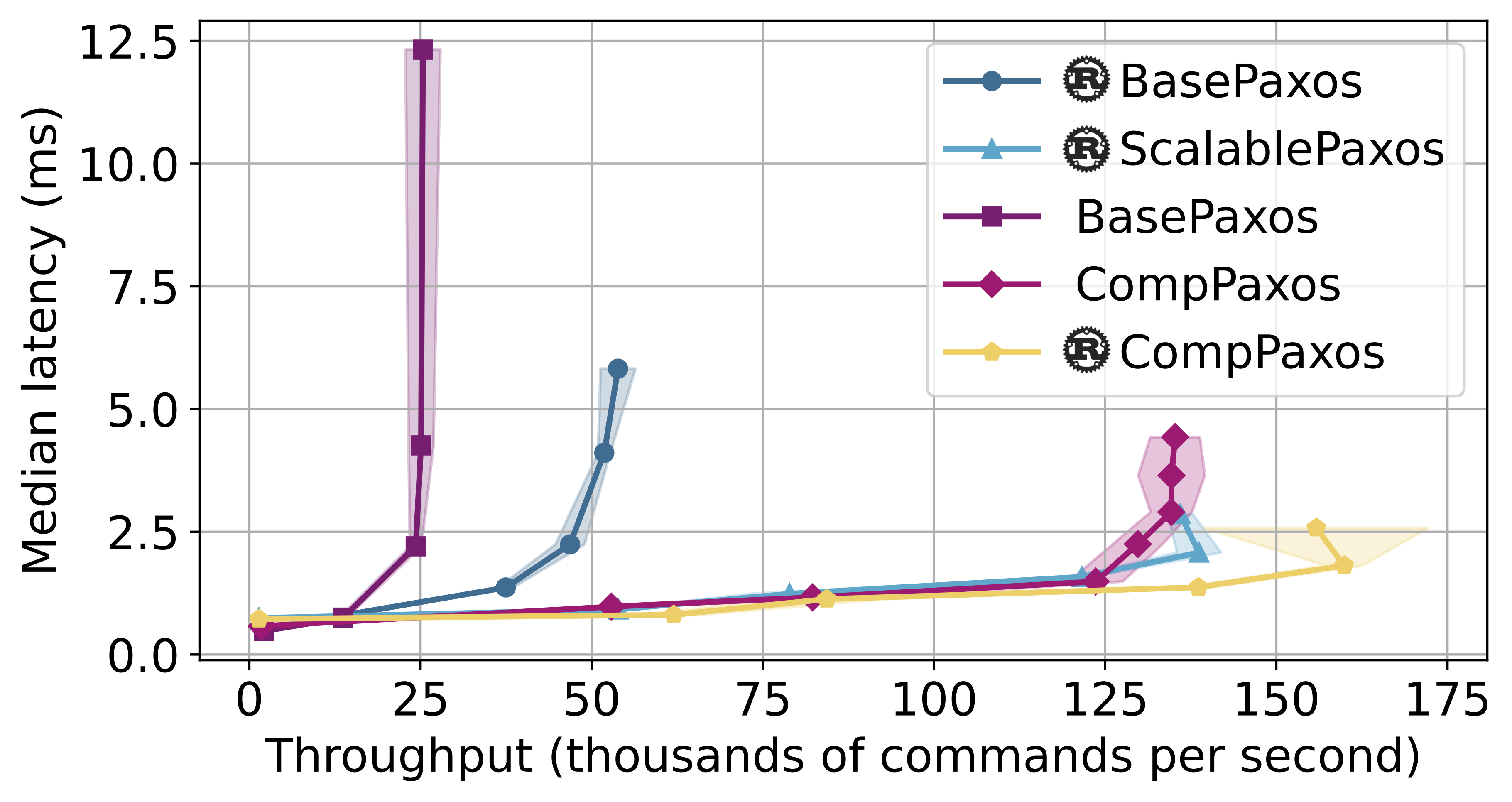}
    \caption{Throughput/latency comparison between \changebars{systematic and manual}{rule-driven and ad hoc} rewrites of Paxos.}
    \label{fig:eval-michael-comp}
\end{figure}

Our previous results show apples-to-apples comparisons between naive Dedalus implementations and \changebars{systematically rewritten Dedalus implementations}{Dedalus implementations optimized with rule-driven rewrites}.
However they do not quantify the difference between \changebars{systematically rewritten Dedalus and manually optimized code}{Dedalus implementations optimized with rule-driven rewrites and ad hoc optimized protocols} written in a more traditional procedural language.
To this effect, we compare our scalable version of Paxos to Compartmentalized Paxos~\cite{compartmentalized}.
We do this for two reasons:
(1) Paxos is notoriously hard to scale manually, and 
(2) Compartmentalized Paxos is a state-of-the-art implementation of Paxos based, among other optimizations, on manually applying decoupling and partitioning.

To best understand the merits of scalability, we choose not to batch client requests, as batching often obscures the benefits of individual scalability rewrites.

\subsubsection{Throughput comparison}
\label{sec:comp-throughput}

Whittaker et al. created Scala implementations of Paxos (\ScalaPaxos{}) and Compartmentalized Paxos (\CompPaxos{}).
Since our implementations are in Dedalus, we first compare throughputs of the Paxos implementations between the two languages to establish a baseline.
Following the nomenclature from \Cref{sec:optimization-and-performance}, implementations in Dedalus are prepended with \Rust{}, and implementations in Scala by Whittaker et al. are not.

\ScalaPaxos{} was reported to peak of 25,000 commands/s with $f=1$ and 3 replicas on AWS in 2021~\cite{compartmentalized}.
As seen \Cref{fig:eval-michael-comp}, we verified this result in GCP using the same code and experimental setup.
Our Dedalus implementation of Paxos---\DedalusPaxos{}---in contrast, peaks at a higher 50,000 commands/s with the same configuration as \ScalaPaxos{}.
We suspect this performance difference is due to the underlying implementations of \ScalaPaxos{} in Scala and \DedalusPaxos{} in Dedalus, compiled to Hydroflow atop Rust.
Indeed, our deployment of \CompPaxos{} peaked at 130,000 commands/s, and our reimplementation of Compartmentalized Paxos in Dedalus (\DedalusCompPaxos{}) peaked at a higher 160,000 commands/s, a throughput improvement comparable to the 25,000 command throughput gap between \ScalaPaxos{} and \DedalusPaxos{}.

Note that technically, \CompPaxos{} was reported to peak at 150,000 commands/s, not 130,000.
We deployed the Scala code provided by Whittaker et al. with identical hardware, network, and configuration, but could not replicate their exact result.

We now have enough context to compare the throughput between \CompPaxos{} and \AutoPaxos{}; their respective architectures are shown in \Cref{fig:comp-vs-auto-paxos-architecture}.
\CompPaxos{} achieves maximum throughput with 20 machines: 2 proposers, 10 proxy leaders, 4 acceptors (in a $2\times2$ grid), and 4 replicas.
We compare \CompPaxos{} and \AutoPaxos{} using the same number of machines, fixing the number of proposers (for fault tolerance) and replicas (which we do not decouple or partition).
Restricted to 20 machines, \AutoPaxos{} achieves the maximum throughput with 2 proposers, 2 p2a proxy leaders, 3 coordinators, 3 acceptors, 6 p2b proxy leaders, and 4 replicas.
All components are kept at minimum configuration---with only 1 partition---except for the p2b proxy leaders, which are the throughput bottleneck.
\AutoPaxos{} then scales to 130,000 commands/s, a $2.5\times$ throughput improvement over \DedalusPaxos{}.
Although \CompPaxos{} reports a $6\times$ throughput improvement over \ScalaPaxos{} from 25,000 to 150,000 commands/s in Scala, reimplemented in Dedalus, it reports a $3\times$ throughput improvement between \DedalusCompPaxos{} and \DedalusPaxos{}, similar to the $2.5\times$ throughput improvement between \AutoPaxos{} and \DedalusPaxos{}.
Therefore we conclude that the throughput improvements of \ourApproach{}s and \traditionalApproach{}s are comparable when applied to Paxos.

We emphasize that our framework
cannot realize every \changebars{manual}{ad hoc} rewrite in \CompPaxos{} (\Cref{fig:comp-vs-auto-paxos-architecture}).
We describe the differences between \CompPaxos{} and \AutoPaxos{} next.

\subsubsection{Proxy leaders}
\label{sec:compare-proxy-leaders}
\Cref{fig:comp-vs-auto-paxos-architecture} shows that \CompPaxos{} has a single component called ``proxy leader'' that serves the roles of two components in \AutoPaxos{}: p2a and p2b proxy leaders.
Unlike p2a and p2b proxy leaders, proxy leaders in \CompPaxos{} can be shared across proposers.
Since only 1 proposer will be the leader at any time, \CompPaxos{} ensures that work is evenly distributed across proxy leaders.
Our rewrites focus on scaling out and do not consider sharing physical resources between logical components.
Moreover, there is an additional optimization in the proxy leader of \CompPaxos{}.
\CompPaxos{} avoids relaying \ded{p2b}s from proxy leaders to proposers by introducing \ded{nack} messages from acceptors that are sent instead.
This optimization is neither decoupling nor partitioning and hence is not included in \AutoPaxos{}.

\subsubsection{Acceptors}
\label{sec:compare-acceptors}

\CompPaxos{} partitions acceptors without introducing coordination, allowing each partition to hold an independent ballot.
In contrast, \AutoPaxos{} can only partially partition acceptors and must introduce coordinators to synchronize ballots between partitions, because our formalism states that the partitions' ballots together must correspond to the original acceptor's ballot.
Crucially, \CompPaxos{} allows the highest ballot held at each partition to diverge while \AutoPaxos{} does not, because this divergence can introduce non-linearizable executions that remain safe for Paxos, but are too specific to generalize.
We elaborate more on this execution in \tr{\Cref{app:non-linearizable-acceptors}}{the technical report}.

Despite its additional overhead, \AutoPaxos{} does not suffer from increased latency because the overhead is not on the critical path.
Assuming a stable leader, p2b proxy leaders do not need to forward \ded{p2b}s to proposers, and acceptors do not need to coordinate between partitions.

\subsubsection{Additional differences}
\label{sec:additional-differences}

\CompPaxos{} additionally includes classical Paxos optimizations such as batching, thriftiness~\cite{epaxos}, and flexible quorums~\cite{flexiblePaxos}, which are outside the scope of this paper as they are not instances of decoupling or partitioning.
These optimizations, combined with the more efficient use of proxy leaders, explain the remaining throughput difference between \DedalusCompPaxos{} and \AutoPaxos{}.




\subsection{On the Benefit of Individual Rewrites}
\label{sec:microbenchmark}

\begin{figure}[t]
    \centering
    \includegraphics[width=0.5\linewidth]{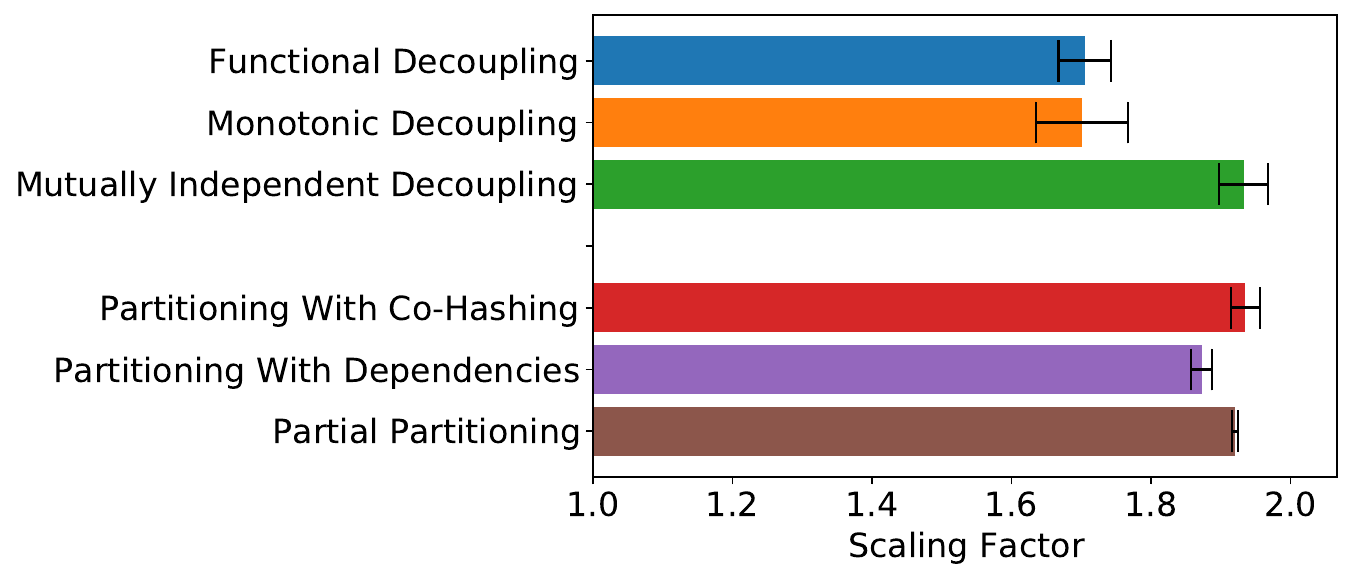}
    \caption{The scalability gains provided by each rewrite, in isolation.}
    \label{fig:eval-microbenchmark}
\end{figure}

In \Cref{fig:eval-microbenchmark}, we examine each rewrite's scaling potential. To create a consistent throughput bottleneck, we introduce extra computation via multiple AES encryptions.
When decoupling, the program must always decrypt the message from the client and encrypt its output.
When partitioning, the program must always encrypt its output.
When decoupling, we always separate one node into two. When partitioning, we always create two partitions out of one.
Thus maximum scale factor of each rewrite is $2\times$.
To determine the scaling factors, we increased the number of clients by increments of two for decoupling and three for partitioning, stopping when we reached saturation for each protocol.

Briefly, we study each of the individual rewrites using the following artificial protocols:
\begin{itemize}
    \item \textit{Mutually Independent Decoupling}: A replicated set where the leader decrypts a client request, broadcasts payloads to replicas, collects acknowledgements, and replies to the client (encrypting the response), similar to the voting protocol.
We denote this base protocol as R-set.
We decouple the broadcast and collection rules.

    \item  \textit{Monotonic Decoupling}: An R-set where the leader also keeps track of a ballot that is potentially updated by each client message.
The leader attaches the value of the ballot at the time each client request is received to the matching response.

    \item \textit{Functional Decoupling}: The same R-set protocol, but with zero replicas.
    The leader attaches the highest ballot it has seen so far to each response. It still decrypts client requests and encrypts replies as before.
    
    \item \textit{Partitioning With Co-Hashing}: A R-set.
    
    \item \textit{Partitioning With Dependencies}: A R-set where each replica records the number of hash collisions, similar to our running example.
    
    \item \textit{Partial Partitioning}: A R-set where the leader and replicas each track an integer.
    The leader's integer is periodically incremented and sent to the replicas, similar to Paxos.
    The replicas attach their latest integers to each response.
\end{itemize}

The impact on throughput varies between rewrites due to both the overhead introduced and the underlying protocol.
Note that of our 6 experiments, the first two are the only ones that add a network hop to the critical path of the protocol and rely on pipelined parallelism. The combination of networking overhead and the potential for imperfect pipelined parallelism likely explain why they achieve only about $1.7\times$ performance improvement.
In contrast, the speedups for mutually independent decoupling and the different variants of partitioning are closer to the expected $2\times$.
Nevertheless, each rewrite improves throughput in isolation as shown in \Cref{fig:eval-microbenchmark}.
\section{Related work}
\label{sec:related-work}
Our results build on rich traditions in distributed protocol design and parallel query processing. 
The intent of this paper was not to innovate in either of those domains per se, but rather to 
take parallel query processing ideas and use them to discover and evaluate rewrites for distributed protocols.

\subsection{Manual Protocol Optimizations}
There are many clever, manually-optimized variants of distributed protocols that scale by avoiding coordination, e.g.~\cite{scalog,compartmentalized,WPaxos,proteus,kauri,scalingBFT}.
These works rely on intricate modifications to underlying protocols like consensus, with manual
(and not infrequently buggy~\cite{protocolBugsList}) end-to-end proofs of correctness for the optimized protocol.
In contrast, this paper 
introduces a \changebars{systematic approach to optimization that is generalizable and}{rule-driven approach to optimization that is} correct by construction,
with proofs narrowly focused on small rewrites. 

We view our work here as orthogonal to most \changebars{manual}{ad hoc} optimizations of protocols.
Our rewrites are general and can be applied correctly to results of the \changebars{manual}{ad hoc} optimization. In future work it would be interesting 
to see when and how the more esoteric protocols cited above might benefit from further optimization using the techniques
in this paper.

Our work was initially inspired by the manually-derived Compartmentalized Paxos~\cite{compartmentalized}, from which we
borrowed our focus on decoupling and partitioning.
Our work does not \changebars{automatically}{} achieve all the optimizations of 
Compartmentalized Paxos (\Cref{sec:compare}), but it achieves the most important ones, and our results are comparable in performance.

There is a long-standing research tradition of identifying commonalities between distributed protocols that provide the same abstraction~\cite{bft700,sok,viveLaDiff,paxosVsRaft,portingPaxosOpts,bftChemistry,bftBedrock,compartmentalizedTechReport,delos}.
In principle, optimizations that apply to one protocol can be transferred to another, but this requires careful scrutiny to determine
if the protocols fit within some common framework.
We attack this problem \changebars{from a compiler perspective.}{by borrowing from the field of programming languages.}
\changebars{}{The language }Dedalus is our ``framework''; any distributed protocol expressed in Dedalus can benefit from our rewrites \changebars{automatically, without any expert oversight. Like any compiler, our work does not uncover every possible optimization a programmer can envision, but it can be quite effective in applying simple-yet-powerful rewrites, sometimes in surprisingly effective ways.}{via a mechanical application of the rules. Although our general rewrites cannot cover every possible optimization a programmer can envision, they can be applied effectively.}

\subsection{Parallel Query Processing and Dataflow}
A key intuition of our work is to rewrite protocols using techniques from distributed (``shared-nothing'') parallel databases. The core ideas go back to systems like Gamma~\cite{gamma} and GRACE~\cite{grace} in the 1980s, for both long-running ``data warehouse'' queries and transaction processing workloads~\cite{parallelDB}. Our work on partitioning (Section~\ref{sec:partitioning}) adapts ideas from parallel SQL optimizers, notably work on auto-partitioning with functional dependencies, e.g.~\cite{zhou2010incorporating}. Traditional SQL research focuses on a single query at a time. To our knowledge the literature does not include the kind of decoupling we introduce in Section~\ref{sec:decoupling}. 

Big Data systems~(e.g., \cite{mapReduce,spark,isard2007dryad}) extended the parallel query literature by adding coordination barriers and other mechanisms for mid-job fault tolerance. By contrast, our goal here is on modest amounts of data with very tight latency constraints. Moreover, fault tolerance is typically implicit in the protocols we target. As such we look for coordination-freeness wherever we can, and avoid introducing additional overheads common in Big Data systems.

There is a small body of work on parallel stream query optimization. An annotated bibliography appears in~\cite{Hirzel2018}.
Widely-deployed systems like Apache Flink~\cite{carbone2015apache} and Spark Streaming~\cite{zaharia2013discretized} offer minimal insight into query optimization.


Parallel Datalog goes back to the early 1990s (e.g.~\cite{ganguly1990framework}). A recent survey covers the state of the art in modern Datalog engines~\cite{ketsman2022modern}, including dedicated parallel Datalog systems and Datalog implementations over Big Data engines. The partitioning strategies we use in Section~\ref{sec:partitioning} are discussed in the survey; a deeper treatment can be found in the literature cited in Section~\ref{sec:partitioning}~\cite{distConstraints,distSchemes,parallelCorrectness, parallelTransfer}.

\subsection{DSLs for Distributed Systems}
We chose the Dedalus temporal logic language because it was both amenable to our optimization goals and 
we knew we could compile it to high-performance machine code via Hydroflow. Temporal logics have also been used for \emph{verification} of protocols---most notably Lamport's TLA+ language~\cite{lamport2002specifying}, which has been adopted in applied settings~\cite{newcombe2015amazon}. TLA+ did not suit our needs for a number of reasons. Most notably,
efficient code generation is not a goal of the TLA+ toolchain. 
Second, an optimizer needs lightweight checks for properties (FDs, monotonicity) in the inner loop of optimization; TLA+ is ill-suited to that case.
Finally, TLA+ was designed as a \emph{finite model checker}: it provides evidence of correctness (up to $k$ steps of execution) but no proofs. There are efforts to build symbolic checkers for TLA+~\cite{konnov2019tla+}, but again these do not seem well-suited to our lightweight setting.

Declarative languages like Dedalus have been used extensively in networking. Loo, et al. surveyed work as of 2009 including the Datalog variants NDlog and Overlog~\cite{loo2009declarative}. As networking DSLs, these languages take a relaxed ``soft state'' view of topics like persistence and consistency. Dedalus and Bloom~\cite{bloom, blooml} were developed with the express goal of formally addressing persistence and consistency in ways that we rely upon here. More recent languages for software-defined networks (SDNs) include 
NetKAT~\cite{anderson2014netkat} and P4~\cite{bosshart2014p4}, but these focus on centralized SDN controllers, not distributed systems.

Further afield, DAG-based dataflow programming is explored in parallel computing (e.g., ~\cite{blumofe1995cilk,bosilca2012dague}). While that work is not directly relevant to the transformations we study here, their efforts to schedule DAGs in parallel environments may inform future work.

\section{Conclusion}
\label{sec:conclusion}

This is the first paper to present \changebars{systematic}{general} scaling optimizations that can be safely applied to any distributed protocol, taking inspiration from traditional SQL query optimizers.
This opens the door to the creation of automatic optimizers for distributed protocols.

Our work builds on the ideas of Compartmentalized Paxos~\cite{compartmentalized}, which ``unpacks'' atomic components to increase throughput. In addition to our work on generalizing decoupling and partitioning via automation, there are additional interesting follow-on questions that we have not addressed here.
The first challenge follows from the separation of an atomic component into multiple smaller components: when one of the smaller components fails, others may continue responding to client requests.
While this is not a concern for protocols that assume omission failures, additional checks and/or rewriting may be necessary to extend our work to weaker failure models.
The second challenge is the potential liveness issues introduced by the additional latency from our rewrites and our assumption of an asynchronous network.
Protocols that calibrate timeouts assuming a partially synchronous network with some maximum message delay may need their timeouts recalibrated. This can likely be addressed in practice using typical pragmatic calibration techniques.

\section*{Acknowledgements}
This work was supported by gifts from AMD, Anyscale, Google, IBM, Intel, Microsoft, Mohamed Bin Zayed University of Artificial Intelligence, Samsung SDS, Uber, and VMware.

\bibliographystyle{ACM-Reference-Format}
{\footnotesize
\bibliography{refs}


\begin{thebibliography}{70}


\ifx \showCODEN    \undefined \def \showCODEN     #1{\unskip}     \fi
\ifx \showDOI      \undefined \def \showDOI       #1{#1}\fi
\ifx \showISBNx    \undefined \def \showISBNx     #1{\unskip}     \fi
\ifx \showISBNxiii \undefined \def \showISBNxiii  #1{\unskip}     \fi
\ifx \showISSN     \undefined \def \showISSN      #1{\unskip}     \fi
\ifx \showLCCN     \undefined \def \showLCCN      #1{\unskip}     \fi
\ifx \shownote     \undefined \def \shownote      #1{#1}          \fi
\ifx \showarticletitle \undefined \def \showarticletitle #1{#1}   \fi
\ifx \showURL      \undefined \def \showURL       {\relax}        \fi
\providecommand\bibfield[2]{#2}
\providecommand\bibinfo[2]{#2}
\providecommand\natexlab[1]{#1}
\providecommand\showeprint[2][]{arXiv:#2}

\bibitem[Abiteboul et~al\mbox{.}(1995)]%
        {alice}
\bibfield{author}{\bibinfo{person}{Serge Abiteboul}, \bibinfo{person}{Richard Hull}, {and} \bibinfo{person}{Victor Vianu}.} \bibinfo{year}{1995}\natexlab{}.
\newblock \bibinfo{booktitle}{\emph{Foundations of Databases}}.
\newblock \bibinfo{publisher}{Addison-Wesley}.
\newblock
\showISBNx{0-201-53771-0}
\urldef\tempurl%
\url{http://webdam.inria.fr/Alice/pdfs/all.pdf}
\showURL{%
\tempurl}


\bibitem[Abraham et~al\mbox{.}(2017)]%
        {zyzzyvaBug}
\bibfield{author}{\bibinfo{person}{Ittai Abraham}, \bibinfo{person}{Guy Gueta}, \bibinfo{person}{Dahlia Malkhi}, \bibinfo{person}{Lorenzo Alvisi}, \bibinfo{person}{Ramakrishna Kotla}, {and} \bibinfo{person}{Jean{-}Philippe Martin}.} \bibinfo{year}{2017}\natexlab{}.
\newblock \showarticletitle{Revisiting Fast Practical Byzantine Fault Tolerance}.
\newblock \bibinfo{journal}{\emph{CoRR}}  \bibinfo{volume}{abs/1712.01367} (\bibinfo{year}{2017}).
\newblock
\showeprint[arxiv]{1712.01367}
\urldef\tempurl%
\url{http://arxiv.org/abs/1712.01367}
\showURL{%
\tempurl}


\bibitem[Ailijiang et~al\mbox{.}(2020)]%
        {WPaxos}
\bibfield{author}{\bibinfo{person}{Ailidani Ailijiang}, \bibinfo{person}{Aleksey Charapko}, \bibinfo{person}{Murat Demirbas}, {and} \bibinfo{person}{Tevfik Kosar}.} \bibinfo{year}{2020}\natexlab{}.
\newblock \showarticletitle{WPaxos: Wide Area Network Flexible Consensus}.
\newblock \bibinfo{journal}{\emph{IEEE Transactions on Parallel and Distributed Systems}} \bibinfo{volume}{31}, \bibinfo{number}{1} (\bibinfo{year}{2020}), \bibinfo{pages}{211--223}.
\newblock
\urldef\tempurl%
\url{https://doi.org/10.1109/TPDS.2019.2929793}
\showDOI{\tempurl}


\bibitem[Alvaro et~al\mbox{.}(2011a)]%
        {dedalusSemantics}
\bibfield{author}{\bibinfo{person}{Peter Alvaro}, \bibinfo{person}{Tom~J Ameloot}, \bibinfo{person}{Joseph~M Hellerstein}, \bibinfo{person}{William Marczak}, {and} \bibinfo{person}{Jan Van~den Bussche}.} \bibinfo{year}{2011}\natexlab{a}.
\newblock \showarticletitle{A declarative semantics for Dedalus}.
\newblock \bibinfo{journal}{\emph{UC Berkeley EECS Technical Report}}  \bibinfo{volume}{120} (\bibinfo{year}{2011}), \bibinfo{pages}{2011}.
\newblock


\bibitem[Alvaro et~al\mbox{.}(2017)]%
        {blazes}
\bibfield{author}{\bibinfo{person}{Peter Alvaro}, \bibinfo{person}{Neil Conway}, \bibinfo{person}{Joseph~M. Hellerstein}, {and} \bibinfo{person}{David Maier}.} \bibinfo{year}{2017}\natexlab{}.
\newblock \showarticletitle{Blazes: Coordination Analysis and Placement for Distributed Programs}.
\newblock \bibinfo{journal}{\emph{ACM Trans. Database Syst.}} \bibinfo{volume}{42}, \bibinfo{number}{4}, Article \bibinfo{articleno}{23} (\bibinfo{date}{Oct.} \bibinfo{year}{2017}), \bibinfo{numpages}{31}~pages.
\newblock
\showISSN{0362-5915}
\urldef\tempurl%
\url{https://doi.org/10.1145/3110214}
\showDOI{\tempurl}


\bibitem[Alvaro et~al\mbox{.}(2011b)]%
        {bloom}
\bibfield{author}{\bibinfo{person}{Peter Alvaro}, \bibinfo{person}{Neil Conway}, \bibinfo{person}{Joseph~M. Hellerstein}, {and} \bibinfo{person}{William~R. Marczak}.} \bibinfo{year}{2011}\natexlab{b}.
\newblock \showarticletitle{Consistency Analysis in Bloom: a {CALM} and Collected Approach}. In \bibinfo{booktitle}{\emph{Fifth Biennial Conference on Innovative Data Systems Research, {CIDR} 2011, Asilomar, CA, USA, January 9-12, 2011, Online Proceedings}}. \bibinfo{pages}{249--260}.
\newblock
\urldef\tempurl%
\url{http://cidrdb.org/cidr2011/Papers/CIDR11\_Paper35.pdf}
\showURL{%
\tempurl}


\bibitem[Alvaro et~al\mbox{.}(2011c)]%
        {dedalus}
\bibfield{author}{\bibinfo{person}{Peter Alvaro}, \bibinfo{person}{William~R. Marczak}, \bibinfo{person}{Neil Conway}, \bibinfo{person}{Joseph~M. Hellerstein}, \bibinfo{person}{David Maier}, {and} \bibinfo{person}{Russell Sears}.} \bibinfo{year}{2011}\natexlab{c}.
\newblock \showarticletitle{Dedalus: Datalog in Time and Space}. In \bibinfo{booktitle}{\emph{Datalog Reloaded}}, \bibfield{editor}{\bibinfo{person}{Oege de~Moor}, \bibinfo{person}{Georg Gottlob}, \bibinfo{person}{Tim Furche}, {and} \bibinfo{person}{Andrew Sellers}} (Eds.). \bibinfo{publisher}{Springer Berlin Heidelberg}, \bibinfo{address}{Berlin, Heidelberg}, \bibinfo{pages}{262--281}.
\newblock
\showISBNx{978-3-642-24206-9}


\bibitem[Ameloot et~al\mbox{.}(2017)]%
        {parallelTransfer}
\bibfield{author}{\bibinfo{person}{Tom~J. Ameloot}, \bibinfo{person}{Gaetano Geck}, \bibinfo{person}{Bas Ketsman}, \bibinfo{person}{Frank Neven}, {and} \bibinfo{person}{Thomas Schwentick}.} \bibinfo{year}{2017}\natexlab{}.
\newblock \showarticletitle{Parallel-Correctness and Transferability for Conjunctive Queries}.
\newblock \bibinfo{journal}{\emph{Journal of the {ACM}}} \bibinfo{volume}{64}, \bibinfo{number}{5} (\bibinfo{date}{Oct.} \bibinfo{year}{2017}), \bibinfo{pages}{1--38}.
\newblock
\urldef\tempurl%
\url{https://doi.org/10.1145/3106412}
\showDOI{\tempurl}


\bibitem[Amiri et~al\mbox{.}(2022)]%
        {bftBedrock}
\bibfield{author}{\bibinfo{person}{Mohammad~Javad Amiri}, \bibinfo{person}{Chenyuan Wu}, \bibinfo{person}{Divyakant Agrawal}, \bibinfo{person}{Amr~El Abbadi}, \bibinfo{person}{Boon~Thau Loo}, {and} \bibinfo{person}{Mohammad Sadoghi}.} \bibinfo{year}{2022}\natexlab{}.
\newblock \showarticletitle{The bedrock of bft: A unified platform for bft protocol design and implementation}.
\newblock \bibinfo{journal}{\emph{arXiv preprint arXiv:2205.04534}} (\bibinfo{year}{2022}).
\newblock


\bibitem[Anderson et~al\mbox{.}(2014)]%
        {anderson2014netkat}
\bibfield{author}{\bibinfo{person}{Carolyn~Jane Anderson}, \bibinfo{person}{Nate Foster}, \bibinfo{person}{Arjun Guha}, \bibinfo{person}{Jean-Baptiste Jeannin}, \bibinfo{person}{Dexter Kozen}, \bibinfo{person}{Cole Schlesinger}, {and} \bibinfo{person}{David Walker}.} \bibinfo{year}{2014}\natexlab{}.
\newblock \showarticletitle{NetKAT: Semantic foundations for networks}.
\newblock \bibinfo{journal}{\emph{Acm sigplan notices}} \bibinfo{volume}{49}, \bibinfo{number}{1} (\bibinfo{year}{2014}), \bibinfo{pages}{113--126}.
\newblock


\bibitem[Balakrishnan et~al\mbox{.}(2021)]%
        {delos}
\bibfield{author}{\bibinfo{person}{Mahesh Balakrishnan}, \bibinfo{person}{Chen Shen}, \bibinfo{person}{Ahmed Jafri}, \bibinfo{person}{Suyog Mapara}, \bibinfo{person}{David Geraghty}, \bibinfo{person}{Jason Flinn}, \bibinfo{person}{Vidhya Venkat}, \bibinfo{person}{Ivailo Nedelchev}, \bibinfo{person}{Santosh Ghosh}, \bibinfo{person}{Mihir Dharamshi}, \bibinfo{person}{Jingming Liu}, \bibinfo{person}{Filip Gruszczynski}, \bibinfo{person}{Jun Li}, \bibinfo{person}{Rounak Tibrewal}, \bibinfo{person}{Ali Zaveri}, \bibinfo{person}{Rajeev Nagar}, \bibinfo{person}{Ahmed Yossef}, \bibinfo{person}{Francois Richard}, {and} \bibinfo{person}{Yee~Jiun Song}.} \bibinfo{year}{2021}\natexlab{}.
\newblock \showarticletitle{Log-Structured Protocols in Delos}. In \bibinfo{booktitle}{\emph{Proceedings of the ACM SIGOPS 28th Symposium on Operating Systems Principles}} (Virtual Event, Germany) \emph{(\bibinfo{series}{SOSP '21})}. \bibinfo{publisher}{Association for Computing Machinery}, \bibinfo{address}{New York, NY, USA}, \bibinfo{pages}{538–552}.
\newblock
\showISBNx{9781450387095}
\urldef\tempurl%
\url{https://doi.org/10.1145/3477132.3483544}
\showDOI{\tempurl}


\bibitem[Berger and Reiser(2018)]%
        {scalingBFT}
\bibfield{author}{\bibinfo{person}{Christian Berger} {and} \bibinfo{person}{Hans~P Reiser}.} \bibinfo{year}{2018}\natexlab{}.
\newblock \showarticletitle{Scaling byzantine consensus: A broad analysis}. In \bibinfo{booktitle}{\emph{Proceedings of the 2nd workshop on scalable and resilient infrastructures for distributed ledgers}}. \bibinfo{pages}{13--18}.
\newblock


\bibitem[Blumofe et~al\mbox{.}(1995)]%
        {blumofe1995cilk}
\bibfield{author}{\bibinfo{person}{Robert~D Blumofe}, \bibinfo{person}{Christopher~F Joerg}, \bibinfo{person}{Bradley~C Kuszmaul}, \bibinfo{person}{Charles~E Leiserson}, \bibinfo{person}{Keith~H Randall}, {and} \bibinfo{person}{Yuli Zhou}.} \bibinfo{year}{1995}\natexlab{}.
\newblock \showarticletitle{Cilk: An efficient multithreaded runtime system}.
\newblock \bibinfo{journal}{\emph{ACM SigPlan Notices}} \bibinfo{volume}{30}, \bibinfo{number}{8} (\bibinfo{year}{1995}), \bibinfo{pages}{207--216}.
\newblock


\bibitem[Bosilca et~al\mbox{.}(2012)]%
        {bosilca2012dague}
\bibfield{author}{\bibinfo{person}{George Bosilca}, \bibinfo{person}{Aurelien Bouteiller}, \bibinfo{person}{Anthony Danalis}, \bibinfo{person}{Thomas Herault}, \bibinfo{person}{Pierre Lemarinier}, {and} \bibinfo{person}{Jack Dongarra}.} \bibinfo{year}{2012}\natexlab{}.
\newblock \showarticletitle{DAGuE: A generic distributed DAG engine for high performance computing}.
\newblock \bibinfo{journal}{\emph{Parallel Comput.}} \bibinfo{volume}{38}, \bibinfo{number}{1-2} (\bibinfo{year}{2012}), \bibinfo{pages}{37--51}.
\newblock


\bibitem[Bosshart et~al\mbox{.}(2014)]%
        {bosshart2014p4}
\bibfield{author}{\bibinfo{person}{Pat Bosshart}, \bibinfo{person}{Dan Daly}, \bibinfo{person}{Glen Gibb}, \bibinfo{person}{Martin Izzard}, \bibinfo{person}{Nick McKeown}, \bibinfo{person}{Jennifer Rexford}, \bibinfo{person}{Cole Schlesinger}, \bibinfo{person}{Dan Talayco}, \bibinfo{person}{Amin Vahdat}, \bibinfo{person}{George Varghese}, {et~al\mbox{.}}} \bibinfo{year}{2014}\natexlab{}.
\newblock \showarticletitle{P4: Programming protocol-independent packet processors}.
\newblock \bibinfo{journal}{\emph{ACM SIGCOMM Computer Communication Review}} \bibinfo{volume}{44}, \bibinfo{number}{3} (\bibinfo{year}{2014}), \bibinfo{pages}{87--95}.
\newblock


\bibitem[Carbone et~al\mbox{.}(2015)]%
        {carbone2015apache}
\bibfield{author}{\bibinfo{person}{Paris Carbone}, \bibinfo{person}{Asterios Katsifodimos}, \bibinfo{person}{Stephan Ewen}, \bibinfo{person}{Volker Markl}, \bibinfo{person}{Seif Haridi}, {and} \bibinfo{person}{Kostas Tzoumas}.} \bibinfo{year}{2015}\natexlab{}.
\newblock \showarticletitle{Apache flink: Stream and batch processing in a single engine}.
\newblock \bibinfo{journal}{\emph{The Bulletin of the Technical Committee on Data Engineering}} \bibinfo{volume}{38}, \bibinfo{number}{4} (\bibinfo{year}{2015}).
\newblock


\bibitem[Conway et~al\mbox{.}(2014)]%
        {edelweiss}
\bibfield{author}{\bibinfo{person}{Neil Conway}, \bibinfo{person}{Peter Alvaro}, \bibinfo{person}{Emily Andrews}, {and} \bibinfo{person}{Joseph~M Hellerstein}.} \bibinfo{year}{2014}\natexlab{}.
\newblock \showarticletitle{Edelweiss: Automatic storage reclamation for distributed programming}.
\newblock \bibinfo{journal}{\emph{Proceedings of the VLDB Endowment}} \bibinfo{volume}{7}, \bibinfo{number}{6} (\bibinfo{year}{2014}), \bibinfo{pages}{481--492}.
\newblock


\bibitem[Conway et~al\mbox{.}(2012)]%
        {blooml}
\bibfield{author}{\bibinfo{person}{Neil Conway}, \bibinfo{person}{William~R Marczak}, \bibinfo{person}{Peter Alvaro}, \bibinfo{person}{Joseph~M Hellerstein}, {and} \bibinfo{person}{David Maier}.} \bibinfo{year}{2012}\natexlab{}.
\newblock \showarticletitle{Logic and lattices for distributed programming}. In \bibinfo{booktitle}{\emph{Proceedings of the Third ACM Symposium on Cloud Computing}}. \bibinfo{pages}{1--14}.
\newblock


\bibitem[Dean and Ghemawat(2004)]%
        {mapReduce}
\bibfield{author}{\bibinfo{person}{Jeffrey Dean} {and} \bibinfo{person}{Sanjay Ghemawat}.} \bibinfo{year}{2004}\natexlab{}.
\newblock \showarticletitle{MapReduce: Simplified Data Processing on Large Clusters}. In \bibinfo{booktitle}{\emph{Proceedings of the 6th Conference on Symposium on Operating Systems Design I\& Implementation - Volume 6}} (San Francisco, CA) \emph{(\bibinfo{series}{OSDI'04})}. \bibinfo{publisher}{USENIX Association}, \bibinfo{address}{USA}, \bibinfo{pages}{10}.
\newblock


\bibitem[Deshpande et~al\mbox{.}(2007)]%
        {deshpande2007adaptive}
\bibfield{author}{\bibinfo{person}{Amol Deshpande}, \bibinfo{person}{Zachary Ives}, \bibinfo{person}{Vijayshankar Raman}, {et~al\mbox{.}}} \bibinfo{year}{2007}\natexlab{}.
\newblock \showarticletitle{Adaptive query processing}.
\newblock \bibinfo{journal}{\emph{Foundations and Trends{\textregistered} in Databases}} \bibinfo{volume}{1}, \bibinfo{number}{1} (\bibinfo{year}{2007}), \bibinfo{pages}{1--140}.
\newblock


\bibitem[DeWitt and Gray(1992)]%
        {parallelDB}
\bibfield{author}{\bibinfo{person}{David DeWitt} {and} \bibinfo{person}{Jim Gray}.} \bibinfo{year}{1992}\natexlab{}.
\newblock \showarticletitle{Parallel database systems}.
\newblock \bibinfo{journal}{\emph{Commun. ACM}} \bibinfo{volume}{35}, \bibinfo{number}{6} (\bibinfo{date}{June} \bibinfo{year}{1992}), \bibinfo{pages}{85--98}.
\newblock
\urldef\tempurl%
\url{https://doi.org/10.1145/129888.129894}
\showDOI{\tempurl}


\bibitem[DeWitt et~al\mbox{.}(1986)]%
        {gamma}
\bibfield{author}{\bibinfo{person}{David~J. DeWitt}, \bibinfo{person}{Robert~H. Gerber}, \bibinfo{person}{Goetz Graefe}, \bibinfo{person}{Michael~L. Heytens}, \bibinfo{person}{Krishna~B. Kumar}, {and} \bibinfo{person}{M. Muralikrishna}.} \bibinfo{year}{1986}\natexlab{}.
\newblock \showarticletitle{{GAMMA} - {A} High Performance Dataflow Database Machine}. In \bibinfo{booktitle}{\emph{VLDB}}. \bibinfo{pages}{228--237}.
\newblock


\bibitem[Ding et~al\mbox{.}(2020)]%
        {scalog}
\bibfield{author}{\bibinfo{person}{Cong Ding}, \bibinfo{person}{David Chu}, \bibinfo{person}{Evan Zhao}, \bibinfo{person}{Xiang Li}, \bibinfo{person}{Lorenzo Alvisi}, {and} \bibinfo{person}{Robbert~Van Renesse}.} \bibinfo{year}{2020}\natexlab{}.
\newblock \showarticletitle{Scalog: Seamless Reconfiguration and Total Order in a Scalable Shared Log}. In \bibinfo{booktitle}{\emph{17th {USENIX} Symposium on Networked Systems Design and Implementation ({NSDI} 20)}}. \bibinfo{publisher}{{USENIX} Association}, \bibinfo{address}{Santa Clara, CA}, \bibinfo{pages}{325--338}.
\newblock
\showISBNx{978-1-939133-13-7}
\urldef\tempurl%
\url{https://www.usenix.org/conference/nsdi20/presentation/ding}
\showURL{%
\tempurl}


\bibitem[Dwork et~al\mbox{.}(1988)]%
        {dwork1988consensus}
\bibfield{author}{\bibinfo{person}{Cynthia Dwork}, \bibinfo{person}{Nancy Lynch}, {and} \bibinfo{person}{Larry Stockmeyer}.} \bibinfo{year}{1988}\natexlab{}.
\newblock \showarticletitle{Consensus in the presence of partial synchrony}.
\newblock \bibinfo{journal}{\emph{Journal of the ACM (JACM)}} \bibinfo{volume}{35}, \bibinfo{number}{2} (\bibinfo{year}{1988}), \bibinfo{pages}{288--323}.
\newblock


\bibitem[Fushimi et~al\mbox{.}(1986)]%
        {grace}
\bibfield{author}{\bibinfo{person}{Shinya Fushimi}, \bibinfo{person}{Masaru Kitsuregawa}, {and} \bibinfo{person}{Hidehiko Tanaka}.} \bibinfo{year}{1986}\natexlab{}.
\newblock \showarticletitle{An Overview of The System Software of A Parallel Relational Database Machine GRACE.}. In \bibinfo{booktitle}{\emph{VLDB}}, Vol.~\bibinfo{volume}{86}. \bibinfo{pages}{209--219}.
\newblock


\bibitem[Ganguly et~al\mbox{.}(1990)]%
        {ganguly1990framework}
\bibfield{author}{\bibinfo{person}{Sumit Ganguly}, \bibinfo{person}{Avi Silberschatz}, {and} \bibinfo{person}{Shalom Tsur}.} \bibinfo{year}{1990}\natexlab{}.
\newblock \showarticletitle{A framework for the parallel processing of datalog queries}.
\newblock \bibinfo{journal}{\emph{ACM SIGMOD Record}} \bibinfo{volume}{19}, \bibinfo{number}{2} (\bibinfo{year}{1990}), \bibinfo{pages}{143--152}.
\newblock


\bibitem[Geck et~al\mbox{.}(2019)]%
        {parallelCorrectness}
\bibfield{author}{\bibinfo{person}{Gaetano Geck}, \bibinfo{person}{Bas Ketsman}, \bibinfo{person}{Frank Neven}, {and} \bibinfo{person}{Thomas Schwentick}.} \bibinfo{year}{2019}\natexlab{}.
\newblock \showarticletitle{Parallel-Correctness and Containment for Conjunctive Queries with Union and Negation}.
\newblock \bibinfo{journal}{\emph{{ACM} Transactions on Computational Logic}} \bibinfo{volume}{20}, \bibinfo{number}{3} (\bibinfo{date}{July} \bibinfo{year}{2019}), \bibinfo{pages}{1--24}.
\newblock
\urldef\tempurl%
\url{https://doi.org/10.1145/3329120}
\showDOI{\tempurl}


\bibitem[Geck et~al\mbox{.}(2020)]%
        {distConstraints}
\bibfield{author}{\bibinfo{person}{Gaetano Geck}, \bibinfo{person}{Frank Neven}, {and} \bibinfo{person}{Thomas Schwentick}.} \bibinfo{year}{2020}\natexlab{}.
\newblock \showarticletitle{Distribution Constraints: The Chase for Distributed Data}. \bibinfo{publisher}{Schloss Dagstuhl - Leibniz-Zentrum f\"{u}r Informatik}.
\newblock
\urldef\tempurl%
\url{https://doi.org/10.4230/LIPICS.ICDT.2020.13}
\showDOI{\tempurl}


\bibitem[Guerraoui et~al\mbox{.}(2010)]%
        {bft700}
\bibfield{author}{\bibinfo{person}{Rachid Guerraoui}, \bibinfo{person}{Nikola Kne{\v{z}}evi{\'c}}, \bibinfo{person}{Vivien Qu{\'e}ma}, {and} \bibinfo{person}{Marko Vukoli{\'c}}.} \bibinfo{year}{2010}\natexlab{}.
\newblock \showarticletitle{The next 700 BFT protocols}. In \bibinfo{booktitle}{\emph{Proceedings of the 5th European conference on Computer systems}}. \bibinfo{pages}{363--376}.
\newblock


\bibitem[Gupta et~al\mbox{.}(2023)]%
        {bftChemistry}
\bibfield{author}{\bibinfo{person}{Suyash Gupta}, \bibinfo{person}{Mohammad~Javad Amiri}, {and} \bibinfo{person}{Mohammad Sadoghi}.} \bibinfo{year}{2023}\natexlab{}.
\newblock \showarticletitle{Chemistry behind Agreement}. In \bibinfo{booktitle}{\emph{Conference on Innovative Data Systems Research (CIDR).(2023)}}.
\newblock


\bibitem[Hawblitzel et~al\mbox{.}(2015)]%
        {ironfleet}
\bibfield{author}{\bibinfo{person}{Chris Hawblitzel}, \bibinfo{person}{Jon Howell}, \bibinfo{person}{Manos Kapritsos}, \bibinfo{person}{Jacob~R Lorch}, \bibinfo{person}{Bryan Parno}, \bibinfo{person}{Michael~L Roberts}, \bibinfo{person}{Srinath Setty}, {and} \bibinfo{person}{Brian Zill}.} \bibinfo{year}{2015}\natexlab{}.
\newblock \showarticletitle{IronFleet: proving practical distributed systems correct}. In \bibinfo{booktitle}{\emph{Proceedings of the 25th Symposium on Operating Systems Principles}}. \bibinfo{pages}{1--17}.
\newblock


\bibitem[Hellerstein and Alvaro(2020)]%
        {calm}
\bibfield{author}{\bibinfo{person}{Joseph~M. Hellerstein} {and} \bibinfo{person}{Peter Alvaro}.} \bibinfo{year}{2020}\natexlab{}.
\newblock \showarticletitle{Keeping {CALM}: When Distributed Consistency is Easy}.
\newblock \bibinfo{journal}{\emph{Commun. ACM}} \bibinfo{volume}{63}, \bibinfo{number}{9} (\bibinfo{date}{Aug.} \bibinfo{year}{2020}), \bibinfo{pages}{72--81}.
\newblock
\urldef\tempurl%
\url{https://doi.org/10.1145/3369736}
\showDOI{\tempurl}


\bibitem[Herlihy and Wing(1990)]%
        {linearizability}
\bibfield{author}{\bibinfo{person}{Maurice~P Herlihy} {and} \bibinfo{person}{Jeannette~M Wing}.} \bibinfo{year}{1990}\natexlab{}.
\newblock \showarticletitle{Linearizability: A correctness condition for concurrent objects}.
\newblock \bibinfo{journal}{\emph{ACM Transactions on Programming Languages and Systems (TOPLAS)}} \bibinfo{volume}{12}, \bibinfo{number}{3} (\bibinfo{year}{1990}), \bibinfo{pages}{463--492}.
\newblock


\bibitem[Hirzel et~al\mbox{.}(2018)]%
        {Hirzel2018}
\bibfield{author}{\bibinfo{person}{Martin Hirzel}, \bibinfo{person}{Robert Soul{\'e}}, \bibinfo{person}{Bu{\u{g}}ra Gedik}, {and} \bibinfo{person}{Scott Schneider}.} \bibinfo{year}{2018}\natexlab{}.
\newblock \bibinfo{booktitle}{\emph{Stream Query Optimization}}.
\newblock \bibinfo{publisher}{Springer International Publishing}, \bibinfo{pages}{1--9}.
\newblock


\bibitem[Howard and Abraham(2020)]%
        {raftBug}
\bibfield{author}{\bibinfo{person}{Heidi Howard} {and} \bibinfo{person}{Ittai Abraham}.} \bibinfo{year}{2020}\natexlab{}.
\newblock \bibinfo{title}{Raft does not Guarantee Liveness in the face of Network Faults}.
\newblock \bibinfo{howpublished}{\url{https://decentralizedthoughts.github.io/2020-12-12-raft-liveness-full-omission/}}.
\newblock


\bibitem[Howard et~al\mbox{.}(2016)]%
        {flexiblePaxos}
\bibfield{author}{\bibinfo{person}{Heidi Howard}, \bibinfo{person}{Dahlia Malkhi}, {and} \bibinfo{person}{Alexander Spiegelman}.} \bibinfo{year}{2016}\natexlab{}.
\newblock \showarticletitle{Flexible paxos: Quorum intersection revisited}.
\newblock \bibinfo{journal}{\emph{arXiv preprint arXiv:1608.06696}} (\bibinfo{year}{2016}).
\newblock


\bibitem[Howard and Mortier(2020)]%
        {paxosVsRaft}
\bibfield{author}{\bibinfo{person}{Heidi Howard} {and} \bibinfo{person}{Richard Mortier}.} \bibinfo{year}{2020}\natexlab{}.
\newblock \showarticletitle{Paxos vs Raft}. In \bibinfo{booktitle}{\emph{Proceedings of the 7th Workshop on Principles and Practice of Consistency for Distributed Data}}. \bibinfo{publisher}{{ACM}}.
\newblock
\urldef\tempurl%
\url{https://doi.org/10.1145/3380787.3393681}
\showDOI{\tempurl}


\bibitem[Isard et~al\mbox{.}(2007)]%
        {isard2007dryad}
\bibfield{author}{\bibinfo{person}{Michael Isard}, \bibinfo{person}{Mihai Budiu}, \bibinfo{person}{Yuan Yu}, \bibinfo{person}{Andrew Birrell}, {and} \bibinfo{person}{Dennis Fetterly}.} \bibinfo{year}{2007}\natexlab{}.
\newblock \showarticletitle{Dryad: distributed data-parallel programs from sequential building blocks}. In \bibinfo{booktitle}{\emph{Proceedings of the 2nd ACM SIGOPS/EuroSys European Conference on Computer Systems 2007}}. \bibinfo{pages}{59--72}.
\newblock


\bibitem[Jalalzai et~al\mbox{.}(2019)]%
        {proteus}
\bibfield{author}{\bibinfo{person}{Mohammad~M Jalalzai}, \bibinfo{person}{Costas Busch}, {and} \bibinfo{person}{Golden~G Richard}.} \bibinfo{year}{2019}\natexlab{}.
\newblock \showarticletitle{Proteus: A scalable BFT consensus protocol for blockchains}. In \bibinfo{booktitle}{\emph{2019 IEEE international conference on Blockchain (Blockchain)}}. IEEE, \bibinfo{pages}{308--313}.
\newblock


\bibitem[Ketsman and Koch(2020)]%
        {monotonicityUndecidable}
\bibfield{author}{\bibinfo{person}{Bas Ketsman} {and} \bibinfo{person}{Christoph Koch}.} \bibinfo{year}{2020}\natexlab{}.
\newblock \showarticletitle{{Datalog with Negation and Monotonicity}}. In \bibinfo{booktitle}{\emph{23rd International Conference on Database Theory (ICDT 2020)}} \emph{(\bibinfo{series}{Leibniz International Proceedings in Informatics (LIPIcs)}, Vol.~\bibinfo{volume}{155})}, \bibfield{editor}{\bibinfo{person}{Carsten Lutz} {and} \bibinfo{person}{Jean~Christoph Jung}} (Eds.). \bibinfo{publisher}{Schloss Dagstuhl--Leibniz-Zentrum f{\"u}r Informatik}, \bibinfo{address}{Dagstuhl, Germany}, \bibinfo{pages}{19:1--19:18}.
\newblock
\showISBNx{978-3-95977-139-9}
\showISSN{1868-8969}
\urldef\tempurl%
\url{https://doi.org/10.4230/LIPIcs.ICDT.2020.19}
\showDOI{\tempurl}


\bibitem[Ketsman et~al\mbox{.}(2022)]%
        {ketsman2022modern}
\bibfield{author}{\bibinfo{person}{Bas Ketsman}, \bibinfo{person}{Paraschos Koutris}, {et~al\mbox{.}}} \bibinfo{year}{2022}\natexlab{}.
\newblock \showarticletitle{Modern Datalog Engines}.
\newblock \bibinfo{journal}{\emph{Foundations and Trends{\textregistered} in Databases}} \bibinfo{volume}{12}, \bibinfo{number}{1} (\bibinfo{year}{2022}), \bibinfo{pages}{1--68}.
\newblock


\bibitem[Konnov et~al\mbox{.}(2019)]%
        {konnov2019tla+}
\bibfield{author}{\bibinfo{person}{Igor Konnov}, \bibinfo{person}{Jure Kukovec}, {and} \bibinfo{person}{Thanh-Hai Tran}.} \bibinfo{year}{2019}\natexlab{}.
\newblock \showarticletitle{TLA+ model checking made symbolic}.
\newblock \bibinfo{journal}{\emph{Proceedings of the ACM on Programming Languages}} \bibinfo{volume}{3}, \bibinfo{number}{OOPSLA} (\bibinfo{year}{2019}), \bibinfo{pages}{1--30}.
\newblock


\bibitem[Lamport(1998)]%
        {paxos}
\bibfield{author}{\bibinfo{person}{Leslie Lamport}.} \bibinfo{year}{1998}\natexlab{}.
\newblock \showarticletitle{The Part-Time Parliament}.
\newblock \bibinfo{journal}{\emph{ACM Trans. Comput. Syst.}} \bibinfo{volume}{16}, \bibinfo{number}{2} (\bibinfo{date}{May} \bibinfo{year}{1998}), \bibinfo{pages}{133–169}.
\newblock
\showISSN{0734-2071}
\urldef\tempurl%
\url{https://doi.org/10.1145/279227.279229}
\showDOI{\tempurl}


\bibitem[Lamport(2002)]%
        {lamport2002specifying}
\bibfield{author}{\bibinfo{person}{Leslie Lamport}.} \bibinfo{year}{2002}\natexlab{}.
\newblock \showarticletitle{Specifying systems: the TLA+ language and tools for hardware and software engineers}.
\newblock  (\bibinfo{year}{2002}).
\newblock


\bibitem[Loo et~al\mbox{.}(2009)]%
        {loo2009declarative}
\bibfield{author}{\bibinfo{person}{Boon~Thau Loo}, \bibinfo{person}{Tyson Condie}, \bibinfo{person}{Minos Garofalakis}, \bibinfo{person}{David~E Gay}, \bibinfo{person}{Joseph~M Hellerstein}, \bibinfo{person}{Petros Maniatis}, \bibinfo{person}{Raghu Ramakrishnan}, \bibinfo{person}{Timothy Roscoe}, {and} \bibinfo{person}{Ion Stoica}.} \bibinfo{year}{2009}\natexlab{}.
\newblock \showarticletitle{Declarative networking}.
\newblock \bibinfo{journal}{\emph{Commun. ACM}} \bibinfo{volume}{52}, \bibinfo{number}{11} (\bibinfo{year}{2009}), \bibinfo{pages}{87--95}.
\newblock


\bibitem[Mohan et~al\mbox{.}(1986)]%
        {mohan1986transaction}
\bibfield{author}{\bibinfo{person}{C Mohan}, \bibinfo{person}{Bruce Lindsay}, {and} \bibinfo{person}{Ron Obermarck}.} \bibinfo{year}{1986}\natexlab{}.
\newblock \showarticletitle{Transaction management in the R* distributed database management system}.
\newblock \bibinfo{journal}{\emph{ACM Transactions on Database Systems (TODS)}} \bibinfo{volume}{11}, \bibinfo{number}{4} (\bibinfo{year}{1986}), \bibinfo{pages}{378--396}.
\newblock


\bibitem[Moraru et~al\mbox{.}(2013)]%
        {epaxos}
\bibfield{author}{\bibinfo{person}{Iulian Moraru}, \bibinfo{person}{David~G. Andersen}, {and} \bibinfo{person}{Michael Kaminsky}.} \bibinfo{year}{2013}\natexlab{}.
\newblock \showarticletitle{There is more consensus in Egalitarian parliaments}. In \bibinfo{booktitle}{\emph{Proceedings of the Twenty-Fourth {ACM} Symposium on Operating Systems Principles}}. \bibinfo{publisher}{{ACM}}.
\newblock
\urldef\tempurl%
\url{https://doi.org/10.1145/2517349.2517350}
\showDOI{\tempurl}


\bibitem[Mumick and Shmueli(1995)]%
        {aggAndNeg}
\bibfield{author}{\bibinfo{person}{Inderpal~Singh Mumick} {and} \bibinfo{person}{Oded Shmueli}.} \bibinfo{year}{1995}\natexlab{}.
\newblock \showarticletitle{How expressive is stratified aggregation?}
\newblock \bibinfo{journal}{\emph{Annals of Mathematics and Artificial Intelligence}}  \bibinfo{volume}{15} (\bibinfo{year}{1995}), \bibinfo{pages}{407--435}.
\newblock


\bibitem[Neiheiser et~al\mbox{.}(2021)]%
        {kauri}
\bibfield{author}{\bibinfo{person}{Ray Neiheiser}, \bibinfo{person}{Miguel Matos}, {and} \bibinfo{person}{Lu{\'\i}s Rodrigues}.} \bibinfo{year}{2021}\natexlab{}.
\newblock \showarticletitle{Kauri: Scalable bft consensus with pipelined tree-based dissemination and aggregation}. In \bibinfo{booktitle}{\emph{Proceedings of the ACM SIGOPS 28th Symposium on Operating Systems Principles}}. \bibinfo{pages}{35--48}.
\newblock


\bibitem[Newcombe et~al\mbox{.}(2015)]%
        {newcombe2015amazon}
\bibfield{author}{\bibinfo{person}{Chris Newcombe}, \bibinfo{person}{Tim Rath}, \bibinfo{person}{Fan Zhang}, \bibinfo{person}{Bogdan Munteanu}, \bibinfo{person}{Marc Brooker}, {and} \bibinfo{person}{Michael Deardeuff}.} \bibinfo{year}{2015}\natexlab{}.
\newblock \showarticletitle{How Amazon web services uses formal methods}.
\newblock \bibinfo{journal}{\emph{Commun. ACM}} \bibinfo{volume}{58}, \bibinfo{number}{4} (\bibinfo{year}{2015}), \bibinfo{pages}{66--73}.
\newblock


\bibitem[Ongaro(2014)]%
        {raftDissertation}
\bibfield{author}{\bibinfo{person}{Diego Ongaro}.} \bibinfo{year}{2014}\natexlab{}.
\newblock \emph{\bibinfo{title}{Consensus : bridging theory and practice}}.
\newblock \bibinfo{thesistype}{Ph.\,D. Dissertation}. \bibinfo{school}{Stanford University}.
\newblock


\bibitem[Perry and Toueg(1986)]%
        {generalOmission}
\bibfield{author}{\bibinfo{person}{Kenneth~J. Perry} {and} \bibinfo{person}{Sam Toueg}.} \bibinfo{year}{1986}\natexlab{}.
\newblock \showarticletitle{Distributed agreement in the presence of processor and communication faults}.
\newblock \bibinfo{journal}{\emph{IEEE Transactions on Software Engineering}} \bibinfo{volume}{SE-12}, \bibinfo{number}{3} (\bibinfo{year}{1986}), \bibinfo{pages}{477--482}.
\newblock
\urldef\tempurl%
\url{https://doi.org/10.1109/TSE.1986.6312888}
\showDOI{\tempurl}


\bibitem[Pirlea(2023)]%
        {protocolBugsList}
\bibfield{author}{\bibinfo{person}{George Pirlea}.} \bibinfo{year}{2023}\natexlab{}.
\newblock \bibinfo{title}{Errors found in distributed protocols}.
\newblock \bibinfo{howpublished}{\url{https://github.com/dranov/protocol-bugs-list}}.
\newblock


\bibitem[Samuel et~al\mbox{.}(2021)]%
        {hydroflow}
\bibfield{author}{\bibinfo{person}{Mingwei Samuel}, \bibinfo{person}{Joseph~M Hellerstein}, {and} \bibinfo{person}{Alvin Cheung}.} \bibinfo{year}{2021}\natexlab{}.
\newblock \showarticletitle{Hydroflow: A Model and Runtime for Distributed Systems Programming}.
\newblock  (\bibinfo{year}{2021}).
\newblock


\bibitem[Sundarmurthy et~al\mbox{.}(2021)]%
        {distSchemes}
\bibfield{author}{\bibinfo{person}{Bruhathi Sundarmurthy}, \bibinfo{person}{Paraschos Koutris}, {and} \bibinfo{person}{Jeffrey Naughton}.} \bibinfo{year}{2021}\natexlab{}.
\newblock \showarticletitle{Locality-Aware Distribution Schemes}. \bibinfo{publisher}{Schloss Dagstuhl - Leibniz-Zentrum f\"{u}r Informatik}.
\newblock
\urldef\tempurl%
\url{https://doi.org/10.4230/LIPICS.ICDT.2021.22}
\showDOI{\tempurl}


\bibitem[Suri-Payer et~al\mbox{.}(2021)]%
        {basil}
\bibfield{author}{\bibinfo{person}{Florian Suri-Payer}, \bibinfo{person}{Matthew Burke}, \bibinfo{person}{Zheng Wang}, \bibinfo{person}{Yunhao Zhang}, \bibinfo{person}{Lorenzo Alvisi}, {and} \bibinfo{person}{Natacha Crooks}.} \bibinfo{year}{2021}\natexlab{}.
\newblock \showarticletitle{Basil}. In \bibinfo{booktitle}{\emph{Proceedings of the {ACM} {SIGOPS} 28th Symposium on Operating Systems Principles {CD}-{ROM}}}. \bibinfo{publisher}{{ACM}}.
\newblock
\urldef\tempurl%
\url{https://doi.org/10.1145/3477132.3483552}
\showDOI{\tempurl}


\bibitem[Sutra(2020)]%
        {epaxos-broken}
\bibfield{author}{\bibinfo{person}{Pierre Sutra}.} \bibinfo{year}{2020}\natexlab{}.
\newblock \showarticletitle{On the correctness of Egalitarian Paxos}.
\newblock \bibinfo{journal}{\emph{Inform. Process. Lett.}}  \bibinfo{volume}{156} (\bibinfo{year}{2020}), \bibinfo{pages}{105901}.
\newblock
\showISSN{0020-0190}
\urldef\tempurl%
\url{https://doi.org/10.1016/j.ipl.2019.105901}
\showDOI{\tempurl}


\bibitem[Trummer et~al\mbox{.}(2018)]%
        {trummer2018skinnerdb}
\bibfield{author}{\bibinfo{person}{Immanuel Trummer}, \bibinfo{person}{Samuel Moseley}, \bibinfo{person}{Deepak Maram}, \bibinfo{person}{Saehan Jo}, {and} \bibinfo{person}{Joseph Antonakakis}.} \bibinfo{year}{2018}\natexlab{}.
\newblock \showarticletitle{Skinnerdb: regret-bounded query evaluation via reinforcement learning}.
\newblock \bibinfo{journal}{\emph{Proceedings of the VLDB Endowment}} \bibinfo{volume}{11}, \bibinfo{number}{12} (\bibinfo{year}{2018}), \bibinfo{pages}{2074--2077}.
\newblock


\bibitem[Van~Renesse and Altinbuken(2015)]%
        {paxosComplex}
\bibfield{author}{\bibinfo{person}{Robbert Van~Renesse} {and} \bibinfo{person}{Deniz Altinbuken}.} \bibinfo{year}{2015}\natexlab{}.
\newblock \showarticletitle{Paxos Made Moderately Complex}.
\newblock \bibinfo{journal}{\emph{ACM Comput. Surv.}} \bibinfo{volume}{47}, \bibinfo{number}{3}, Article \bibinfo{articleno}{42} (\bibinfo{date}{Feb.} \bibinfo{year}{2015}), \bibinfo{numpages}{36}~pages.
\newblock
\showISSN{0360-0300}
\urldef\tempurl%
\url{https://doi.org/10.1145/2673577}
\showDOI{\tempurl}


\bibitem[van Renesse et~al\mbox{.}(2015)]%
        {viveLaDiff}
\bibfield{author}{\bibinfo{person}{Robbert van Renesse}, \bibinfo{person}{Nicolas Schiper}, {and} \bibinfo{person}{Fred~B. Schneider}.} \bibinfo{year}{2015}\natexlab{}.
\newblock \showarticletitle{Vive La Diff{\'{e}}rence: Paxos vs. Viewstamped Replication vs. Zab}.
\newblock \bibinfo{journal}{\emph{{IEEE} Transactions on Dependable and Secure Computing}} \bibinfo{volume}{12}, \bibinfo{number}{4} (\bibinfo{date}{July} \bibinfo{year}{2015}), \bibinfo{pages}{472--484}.
\newblock
\urldef\tempurl%
\url{https://doi.org/10.1109/tdsc.2014.2355848}
\showDOI{\tempurl}


\bibitem[Wang et~al\mbox{.}(2019)]%
        {portingPaxosOpts}
\bibfield{author}{\bibinfo{person}{Zhaoguo Wang}, \bibinfo{person}{Changgeng Zhao}, \bibinfo{person}{Shuai Mu}, \bibinfo{person}{Haibo Chen}, {and} \bibinfo{person}{Jinyang Li}.} \bibinfo{year}{2019}\natexlab{}.
\newblock \showarticletitle{On the Parallels between Paxos and Raft, and how to Port Optimizations}. In \bibinfo{booktitle}{\emph{Proceedings of the 2019 {ACM} Symposium on Principles of Distributed Computing}}. \bibinfo{publisher}{{ACM}}.
\newblock
\urldef\tempurl%
\url{https://doi.org/10.1145/3293611.3331595}
\showDOI{\tempurl}


\bibitem[Whittaker(2020)]%
        {craqBug}
\bibfield{author}{\bibinfo{person}{Michael Whittaker}.} \bibinfo{year}{2020}\natexlab{}.
\newblock \bibinfo{title}{mwhittaker/craq\_bug}.
\newblock \bibinfo{howpublished}{\url{https://github.com/mwhittaker/craq_bug}}.
\newblock


\bibitem[Whittaker et~al\mbox{.}(2021a)]%
        {compartmentalized}
\bibfield{author}{\bibinfo{person}{Michael Whittaker}, \bibinfo{person}{Ailidani Ailijiang}, \bibinfo{person}{Aleksey Charapko}, \bibinfo{person}{Murat Demirbas}, \bibinfo{person}{Neil Giridharan}, \bibinfo{person}{Joseph~M. Hellerstein}, \bibinfo{person}{Heidi Howard}, \bibinfo{person}{Ion Stoica}, {and} \bibinfo{person}{Adriana Szekeres}.} \bibinfo{year}{2021}\natexlab{a}.
\newblock \showarticletitle{Scaling Replicated State Machines with Compartmentalization}.
\newblock \bibinfo{journal}{\emph{Proc. VLDB Endow.}} \bibinfo{volume}{14}, \bibinfo{number}{11} (\bibinfo{date}{July} \bibinfo{year}{2021}), \bibinfo{pages}{2203–2215}.
\newblock
\showISSN{2150-8097}
\urldef\tempurl%
\url{https://doi.org/10.14778/3476249.3476273}
\showDOI{\tempurl}


\bibitem[Whittaker et~al\mbox{.}(2021b)]%
        {compartmentalizedTechReport}
\bibfield{author}{\bibinfo{person}{Michael Whittaker}, \bibinfo{person}{Ailidani Ailijiang}, \bibinfo{person}{Aleksey Charapko}, \bibinfo{person}{Murat Demirbas}, \bibinfo{person}{Neil Giridharan}, \bibinfo{person}{Joseph~M. Hellerstein}, \bibinfo{person}{Heidi Howard}, \bibinfo{person}{Ion Stoica}, {and} \bibinfo{person}{Adriana Szekeres}.} \bibinfo{year}{2021}\natexlab{b}.
\newblock \bibinfo{title}{Scaling Replicated State Machines with Compartmentalization [Technical Report]}.
\newblock
\newblock
\showeprint[arxiv]{2012.15762}~[cs.DC]


\bibitem[Whittaker et~al\mbox{.}(2021c)]%
        {sok}
\bibfield{author}{\bibinfo{person}{Michael Whittaker}, \bibinfo{person}{Neil Giridharan}, \bibinfo{person}{Adriana Szekeres}, \bibinfo{person}{Joseph Hellerstein}, {and} \bibinfo{person}{Ion Stoica}.} \bibinfo{year}{2021}\natexlab{c}.
\newblock \showarticletitle{SoK: A Generalized Multi-Leader State Machine Replication Tutorial}.
\newblock \bibinfo{journal}{\emph{Journal of Systems Research}} \bibinfo{volume}{1}, \bibinfo{number}{1} (\bibinfo{year}{2021}).
\newblock


\bibitem[Wilcox et~al\mbox{.}(2015)]%
        {verdi}
\bibfield{author}{\bibinfo{person}{James~R Wilcox}, \bibinfo{person}{Doug Woos}, \bibinfo{person}{Pavel Panchekha}, \bibinfo{person}{Zachary Tatlock}, \bibinfo{person}{Xi Wang}, \bibinfo{person}{Michael~D Ernst}, {and} \bibinfo{person}{Thomas Anderson}.} \bibinfo{year}{2015}\natexlab{}.
\newblock \showarticletitle{Verdi: a framework for implementing and formally verifying distributed systems}. In \bibinfo{booktitle}{\emph{Proceedings of the 36th ACM SIGPLAN Conference on Programming Language Design and Implementation}}. \bibinfo{pages}{357--368}.
\newblock


\bibitem[Yao et~al\mbox{.}(2021)]%
        {distai}
\bibfield{author}{\bibinfo{person}{Jianan Yao}, \bibinfo{person}{Runzhou Tao}, \bibinfo{person}{Ronghui Gu}, \bibinfo{person}{Jason Nieh}, \bibinfo{person}{Suman Jana}, {and} \bibinfo{person}{Gabriel Ryan}.} \bibinfo{year}{2021}\natexlab{}.
\newblock \showarticletitle{DistAI: Data-Driven Automated Invariant Learning for Distributed Protocols.}. In \bibinfo{booktitle}{\emph{OSDI}}. \bibinfo{pages}{405--421}.
\newblock


\bibitem[Zaharia et~al\mbox{.}(2010)]%
        {spark}
\bibfield{author}{\bibinfo{person}{Matei Zaharia}, \bibinfo{person}{Mosharaf Chowdhury}, \bibinfo{person}{Michael~J. Franklin}, \bibinfo{person}{Scott Shenker}, {and} \bibinfo{person}{Ion Stoica}.} \bibinfo{year}{2010}\natexlab{}.
\newblock \showarticletitle{Spark: Cluster Computing with Working Sets}. In \bibinfo{booktitle}{\emph{2nd {USENIX} Workshop on Hot Topics in Cloud Computing (HotCloud 10)}}. \bibinfo{publisher}{{USENIX} Association}, \bibinfo{address}{Boston, MA}.
\newblock
\urldef\tempurl%
\url{https://www.usenix.org/conference/hotcloud-10/spark-cluster-computing-working-sets}
\showURL{%
\tempurl}


\bibitem[Zaharia et~al\mbox{.}(2013)]%
        {zaharia2013discretized}
\bibfield{author}{\bibinfo{person}{Matei Zaharia}, \bibinfo{person}{Tathagata Das}, \bibinfo{person}{Haoyuan Li}, \bibinfo{person}{Timothy Hunter}, \bibinfo{person}{Scott Shenker}, {and} \bibinfo{person}{Ion Stoica}.} \bibinfo{year}{2013}\natexlab{}.
\newblock \showarticletitle{Discretized streams: Fault-tolerant streaming computation at scale}. In \bibinfo{booktitle}{\emph{Proceedings of the twenty-fourth ACM symposium on operating systems principles}}. \bibinfo{pages}{423--438}.
\newblock


\bibitem[Zhou et~al\mbox{.}(2010)]%
        {zhou2010incorporating}
\bibfield{author}{\bibinfo{person}{Jingren Zhou}, \bibinfo{person}{Per-Ake Larson}, {and} \bibinfo{person}{Ronnie Chaiken}.} \bibinfo{year}{2010}\natexlab{}.
\newblock \showarticletitle{Incorporating partitioning and parallel plans into the SCOPE optimizer}. In \bibinfo{booktitle}{\emph{2010 IEEE 26th International Conference on Data Engineering (ICDE 2010)}}. IEEE, \bibinfo{pages}{1060--1071}.
\newblock


\end{thebibliography}
}

\tr{
\appendix
\section{Decoupling}
\label{app:decoupling}

We will require the following terms in addition to the terms introduced in \Cref{sec:background}.

An \textbf{instance} $I$ over program $P$ is a set of facts for relations in $P$.
An \textbf{immediate consequence} operator evaluates rules to produce new facts from known facts.
$T_\varphi(I)$ over instance $I$ and rule $\varphi$ is a set of facts $f_h$ \ded{:-} $f_1, \ldots, f_n$ that is an instantiation of $\varphi$, where each $f_i$ is in $I$.
For the remainder of this paper, when we refer to instance, we mean an instance created by evaluating a sequence of immediate consequences over some set of EDB and input facts.
An instance is the state of the Dedalus program as a result of repeated rule evaluation.

A relation $r'$ is in the proof tree of $r$ if there exists facts $f' \in r'$ and $f \in r$ such that $f'$ is in the proof tree of $f$.

We assume that there is no \textit{entanglement}~\cite{dedalus}; for any fact, values representing location and time only appear in the location and time attributes respectively.
This allows us to modify the location and times of facts without worrying about changing the values in other attributes.
Our transformations may introduce entanglement when necessary.

\subsection{Mutually independent decoupling}
\label{app:mutually-independent-decoupling}




\subsubsection{Proof}
\label{app:mutually-independent-decoupling-proof}

Let $I$ be an instance over $C$ and $I'$ be an instance over $C_1$ and $C_2$.

\david{Induction hypothesis on immediate consequences.}
We will demonstrate that for each $I'$, there exists an $I$ such that (1) for relations $r$ referenced in $C_1$ and output relations of $C_1$ and $C_2$, $I$ contains fact $f$ in $r$ if and only if $I'$ contains $f$, and (2) for relations $r$ referenced in $C_2$, $I$ contains $f$ in $r$ if and only if $I'$ contains $f'$, where $f$ and $f'$ share the same values except $\pi_L(f') =$ \ded{addr2} and $\pi_L(f') =$ \ded{addr}.
By showing that $I'$ implies the existence of $I$ with the same input and output facts, each history $H'$ constructed from $I'$ must then be equivalent to some history $H$ constructed by $I$, completing our proof.

\david{Base case.}
We prove by induction; we assume that our proof is correct for any instance $I'$ constructed through $n$ immediate consequences, and show that it holds for $n+1$.
The base case of $0$ immediate consequences is trivial; $I$ and $I'$ start with the same set of EDB facts.

Let $T_\varphi(I')$ be the $n+1$-th immediate consequence over $I'$.

\david{Unchanged facts still join.}
Consider $\varphi$ where $\varphi$ is a rule of $C_1$.
Relations $r$ in $body(\varphi)$ are by definition referenced in $C_1$, so by the induction hypothesis, the same facts exist in $r$ for both $I$ and $I'$.
Therefore the same immediate consequence is possible over both $I$ and $I'$, producing the same fact at the head of $\varphi$, proving the inductive case.

\david{Facts with new location still join (can be reused).}
Now consider $\varphi$ where $\varphi$ is a rule of $C_2$.
For each $r$ in the body of $\varphi$, $I$ and $I'$ share the same facts except those facts $f$ in $I$ have $\pi_L(f) =$ \ded{addr} and $f'$ in $I'$ have $\pi_L(f') =$ \ded{addr2}.
The same immediate consequence is possible in both $I$ and $I'$ differing only in the location of facts, assuming the location attribute is only used for joins (as enforced by Dedalus).
If $\varphi$ is synchronous, then the fact at the head of $\varphi$ retains its location value in both $I$ and $I'$, proving the inductive case.
If $\varphi$ is asynchronous, then the head of $\varphi$ must be an output relation with the same location value in both $I$ and $I'$.

\david{Intuition of mechanism.}
Relations $r$ with facts generated in $C_1$ and used in $C_2$ must be forwarded from \ded{addr} to \ded{addr2} through an asynchronous message channel.
In Dedalus, this can be achieved by modifying synchronous rules into asynchronous rules, turning $r$ into an output relation of $C_1$ and an input relation of $C_2$.

\subsection{Monotonic decoupling}
\label{app:monotonic-decoupling}

\subsubsection{Checks for monotonicity}
\label{app:monotonic-decoupling-checks}

The strictest check for monotonicity requires that (1) all input relations of $C$ are persisted, and that (2) no rules of $C$ contain aggregation or negation constructs.
Fundamentally, these preconditions imply that the existence of any fact $f$ guarantees the fact $f'$ will exist at $\pi_T(f') = \pi_T(f)+1$ and $f'$ otherwise equals $f$.
If all facts $f$ over a relation $r$ has this property, then we say $r$ is \textbf{logically persisted}.
Any persisted relation (with a persistence rule) must be logically persisted, but a logically persisted relation is not necessarily persisted.
We relax the preconditions for logical persistence below.

\david{Relations don't need to be persisted if all their parents are persisted.}
A relation $r$ is logically persisted if all the relations $r$ is dependent on are logically persisted.
Consider the proof tree of any fact $f$ in $r$, and all children facts $f_c$ in $r_c$.
If all $r_c$ are logically persisted, then $f_c'$ must exist where $\pi_T(f_c') = \pi_T(f_c)+1$ and $f_c'$ otherwise equals $f_c$.
Since the rules of $C$ are monotonic (no aggregation or negation), then those facts alone are enough to guarantee the existence of $f'$ where $\pi_T(f') = \pi_T(f)+1$ and $f'$ otherwise equals $f$.
Therefore $r$ must be logically persisted.

\david{Monotonic thresholds for aggregations}
Threshold operations over monotone lattices are also monotonic~\cite{blooml}, despite involving aggregations.

\david{Garbage-collecting annotations.}
To constrain the memory footprint of monotonic components over time, we allow facts to be garbage collected from persisted relations through user annotations.
For example, in Paxos, whether a quorum is reached is a monotonic condition.
Once a quorum is reached for a particular sequence number in Paxos, the committed value cannot change, so the votes for that quorum can be safely forgotten (no longer persisted).
To allow monotonic components to forget values, we allow the user to annotate persistence rules with garbage collecting conditions.

\subsubsection{Mechanism}
\label{app:monotonic-decoupling-mechanism}

Monotonic decoupling employs both the Redirection rewrite (\Cref{sec:mutually-independent-decoupling}) and the Decoupling rewrite (\Cref{app:functional-decoupling-mechanism}) in addition to the following rewrite to persist inputs to $C_2$:

\textbf{Monotonic Rewrite:}
For all input relations $r'$ of $C_2$:
\begin{itemize}
    \item Create a relation $r''$ with all the attributes of $r'$ and replace all references of $r'$ in $C_2$ with $r''$.
    \item Add the alias and persistence rules to $C_2$:
\begin{lstlisting}[language=Dedalus, float=false, numbers=none]
r''(...,l,t) :- r'(...,l,t)
r''(...,l,t') :- r''(...,l,t), t'=t+1
\end{lstlisting}
\end{itemize}

\subsubsection{Proof}
\label{app:monotonic-decoupling-proof}
The proof is similar to that of \Cref{app:functional-decoupling-proof} for $\varphi$ where $\varphi$ is a rule of $C_1$.
We defer to the CALM Theorem~\cite{calm} for $\varphi$ where $\varphi$ is a rule of $C_2$.

\subsection{Functional decoupling}
\label{app:functional-decoupling}

\subsubsection{Mechanism}
\label{app:functional-decoupling-mechanism}

Functional decoupling employs the Redirection rewrite (\Cref{sec:mutually-independent-decoupling}) to route data from outside $C'$ to $C_2$.
Routing data from $C_1$ to $C_2$ requires the explicit introduction of asynchrony below.

\textbf{Rewrite: Decoupling.}
Given a rule $\varphi$ in $C_1$ with a head relation $r$ referenced in $C_2$:
\begin{itemize}
    \item Create a relation $r'$ with all the attributes of $r$, and replace all references of $r$ in $C_2$ with $r'$.
    \item Add the forwarding rule to $C_1$:
\begin{lstlisting}[language=Dedalus, float=false, numbers=none]
r'(...,l',t') :- r(...,l,t), forward(v,l'), delay((...,l,t,l'),t')
\end{lstlisting}
\end{itemize}

\subsubsection{Proof}
\label{app:functional-decoupling-proof}


Formally, we will demonstrate that for each instance $I'$ over $C_1$ and $C_2$, there exists an instance $I$ over $C$ such that (1) for relations $r$ referenced in $C_1$ and output relations of $C_1$ and $C_2$ (excluding input relations of $C_2$), $I$ contains fact $f$ in $r$ if and only if $I'$ contains $f$, (2) for relations $r$ referenced in $C_2$, $I$ contains fact $f$ in $r$ if and only if $I'$ contains $f'$, where $f$ and $f'$ share the same values except $\pi_L(f) =$ \ded{addr} while $\pi_L(f') =$ \ded{addr2}, and $\pi_T(f) \le \pi_T(f')$.

We again prove by induction, assuming the proof is correct for $I'$ through $n$ immediate consequences, and show that it holds for $n+1$.
The base case is trivial.

Let $T_\varphi(I')$ be the $n+1$-th immediate consequence over $I'$.

\david{Rule of $C_1$.}
Consider $\varphi'$ where $\varphi'$ is a rule of $C_1$.
If $\varphi'$ is unchanged from $\varphi$ in $C$, then the inductive hypothesis implies the same facts in both $I$ and $I'$ in all relations in the body of $\varphi'$, so $T_\varphi(I')$ implies $T_\varphi(I)$.

\david{Newly asynchronous rule (can be reused later).}
If $\varphi'$ is is a newly asynchronous rule, then let $r$ be the head of $\varphi'$ and $f'$ be the fact of $r$ in $T_{\varphi'}(I')$.
Let $\varphi$ be the original, synchronous rule.
$r$ must be an input relation of $C_2$, so we must show that $f'$ is at the head of $T_{\varphi'}(I')$ if and only if $f$ is at the head of $T_\varphi(I)$, where $\pi_L(f) =$ \ded{addr}, $\pi_L(f') =$ \ded{addr2}, and $\pi_T(f') > \pi_T(f)$.
Let $t$ be the time and $l$ be the location of all body facts in $T_{\varphi'}(I')$.
We know $t$ and $l$ are also the time and location of all body facts in $T_\varphi(I)$ by the inductive hypothesis.
$\varphi'$ differs from $\varphi$ with the two additional relations \ded{forward} and \ded{delay} added to the body.
\ded{forward} assigns $f'$ the location \ded{addr2}, while \ded{delay} sets the time of $f'$ to some non-deterministic value greater than $t$.
Since the original rule $\varphi$ was synchronous, $f$ shares the same location \ded{addr} as all other facts in $T_\varphi(I)$, and either time $t$ (if $\varphi$ is deductive) or $t+1$ (if $\varphi$ is inductive).
This proves the inductive hypothesis: $f$ and $f'$ share the same values except $\pi_L(f) =$ \ded{addr} while $\pi_L(f') =$ \ded{addr2}, and $\pi_T(f) \le \pi_T(f')$.

\david{Rule of $C_2$.}
Now consider $\varphi'$ where $\varphi'$ is a rule of $C_2$.
$\varphi'$ in $C_2$ is unchanged from $\varphi$ in $C$.
Since we assumed that $C_2$ is functional, $\varphi'$ contains at most one IDB relation in its body.
If there are only EDB relations in its body, then the facts in those EDBs are the same in both $I'$ and $I$, completing the proof.
If there is one IDB relation $r_b$ in its body, then by the induction hypothesis, the fact $f_b'$ of $r_b$ in $I'$ implies fact $f_b$ in $I$, where $\pi_L(f_b) =$ \ded{addr} and $\pi_L(f_b') =$ \ded{addr2}, $\pi_T(f_b) < \pi_T(f_b')$, and $f_b'$ otherwise equals $f_b$.
All remaining relations in the body of $\varphi$ must be EDBs with the same facts across all locations and times.
Therefore, any immediate consequence $T_{\varphi'}(I')$ with $f_b'$ in its body and $f'$ in its head implies $T_\varphi(I)$ with $f_b$ in its body and $f$ in its head.
If $\varphi$ is synchronous, then $f'$ and $f_b'$ share the same location and time, $f$ and $f_b$ share the same location and time, and $f'$ and $f$ are otherwise equal, completing the proof.

\david{Outputs of $C_2$ match outputs of $C$.}
If $\varphi$ is asynchronous, then $f'$ and $f$ are output facts.
Then $f'$ and $f$ share the same location (the destination), whereas for time, the facts only need to satisfy the inequalities $\pi_T(f') > \pi_T(f_b')$ and $\pi_T(f) > \pi_T(f_b)$.
In other words, the range of possible values for $\pi_T(f')$ is $(\pi_T(f_b'), \infty)$ and the range of possible values for $\pi_T(f)$ is $(\pi_T(f_b), \infty)$.
Since $\pi_T(f_b') > \pi_T(f_b)$, the range $(\pi_T(f_b'), \infty)$ must be a sub-range of $(\pi_T(f_b), \infty)$.
Therefore, given $f'$ in $T_{\varphi'}(I')$, an immediate consequence $T_\varphi(I)$ with $f$ is always possible where $\pi_T(f) = \pi_T(f')$ and $f = f'$, completing the proof.

\subsection{State machine decoupling}
\label{app:state-machine-decoupling}

Although this decoupling technique has since been cut from the paper, we still include it since its preconditions and proofs are referenced in \Cref{app:partial-partitioning}.

In a state machine component, any pair of facts that are combined (say via join or aggregation) at time $t$ must be the result of
inputs at time $t$ and the order of inputs prior, but the exact value of $t$ is irrelevant.
In these cases, we want to guarantee that (a) facts that co-occur at time $t$ in $C$ 
will also be processed together at some time $t'$ in $C_2$, and (b) the inputs of $C$ prior to $t$ match the inputs of $C_2$ prior to $t'$.
To meet this guarantee, we collect facts from $C_1$ to $C_2$ into sequenced batches.

\textbf{Precondition:}
$C_1$ is independent of $C_2$, and $C_2$ behaves like a \textbf{state machine}.

Before we formalize what it means to behave like a state machine, a couple of definitions are helpful.

\begin{definition}[Existence dependency]
\david{Do folks understand this notation?}
Relation $r$ has an \textit{existence dependency} on input relations $\overline{r_{in}}$ if $r$ is empty whenever there is no input;
that is, $r = \emptyset$ in any timestep when $\bigwedge_{r_{i} \in \overline{r_{in}}} r_{i} = \emptyset$.
\end{definition}\heidi{Nit pick: Sometimes terms in definitions are textbf and sometimes textit}
\nc{Should this not be Union rather than AND?}

\begin{definition}[No-change dependency]
Relation $r$ has a \textit{no-change dependency} on input relations $\overline{r_{in}}$ if 
$r$'s contents remain unchanged in a timestep when the inputs are empty.
That is, if $\bigwedge_{r_{i} \in \overline{r_{in}}} r_{i} = \emptyset$ at
timestep $t$, then $r$ contains exactly the same facts at timestep $t$ as it did in timestep $t-1$.
\end{definition}


Formally, $C_2$ is a state machine if (a) all referenced relations have either existence or no-change dependencies on the inputs, and (b) outputs of $C_2$ have existence dependencies on the inputs.
Condition (a) ensures that the component is insensitive to the passing of time(steps), and (b) ensures that the passing of time(steps) does not affect output content. \jmh{I'm not sure I 
understand the argument here fully, hence my explanation is not too convincing.} 
\nc{It should definitely affect output timing but you may want to say it doesn't affect output content? I'm also not entirely sure this is true. By this definition, Paxos is not a state machine? Timing definitely affects output content it's just that the system remains linearizable. You may want to use another word than state machine?}

\subsubsection{Checks for state machine behavior}
\label{app:state-machine-decoupling-checks}

We provide conservative tests on relations to identify existence and no-change dependencies.
A relation $r$ has an existence dependency on input relations $\overline{r_{in}}$ if for all rules $\varphi$ in the proof tree of $r$, (1) $\varphi$ does not contain \ded{t'=t+1}, and (2) the body of $\varphi$ contains at least one non-negated relation $r'$ where either $r'$ is an input or $r'$ also has an existence dependency on $\overline{r_{in}}$.

A relation $r$ has a no-change dependency on input relations $\overline{r_{in}}$ if:
\begin{enumerate}
    \item \textbf{Explicit persist.} If there is an inductive rule $\varphi$ with $r = head(\varphi)$, $\varphi$ must be the persistence rule. Then $r$ is persisted.
    \item \textbf{Implicit persist.} If is no such inductive rule, then for all (non-inductive) rules $\varphi$ where $r = head(\varphi)$, the body of $\varphi$ contains only EDBs and relations $r'$ where $r'$ has a no-change dependency on $\overline{r_{in}}$.
    \item \textbf{Change only on inputs.} If there is such an inductive rule, then we also allow rules $\varphi$ where $r = head(\varphi)$ to contain at least one non-negated relation $r'$ in the body, where either $r' \in \overline{r_{in}}$ or $r'$ has an existence dependency on $\overline{r_{in}}$.
\end{enumerate}

\subsubsection{Mechanism}
\label{app:state-machine-decoupling-mechanism}

\david{Intuition of mechanism.}
To guarantee coexistence of facts, rewrites for state machine decoupling must preserve the order and batching of inputs.
Similar to the rewrites above, we create new relations and asynchronous forwarding rules for relations in $C_1$ referenced in $C_2$.
To preserve ordering and batching of inputs, we create additional rules in $C_1$ to track the number of previous batches and the current batch size (the number of output facts to $C_2$).
Then $C_2$ ensures that all previous batches have been processed and the current batch has arrived before processing any input fact in the current batch.

\david{Intuition of causality preservation.}
Unlike any decoupling techniques described so far, we cannot reroute rules from other components $C'$ to $C_2$; those input facts must be batched and ordered by $C_1$.
Intuitively, if all inputs are routed through $C_1$, then the batching and ordering on $C_1$ is a feasible batching and ordering on $C$.
$C_2$ must then process its inputs with the same batching and ordering to guarantee correctness.
Were facts $f$ to arrive at $C_2$ without batching or ordering information from $C_1$, then $f$ may happen-after some input $f'$ in $C_1$ but be processed before $f'$ is processed at $C_2$, violating causality.

Our rewrites append a new attribute $T_1$ to relations forwarded to $C_2$ from $C_1$.
This attribute represents the time on $C_1$ when each input fact existed, allowing $C_2$ to process facts in the same order and batches as $C_1$ even with non-deterministic message delay.

\textbf{Rewrite: Batching.}
Given a rule $\varphi$ in either $C_1$ or another component $C'$ whose head $r$ is referenced in $C_2$, we add the following rules to $C_1$:

\begin{lstlisting}[language=Dedalus, float=false, mathescape=true]
# Create a relation $r'$ with all the attributes of $r$, with an additional attribute $T_1$, and forward to $C_2$.
r'(...,t,l',t') :- r(...,l,t), forward(l,l'), delay((...,l,t,l'),t') |\label{line:sm-c1-forward}|
# Count the number of facts of $r$ for any time. Assumes that count evaluates to 0 if $r$ is empty.
rCount(count<...>,l,t) :- r(...,l,t) |\label{line:sm-c1-count-r}|
# Sum the size of the batch across relations $r_i$.
batchSize(n,l,t) :- r1Count(n1,l,t), r2Count(n2,l,t), ..., n=n1+n2+... |\label{line:sm-c1-batch-size}|
# Record whenever the batch size is non-zero.
batchTimes(t,l,t') :- batchSize(n,l,t), n!=0, t'=t+1 |\label{line:sm-c1-batch-times}|
batchTimes(t,l,t') :- !batchSize(n,l,t), batchTimes(t,l,t), t'=t+1 |\label{line:sm-c1-batch-times-persist}|
# Send the batch size and times to $C_2$.
inputs(n,t,prevT,l',t') :- batchTimes(prevT,l,t), batchSize(n,l,t), n!=0, forward(l,l'), delay((n,t,prevT,l,l'),t') |\label{line:sm-c1-send-batch}|
inputs(n,t,0,l',t') :- !batchTimes(prevT,l,t), batchSize(n,l,t), n!=0, forward(l,l'), delay((n,t,prevT,l,l'),t')
\end{lstlisting}

On $C_2$, we modify all references of input relations $r$ to \ded{rSealed} and add the following rules:
\begin{lstlisting}[language=Dedalus, float=false, mathescape=true]
r''(...,t1,l,t') :- r'(...,t1,l,t)
# Count the number of facts of $r$ for batch $t1$.
r''(...,t1,l,t') :- r''(...,t1,l,t), t'=t+1 |\label{line:sm-c2-r-persist}|
rCount(count<...>,t1,l,t) :- r''(...,t1,l,t) |\label{line:sm-c2-count-r}|
# Count the number of facts across $r_i$ for batch $t1$.
recvSize(n,t1,l,t) :- r1Count(n1,t1,l,t), r2Count(n2,t1,l,t), ..., n=n1+n2+... |\label{line:sm-c2-batch-size}|
# Check if this batch has been received and the previous batch has been processed.
inputs(n,t1,prevT,l,t') :- inputs(n,t1,prevT,l,t), t'=t+1 |\label{line:sm-c2-input-persist}|
canSeal(t1,l,t) :- recvSize(n,t1,l,t), inputs(n,t1,prevT,l,t), sealed(prevT,l,t) |\label{line:sm-c2-batch-can-seal}|
canSeal(t1,l,t) :- recvSize(n,t1,l,t), inputs(n,t1,0,l,t)
# Mark this batch as processed.
sealed(t1,l,t') :- canSeal(t1,l,t), t'=t+1 |\label{line:sm-c2-batch-sealed}|
sealed(t1,l,t') :- sealed(t1,l,t), t'=t+1 |\label{line:sm-c2-seal-persist}|
# Can process facts at time $t1$.
rSealed(...,l,t) :- r''(...,t1,l,t), canSeal(t1,l,t) |\label{line:sm-c2-rSealed}|
\end{lstlisting}

Note that whenever a time $t1$ is sealed on $C_2$, facts in $r''$ and \ded{inputs} can be garbage collected.
Facts in \ded{sealed} can be garbage collected if a higher timestamp has been sealed.
We omit these optimizations for simplicity.

\subsubsection{Proof}
\label{app:state-machine-decoupling-proof}

Our proof relies on $C_2$ processing inputs in the same order and batches as $C_1$.
For simplicity, we denote $I_{\overline{r},t}$ as the set of facts $f$ in $I$ where $f$ is a fact of relation $r$ in $\overline{r}$ and $\pi_T(f) = t$.

Formally, we will prove that for each instance $I'$ there exists $I$ such that (1) for relations $r$ referenced in $C_1$ and output relations of $C_1$ and $C_2$ (excluding input relations of $C_2$), $I$ contains fact $f$ in $r$ if and only if $I'$ contains $f$, and (2) for the set of relations $\overline{r'}$ referenced in $C_2$ (and $\overline{r}$ corresponding to relations with \ded{rSealed} replaced with $r$), for any time $t'$ where at least one input relation \ded{rSealed} of $C_2$ is not empty in $I'$ and contains fact $f_{in}'$, let $t = \pi_{T_1}(f_{in}')$. 
We must have $I'_{\overline{r'},t'} = I_{\overline{r},t}$ when facts in \ded{rSealed} are mapped to $r$, and location and time are ignored.

For rules $\varphi'$ of $C_1$, the inductive proof is identical to previous proofs.

\david{Proof for rules on $C_2$.}
Now consider $\varphi'$ where $\varphi'$ is a rule of $C_2$ and $\varphi'$ corresponds to $\varphi$ in $C$.
Let $t'$ be the time of immediate consequence $T_{\varphi'}(I')$ such that for all facts $f'$ of relations $r$ in the body of $\varphi'$, $\pi_T(f') = t'$.
If all input relations (\ded{rSealed}, not $r$) are empty for $t'$ in $I'$, then $I'$ cannot produce output facts at $t'$, since output relations must have existence dependencies on the input relations.

\david{Case where inputs to $C_2$ are non-empty.}
If at least one input relation \ded{rSealed} is not empty for $t'$ in $I'$, we must show $I'_{\overline{r'},t'} = I_{\overline{r},t}$.
Let $t = \pi_{T_1}(f_{in}')$ for some fact $f_{in}'$ in \ded{rSealed} with $\pi_T(f_{in}') = t'$.
Let $t_<'$ be the time of the previous input on $I'$; formally, for all input relations \ded{rSealed} of $C_2$, there is no $t''$ where $t_<' < t'' < t'$ such that $I'_{\ded{rSealed}, t''}$ is non-empty.
Similar to how we construct $t$ from $t'$, let $t_< = \pi_{T_1}(f_{in}')$ for some fact $f_{in}'$ in \ded{rSealed} with $\pi_T(f_{in}') = t_<'$.

\david{Previous sealed input on $I'$ corresponds to previous input on $I$; order is preserved.}
We first show that $t_<$ is the time of the previous input on $I$; formally, for input relations \ded{rSealed} of $C_2$, for all corresponding $r$, there is no $t''$ where $t_< < t'' < t$ such that $I_{r, t''}$ is non-empty.
We prove by contradiction, assuming such $t''$ exists.
In order for a fact $f_{in}'$ in \ded{rSealed} in $I'$ to have $\pi_T(f_{in}') = t'$, \ded{canSeal} must contain the fact \ded{canSeal}$(t,\ded{addr2},t')$, which is only possible if \ded{sealed} contains the fact \ded{sealed}$(t'',\ded{addr2},t')$ and \ded{inputs} contains the fact \ded{inputs}$(n,t,t'',\ded{addr2},t')$.
\ded{sealed}$(t'',\ded{addr2},t')$ implies the fact \ded{canSeal}$(t'',\ded{addr2},t_<')$ where $t_<' = \pi_T(f_{in}')$.
In order for some fact $f$ in $r$ to join with \ded{canSeal}$(t'',\ded{addr2},t_<')$ to create $f_{in}$, we must have $\pi_{T_1}(f) = t'' = \pi_{T_1}(f_{in})$.
By definition, $\pi_{T_1}(f_{in}) = t_<'$, so $t_<' = t''$.

\david{Since $t_<$ was the previous input on $I$ and $I'$, there are no further inputs in between and we can infer state using no-change and existence dependencies.}
Knowing that no inputs facts exist with times between $t_<$ and $t$ in $I$ and between $t_<'$ and $t'$ in $I'$, we can use the existence and no-change dependencies of relations $r$ in $I$ to reason about the instances $I'_{\overline{r'},t'}$ and $I_{\overline{r},t}$.
By the induction hypothesis, we have $I'_{\overline{r'},t_<'} = I_{\overline{r},t_<}$.
We can now reason about the instances $I'$ and $I$ at times $t'-1$ and $t-1$, respectively.
For all relations $r$ with existence dependencies on the inputs, $I'_{r,t'-1}$ and $I_{r,t-1}$ must both be empty.
For all relations $r$ with no-change dependencies on the inputs, $I'_{r,t'-1} = I'_{r,t_<'}$ and $I_{r,t-1} = I_{r,t_<}$, so $I'_{r,t'-1} = I_{r,t-1}$.

\david{Induction on immediate consequences after the input was received.}
Consider an immediate consequence $T_{\varphi'}(I')$ with facts $f'$ in the body of $\varphi'$ where $\pi_T(f') = t'$.
We find the set of facts in the proof tree of $f'$ that have no proof tree (because they are inputs or EDBs) or have parents in the tree with time $t'-1$.
Inputs and EDBs are the same between $I'$ and $I$ at time $t'$ and $t$, due to our sealing mechanism.
Since we know $I'_{r,t'-1} = I_{r,t-1}$, any parent fact in the tree with time $t'-1$ exists in $I$ with time $t-1$ and evaluates to the same fact.
Therefore, the same series of immediate consequences are possible in $I$ to produce each fact in $I'_{r,t'-1}$.
Thus $I'_{\overline{r'},t'} = I_{\overline{r},t}$.

\david{Outputs of $C_2$ match outputs of $C$.}
If $\varphi'$ is an asynchronous rule, then the head of $\varphi'$ is an output relation $r$, and we have to show that the fact $f'$ in $r'$ of immediate consequence $T_{\varphi'}(I')$ is equivalent to some fact $f$ in $r$ of immediate consequence $T_\varphi(I)$.
Output relations must have existence dependencies in $C_2$ by assumption, so given $f_b$ in the body of $T_{\varphi'}(I')$ with $\pi_T(f) = t'$, the input relations are not empty at $t'$ in $I'$, and $I'_{\overline{r'},t'} = I_{\overline{r},t}$.
$f_b'$ of $\varphi'$ for $I'$ must correspond to $f_b$ in the body of $\varphi$ for $I$, $\pi_L(f_b) =$ \ded{addr} and $\pi_T(f_b) = t$ instead of $\pi_L(f_b') =$ \ded{addr2} and $\pi_T(f_b') = t'$.
The facts $f'$ and $f$ at the head of $\varphi'$ and $\varphi$ must be the same as well, non-deterministic with the constraints $\pi_T(f') > t'$ and $\pi_T(f) > t$.
We know $t' > t$ due to the addition of an asynchronous channel, therefore $\pi_T(f) = \pi_T(f')$ is always possible, and $f = f'$, completing the proof.

\subsection{Asymmetric decoupling}
\label{app:asymmetric-decoupling}

In this section, we consider decoupling where $C_1$ and $C_2$ are mutually dependent and where $C_2$ is independent of $C_1$ instead.


If $C_1$ and $C_2$ are both monotonic, then they can be decoupled through the Redirection With Persistence rewrite (\Cref{sec:monotonic-decoupling}), according to the CALM Theorem~\cite{calm}.

Now consider $C_2$ independent of $C_1$, where $C_2$ exhibits useful properties for decoupling.
Intuitively, although $C_2$ forwards facts to $C_1$, we can treat the time in which $C_1$ processes inputs as the ``time of input arrival'' while allowing $C_2$ to process inputs first.

This presents a problem: $C_2$ might produce outputs ``too early'', violating well-formedness. \kaushik{what does well-formedness mean?}
To preserve well-formedness, we introduce a rewrite to delay output facts $f'$ derived from input fact $f$ of $C_2$ until $C_1$ acknowledges it has processed $f$. \chris{I was initially a bit confused about how these two paragraphs aren't just like rest of the section, but with the names of $C_1$ and $C_2$ swapped around. In my original read-through of the earlier parts of this section, it seemed that the only difference between $C_1$ and $C_2$ was that $C_1$ was at the "original" addr while $C_2$ was moved to a new addr2.}
\kaushik{+1 to Chris's comment}

\textbf{Precondition:} $C_2$ is independent of $C_1$, and $C_2$ is (1) a state machine and (2) either monotonic or functional.

\subsubsection{Mechanism}
\label{app:asymmetric-decoupling-mechanism}

To decouple, first apply the Batching rewrite (\Cref{app:state-machine-decoupling-mechanism}) for all rules in $C_2$ whose head $r$ is referenced in $C_1$, replacing $r$ with \ded{rSealed}.
Populate \ded{forward} with both \ded{forward(addr,addr2)} and \ded{forward(addr2,addr)}.
Then apply the forwarding rewrite for all rules in another component $C'$ whose head is referenced in $C_2$.
Finally, perform the following rewrite to delay outputs of $C_2$:

\textbf{Rewrite: Batch Acknowledgement.}
Add the following rules to track which batches $C_1$ has processed:
\begin{lstlisting}[language=Dedalus, float=false, mathescape=true]
# Component $C_1$.
batchACK(t2,l',t') :- canSeal(t2,l,t), forward(l,l'), delay(t,t')
# Component $C_2$.
batchACK(t2,l,t') :- batchACK(t2,l,t), t'=t+1
\end{lstlisting}

For each rule $\varphi$ in $C_2$ with output relation \ded{out}, create the relation \ded{outP} with two additional attributes--- $T2$ and $L'$---representing the derivation time of the fact and the destination, and add the following rules:
\begin{lstlisting}[language=Dedalus, float=false, mathescape=true]
# Replace $\varphi$ with this rule.
outP(t,l',...,l,t) :- ...
# Find the batchACK each output must wait on.
outBatchTime(max<prevT>,l,t) :- batchTimes(prevT,l,t), outP(t2,l',...,l,t), prevT <= t2
# Outputs derived at time t2 can now send.
outCanSend(t2,l,t) :- outBatchTime(t2,l,t), batchACK(t2,l,t)
# Buffer outputs until they can send.
outP(t2,l',...,l,t') :- outP(t2,l',...,l,t), t'=t+1, !outCanSend(t2,l,t)
# Send the outputs.
out(...,l',t') :- outP(t2,l',...,l,t), outCanSend(t2,l,t), delay(t,t')
\end{lstlisting}

\subsubsection{Proof}
\label{app:asymmetric-decoupling-proof}

Since input relations $r$ to $C_1$ are now buffered and replaced with \ded{rSealed}, the arrival time of input facts in the transformed component $C$ no longer corresponds to the processing time.
This poses a problem for our proof; in the original component $C$, the arrival time of input facts \emph{is} the processing time.
Note that since messages are sent over an asynchronous network, any entity that sends and receives messages from $C$ can only observe the ``send'' time $t_s$ of each input fact $f$, where $t_s < \pi_T(f)$.
Intuitively, an input fact that is sent at time $t_s$ and in-network for $t$ seconds is processed identically to a fact sent at time $t_s$, in-network for $t'$ seconds, and buffered for $b$ seconds, so long as $t' + b = t$.

To formalize this intuition, we introduce the \textbf{observable instance} $\mathcal{I}$, which contains a set of facts representing the send times of input facts and arrival times of output facts.
An instance $I$ is \textbf{observably equivalent} to $\mathcal{I}$ if for all facts $f$ in relation $r$ of $I$, 
(1) if $r$ is an input relation, then $f$ exists if and only if there exists $f_s$ in $\mathcal{I}$ where $f$ equals $f_s$ except $\pi_T(f) > \pi_T(f_s)$, and 
(2) if $r$ is an output relation, then there exists $f$ in $I$ if and only if there exists $f$ in $\mathcal{I}$.

The history of an instance $I$ is constructed using the wall-clock times of inputs and outputs of its observably equivalent instance $\mathcal{I}$.
Therefore, any instances $I_1$ and $I_2$ that are observably equivalent to the same $\mathcal{I}$ share the same histories.

\david{Proof.}
We will prove that for each instance $I'$ there exists $I$ and the observable instance $\mathcal{I}$ such that 
(1) $I$ and $I'$ are both observably equivalent to $\mathcal{I}$, and
(2) Each fact $f$ of relation $r$ referenced by $C_1$ in $I'$ exists if and only if $f$ exists in $I$.

\david{Selecting input arrival times for $I$.}
We select the time of input facts in $I$ such that its inputs are observably equivalent to $\mathcal{I}$.
For each input relation $r$ in $C$:
(1) If $r$ is an input of $C_1$, each fact $f$ of $r$ in $I'$ exists if and only if $f$ exists in $I$, and
(2) If $r$ is an input of $C_2$, given $f'$ of $r$ in $I'$ with $t' = \pi_T(f')$, let the seal time be $t_s = \pi_{T2}(f_s)$ in $f_s$ of \ded{canSeal} in $I'$; let $f$ equal $f'$ except $\pi_L(f) =$ \ded{addr} and $\pi_T(f) = \pi_T(f_s)$; $f'$ exists if and only if $f$ exists in $I$, and
(3) If no such $f_s$ exists, then let $\pi_T(f) = \pi_T(f')$.
Note that in case 2, $\pi_T(f) > \pi_T(f')$ due to asynchrony from $C_2$ to $C_1$.
Therefore, input facts $f$ in $I$ correspond to input facts $f'$ in $I'$ where $\pi_T(f) \ge \pi_T(f')$, thus the inputs of $I$ are observably equivalent to $\mathcal{I}$.

\david{Prove $C_1$.}
We first prove the claim that $I'$ and $I$ are equivalent for all facts over relations referenced in $C_1$.
We prove by induction over the immediate consequence of rules $\varphi$ whose head $r$ is referenced in $C_1$.
If $\varphi$ is an original rule of $C_1$, then by the inductive hypothesis, the claim is true.
If $\varphi$ is a rule of another component $C'$, then the rule is unchanged and the claim is still true.

\david{Must show that in $I'$, the relations in the body had state at $t'$ that matches the state of $I$ at time $t$, and $t' < t$.}
Now consider $\varphi'$ where the head of $\varphi'$ is \ded{rSealed} corresponding to some $r$ after transformation.
$\varphi'$ is a rule we introduced so there is no immediate consequence over it in $I$.
Let $\varphi$ refer to the rule with $r$ at its head instead of \ded{rSealed}; $\varphi$ is a rule of $C_2$.
We can show that the immediate consequence $T_\varphi(I')$ at time $t'$ is possible if and only if $T_\varphi(I)$ at some time $t > t'$ is possible; this is true if for all relations $r_b \in body(\varphi)$, $I'_{r_b,t'} = I_{r_b,t}$.
Then as long as $t > t'$, the facts of $r$ of immediate consequence $T_\varphi(I')$ can be asynchronously delivered from $C_2$ to $C_1$ at some time $t_d$, where $t \ge t_d > t'$, and be sealed at time $t$ in a an immediate consequence $T_{\varphi'}(I')$ over $\varphi'$, resulting in the same facts in \ded{rSealed} in $I'$ and $r$ in $I$.

\david{Must show that the inputs of $C_2$ at $I'$ up to $t'$ are the same as the inputs at $I$ up to $t$.}
Since $C_2$ behaves like a state machine, the facts of $r_b$ are dependent on the input facts of $C_2$ and their ordering.
Since $C_2$ is either functional or monotonic, the facts of $r_b$ are not dependent on the ordering of the input facts; this can be proven by treating each $r_b$ as an output relation, then reapplying proofs from \Cref{app:functional-decoupling-proof,app:monotonic-decoupling-proof}.
Let $\overline{r}$ be the input relations of $C_2$.
We can prove that facts $f'$ of $\overline{r_{in}}$ in $I'$ exists if and only if $f$ of $\overline{r_{in}}$ in $I$ exists, where $f'$ equals $f$ except $\pi_L(f') =$ \ded{addr2} while $\pi_L(f) =$ \ded{addr}, and if $\pi_T(f') = t'$ then $\pi_T(f) = t$, but if $\pi_T(f') < t'$, then $\pi_T(f) < t$.
In other words, $I$ and $I'$ share inputs at $t'$ and $t$ and all previous inputs, but previous inputs may be out-of-order.

We prove by contradiction.
First consider some $f_i'$ of an input relation of $C_2$ in $I'$ where $\pi_T(f_i') = t'$, but there is no $f_i$ in $I$ where $\pi_T(f_i) = t$.
Since $r$ an output of $C_2$ and input of $C_1$, it is batched according to the batching rewrite, and there must be a fact $f_c$ in \ded{canSeal} of $I'$ with $\pi_{T2}(f_c) = t'$ signalling when facts in $r$ can be processed.
By construction of $I$, $f_i'$ exists in $I'$ if and only if $f_i$ exists in $I$, a contradiction.
Now consider some $f_i'$ where $\pi_T(f_i') < t'$, but there is no $f_i$ where $\pi_T(f_i) < t$.
Either there is some $f_c$ as above, or $\pi_T(f_i) = \pi_T(f_i')$ by construction of $I$.
Since $t' < t$, $t' = \pi_T(f_i) < t$, completing the proof.
Therefore, for inputs and outputs of $C_1$, $I'$ and $I$ and observably equivalent to $\mathcal{I}$.

\david{Prove $C_2$.}
We now prove that $I'$ and $I$ are observably equivalent to $\mathcal{I}$ for inputs and outputs of $C_2$.
By construction of $I$, the inputs of $I$ are observably equivalent to $\mathcal{I}$.
Since $C_2$ is either functional or monotonic, and $C_2$ is independent of $C_1$, given the same input facts of $C_2$ in $I'$ and $I$, in any relation $r$ referenced in $C_2$, $f'$ exists in $r$ of $I'$ if and only if $f$ exists in $I$, where $f'$ equals $f$ except on time.
Consider output fact $f'$ derived at time $t'$ in $I'$ and $t$ in $I$.
In $I'$, $f'$ is buffered in \ded{outP} until the latest input fact of $C_2$ is acknowledged by $C_1$.
Since we assumed that $C_2$ is a state machine, all outputs of $C_2$ must have existence dependencies on its inputs, and the latest input fact $f_i'$ of $C_2$ in $I'$ must have time $t'$.
There must be a fact $f_c$ in \ded{canSeal} of $I'$ with $\pi_{T2}(f_c) = t'$.
By construction of $I$, $t = \pi_T(f_c)$.
The acknowledgement for $f_i$ must be sent at $t$ in $I'$ and arrive at some time $t_o > t$ at $C_2$, when $f'$ can be sent.
Therefore, the range of possible times of $f$ is $(t, \infty)$ and $(t_o, \infty)$ for $f'$, where the range of $f'$ is a sub-range of $f$, and any immediate consequence of $f'$ in $I'$ must be possible in $I$.

\section{Partitioning}

\subsection{Partitioning by co-hashing}
\label{app:partitioning-by-co-hashing}

\subsubsection{Mechanism}
\label{app:partitioning-by-co-hashing-mechanism}

After finding a distribution policy $D$ that partitions consistently with co-hashing, the partitioning rewrite routes input facts to nodes \ded{addr\_i} by injecting the distribution policy $D$.
We model $D$ as a relation in Dedalus, such that if $D(f) =$ \ded{addr\_i}, then we add the tuple \ded{D(..., addr\_i)}, where \ded{...} represents the values of $f$, excluding time:
We then apply the following rewrite:

\textbf{Rewrite: Redirection With Partitioning.} Given a rule $\varphi$ in another component $C'$ whose head relation $r$ is referenced in $C$:
\begin{itemize}
    \item Add the body term $D(\ldots, l')$ to $\varphi$, where $\ldots$ is bound to the variables in the head of the rule.
    \item Replace the location variable $v$ of the head of the rule with $l'$. Replace $v$ with $l'$ in \ded{delay} similarly.
\end{itemize}
Note that this rewrite differs from the Redirection rewrite (\Cref{sec:mutually-independent-decoupling}) in that the entire fact is used to determine the new destination, while the Redirection rewrite only considers the original destination.

\subsubsection{Proof}
\label{app:partitioning-by-co-hashing-proof}

Formally, we will prove that for each instance $I'$ over the component $C$ partitioned by co-hashing with distribution policy $D$, there exists $I$ over the original component $C$ such that (1) for output relations $r$, $I$ contains fact $f$ in $r$ if and only if $I'$ contains $f$, and (2) for input relations $r$ or relations referenced in $C$, $I$ contains $f$ in $r$ if and only if $I'$ contains $f'$, where $f$ and $f'$ share the same values except $\pi_L(f') = D(f)$.

First consider $\varphi'$ where the head relation $r$ of $\varphi'$ is an input of $C$, and $\varphi'$ is a rule of some other component $C'$.
Let $\varphi$ be the original rule.
Facts in the body of $\varphi$ and $\varphi'$ are the same in both $I'$ and $I$, and the partitioning rewrite only changes the location of $f'$ in the head of $\varphi'$ such that $\pi_L(f) = D(f)$.
Thus our claim holds for input relations of $C$.

Now consider $\varphi$ in $C$.
All relations in the body of $\varphi$ are referenced by $C$, by definition.

Consider synchronous or inductive $\varphi$, such that the head relation $r$ of $\varphi$ is referenced in $C$ as well; we need to prove that fact $f$ of $r$ in $T_\varphi(I)$ exists if and only if $f'$ exists in $T_{\varphi'}(I')$, where $\pi_L(f') = D(f')$.
Assume by induction that this holds for $I$ and $I'$.
$\varphi$ equals $\varphi'$ since the rewrite does not alter rules in $C$.
For relation $r_h = head(\varphi)$ and any $r_b \in body(\varphi)$, there must be some attributes $A_h,A_b$ of $r_h,r_b$ that share keys; otherwise $D$ cannot exist and we do not partition.
Let $f_b$ be the fact of $r_b$ in $T_\varphi(I)$, and $f_b'$ be the corresponding fact in $T_\varphi(I')$.
$D$ must partition consistently with co-hashing on $A_h,A_b$, so $D(f_h) = D(f_b)$ and $D(f_h') = D(f_b')$.
$\varphi$ is synchronous or inductive, so we must have $\pi_L(f_h') = \pi_L(f_b')$.
The inductive hypothesis states that $D(f_b') = \pi_L(f_b')$.
Since $D(f_h') = D(f_b')$ and $\pi_L(f_h') = \pi_L(f_b')$, we must have $D(f_h') = \pi_L(f_h')$.
Therefore the induction hypothesis holds for all instances $I$ and $I'$.

If $\varphi$ is asynchronous, relations $r \in body(\varphi)$ must be inputs or referenced relations.
By the proofs above, facts $f_1,f_2$ of any pair of relations $r_1,r_2 \in body(\varphi)$ are partitioned consistently with co-hashing on the attributes that share keys.
Therefore $D(f_1) = D(f_2)$, implying $\pi_L(f_1') = \pi_L(f_2')$ for the corresponding $f_1',f_2'$ which are otherwise identical to $f_1,f_2$.
Therefore any immediate consequence over these facts $f_1,f_2$ of $T_\varphi(I)$ can always correspond to some immediate consequence $T_\varphi(I')$ over $f_1',f_2'$, and vice versa.
The location of the head (the output relation) is unmodified, completing our proof.

Note that we do not separately prove correctness for aggregation and negation because they are covered by our proofs according to the definition of ``sharing keys'' in \Cref{sec:co-hashing-partitioning}.

\subsection{Partitioning with dependencies}
\label{app:partitioning-with-dependencies}

\subsubsection{Checks for dependencies}
\label{app:partitioning-with-dependencies-checks}
FDs in Dedalus can be created in three ways:
\begin{itemize}
    \item \textbf{EDB annotation.} For example, \ded{hash(M, H)} is the EDB relation that simulates the hash function $hash(m) = h$, so there is an FD $g: M \to H$ where $g(m) = hash(m)$.
    \item \textbf{Variable sharing.} For a relation $r$, if in all rules $\varphi$ with $r$ as the head, attributes $A,B$ of $r$ always share keys, then there is an FD from $A$ to $B$ and from $B$ to $A$, where $g(x) = x$.
    \item \textbf{Inheritance.} The heads of rules $\varphi$ can inherit functional dependencies from a combination of joined relations in the body of $\varphi$.
\end{itemize}
In the last case, to determine which dependencies are inherited, we perform the following analysis for each relation $r$ in the head of rule $\varphi$:
\begin{itemize}
    \item \textbf{Attribute-variable substitution.} Take the set of all FDs of all relations in the body of $\varphi$ and replace each domain/co-domain on attributes with their bound variables in $\varphi$.
    \item \textbf{Constant substitution.} If an attribute is bound to a constant instead of another variable, plug the constant into the FDs of that attribute. Now all FDs in $\varphi$ should be functions on variables.
    \item \textbf{Transitive closure.} Construct the transitive closure of all such FDs.
    \item \textbf{Variable-attribute substitution.} Replace each FD on variables with their bound attributes from $r$, if possible. The FDs that only contain attributes in $r$ are the possible FDs of $r$.
\end{itemize}

\david{Intersection between rules for FDs.}
Having described the process of extracting FDs for each relation $r$ at the head of each rule $\varphi$, we must determine which FDs hold \emph{across} rules.
Since the identification of FDs for any relation assumes that all dependent relations have already been analyzed, and Dedalus allows dependency cycles, FD analysis must be recursive.
We divide the process of identifying FDs for $r$ into two steps: union and intersection.
The union step recursively takes the union of generated dependencies in any rule $\varphi$ where $r$ is the head; the intersection step recursively removes FDs that are not generated in some rule $\varphi$ where $r$ is the head.

\david{Finding CDs.}
CDs can be similarly extracted using dependency analysis.
In the variable-attribute substitution step, instead of only retaining FDs where variables in the domain and co-domain can \emph{all} be replaced with attributes in $r$, retain FDs where \emph{any} variable can be replaced with attributes in $r$.
These are the CDs of $r$ and the relations $r$ joins with in $\varphi$; they describe how attributes of $r$ joins with other relations in $\varphi$.
The CDs that hold across rules can be identified with the intersection step for FDs.

\subsubsection{Mechanism}
\label{app:partitioning-with-dependencies-mechanism}
The rewrite mechanics are identical to those in \Cref{app:partitioning-by-co-hashing-mechanism}.

\subsubsection{Proof}
\label{app:partitioning-with-dependencies-proof}

The proofs are similar to those in \Cref{app:partitioning-by-co-hashing-proof}, assuming a CD $g$ exists over attributes $A,B$ of relations $r_1,r_2$ only if for any $f_1,f_2$ of $r_1,r_2$ in the same proof tree, we must have $\pi_A(f_1) = g(\pi_B(f_2))$.

\subsection{Partial partitioning}
\label{app:partial-partitioning}

\subsubsection{Mechanism}
\label{app:partial-partitioning-mechanism}

In order to replicate relations $r$ referenced in $C_1$, we create a new ``coordinator'' proxy node at \ded{addr'}, introduce a relation \ded{proxy}$(l, l')$, and populate \ded{proxy} with a tuple $(addr, addr')$.

\textbf{Rewrite: Replication.}
Given a rule $\varphi$ in another component $C'$ whose head relation $r$ is referenced in $C$:
\begin{itemize}
    \item If $r$ is referenced in $C_1$, add the body term \ded{proxy}$(l, l')$ to $\varphi$ and bind the location attribute of the head to $l$.
    \item Otherwise, apply the partitioning rewrite.
\end{itemize}

We describe the functionality of the proxy node but omit its implementation.
The proxy node acts as the coordinator in 2PC, where the partitioned nodes are the participants.
It receives input facts on behalf of $C_1$ as described above, assigns each fact a unique, incrementing order, and broadcasts them to each node through \ded{rVoteReq}.
The nodes freeze and reply through \ded{rVote}.
The proxy waits to hear from all the nodes, then broadcasts the message through \ded{rCommit}, which unfreezes the nodes.
We describe how the nodes are modified to freeze, vote, and unfreeze (only after receiving all previously voted-for values) below:

Add the following rules to $C$:
\begin{lstlisting}[language=Dedalus, float=false]
processedI(i,l,t') :- processedI(i,l,t), t'=t+1
maxProcessedI(max<i>,l,t) :- processedI(i,l,t)
maxReceivedI(max<i>,l,t) :- receivedI(i,l,t)
unfreeze(l,t) :- maxReceivedI(i,l,t), maxProcessedI(i,l,t), !outstandingVote(l,t)
\end{lstlisting}
As we show below, \ded{unfreeze(l,t)} will be appended to rules so they can only be executed when all previous replicated inputs have been processed.

For each relation $r$ referenced in $C_1$, replace $r$ with \ded{rSealed} and add the following rules:
\begin{lstlisting}[language=Dedalus, float=false]
# Send replicated messages to the proxy for ordering.
rVoteReq(...,l,t') :- rVoteReq(...,l,t), t'=t+1
rVote(l,...,l',t') :- rVoteReq(...,l,t), proxy(l,l'), delay((...,l,t,l'),t')
# The proxy sends rCommit when all partitions have sent rVote.
rCommit(i,...,l,t') :- rCommit(i,...,l,t), t'=t+1
receivedI(i,l,t) :- rCommit(i,...,l,t)
# Messages in rCommit are processed in the proxy-assigned order.
rSealed(next,...,l,t) :- maxProcessed(i,l,t), next=i+1, rCommit(next,...,l,t)
rSealed(0,...,l,t) :- !maxProcessed(i,l,t), rCommit(0,...,l,t)
processedI(i,l,t') :- rSealed(i,...,l,t), t'=t+1
outstandingVote(l,t) :- rVoteReq(...,l,t), !rCommit(i,...,l,t)
\end{lstlisting}

For each remaining relation $r$ in $C$, replace $r$ with \ded{rSealed}, and add the following rules:
\begin{lstlisting}[language=Dedalus, float=false]
r(...,l,t') :- r(...,l,t), t'=t+1, !unfreeze(l,t)
rSealed(...,l,t) :- r(...,l,t), unfreeze(l,t)
\end{lstlisting}

\subsubsection{Proof}
\label{app:partial-partitioning-proof}

Before stating our proof goal, we present terms to describe partitioned state.
For simplicity, we denote $I_{\overline{r},p,t}$ as the set of facts $f$ in $I$ where $f$ is a fact of a relation in $\overline{r}$, $D(f) = p$, and $\pi_T(f) = t$.
A relation $r$ is \textit{empty} at time $t$ in instance $I$ and node $p$ if there is no fact $f$ of $r$ in $I$ where $\pi_T(f) = t$ and $D(f) = p$.
The \textit{corresponding facts} of a replicated fact $f$ are all facts $f'$ where $f$ equals $f'$ except $\pi_L(f) \ne \pi_L(f')$.

Since input relations $r$ to $C_1$ now arrive at the proxy first before being forwarded to the partitioned nodes, the arrival time of input facts in the transformed component $C$ no longer corresponds to the processing time.
This poses a problem for our proof; in the original component $C$, the arrival time of input facts \emph{is} the processing time.
We relax this requirement through the observation that because messages are sent over an asynchronous network, any entity that sends and receives messages from $C$ can only observe the ``send'' time $t_s$ of each input fact $f$, where $t_s < \pi_T(f)$.
Intuitively, an input fact that is sent at time $t_s$ and in-network for $t$ seconds is processed identically to a fact sent at time $t_s$, in-network for $t'$ seconds, and buffered for $b$ seconds, so long as $t' + b = t$.

\david{Proof.}
We will prove that for each instance $I'$ over the partially partitioned component $C$, there exists $I$ over the original component $C$ and the observable instance $\mathcal{I}$ such that (1) $I$ and $I'$ are both observably equivalent to $\mathcal{I}$, and (2) for the set of relations $\overline{r}$ referenced in $C$ (and $\overline{r'}$ corresponding to $\overline{r}$ with \ded{rSealed} replacing $r$), for any time $t'$ and node $p$ where at least one input relation \ded{rSealed} of $C_1$ is not empty in $I'$ and contains fact $f'$, let $f$ be the fact in $I$ corresponding to $f'$ with the smallest time $t = \pi_T(f)$, and let $p = \pi_L(f')$.
We must have $I'_{\overline{r'},p,t'} = I_{\overline{r},p,t}$.
In other words, although each replicated input is delivered at different times at different nodes, the states of each node at their differing times of delivery correspond to the original state at a single time of input delivery.
(3) Similarly, if at least one input relation \ded{rSealed} of $C_2$ is not empty, then the same condition holds with $t' = t$, $p = D(f')$.

\david{If the state at input times $t'$ in $I'$ match state in $I$ at $t$ (claims 2 and 3) + output facts are identical, then $I'$ and $I$ are both observably equivalent to $\mathcal{I}$ (claim 1).}
First, observe that claim 1 holds if (a) each output fact $f$ of $C$ exists in $I'$ if and only if $f$ is also in $I$, and (b) claims 2 and 3 hold.
Let $\mathcal{I}$ be observably equivalent to $I$; each input fact $f$ of $r$ in $I$ exists if and only if there exists $f_s$ in $\mathcal{I}$ where $f$ equals $f_s$ except $\pi_T(f) > \pi_T(f_s)$.
By claims 2 and 3, there must be $f'$ in $I'$ where $f'$ equals $f$ except $\pi_T(f') \ge \pi_T(f)$ (since $\pi_T(f)$ is based on the smallest $\pi_T(f')$ across nodes), which implies $\pi_T(f') > \pi_T(f_s)$.
If output facts in $I$ and $I'$ are identical, then $I'$ must also be observably equivalent to $\mathcal{I}$.

\david{Claims 2 and 3 imply that output facts can be identical.}
Claims 2 and 3 imply that output facts in $I'$ are also possible in $I$.
Output relations in $C$ are assumed to have existence dependencies on inputs, so given the inputs $f'$ and $f$ above, if the range of possible times for any output fact $f_o'$ in $I'$ is $(\pi_T(f'), \infty)$, then the range of possible times for the same output facts $f_o$ in $I$ must be $(\pi_T(f), \infty)$.
Since $\pi_T(f') \ge \pi_T(f)$, so the range of possible output times of $f_o'$ must be a sub-range of $f_o$, and any $f_o'$ in $I'$ is possible in $I$.

Therefore it suffices to prove claims 2 and 3.

\david{Selecting input arrival times for $I$.}
We select the time of input facts in $I$ such that its inputs are observably equivalent to $\mathcal{I}$.
For facts $f'$ in $I'$ over relations \ded{rSealed} in $C$:
(1) If \ded{rSealed} is referenced in $C_1$, let $f_a'$ be the corresponding fact of $f'$ in $I'$ with the smallest time $t' = \pi_T(f')$, and let $f$ equal $f_a'$ except $\pi_L(f) =$ \ded{addr}; $f$ of the corresponding $r$ exists in $I$ if and only if $f'$ exists in $I'$.
(2) If \ded{rSealed} is referenced in $C_2$, let $f$ equal $f'$ except $\pi_L(f') =$ \ded{addr}; $f$ of the corresponding $r$ exists in $I$ if and only if $f'$ exists in $I'$.
By the partial partitioning rewrite, each fact in \ded{rSealed} is derived from a series of facts in \ded{r}, \ded{rVote}, \ded{rVoteReq}, then \ded{rCommit}.
The fact $f_r$ in $r$ in its proof tree must have an earlier time, which must be later than the send time of $f_r$.
Therefore, since the times of input facts $f$ in $I$ is set to the times of facts in \ded{rSealed}, $\pi_T(f)$ must be later than the send time of $f$, and the inputs of $I$ are observably equivalent to $\mathcal{I}$.

We now prove claim 2 and 3 by induction on facts with time $t_i$ in $I'$.

\david{Prove claim 2.}
We show that claim 2 holds for facts with time $t' = t_i+1$, assuming claims 2 and 3 hold up to $t_i$.
We will prove by induction of the inputs in $I'$ and $I$ up to time $t'$.
Let $f'$ be a fact in input relation \ded{rSealed} of $C_1$ in $I'$ with $t' = \pi_T(f')$, $p = \pi_L(f')$, and $f_a'$ be the fact in $I'$ corresponding to $f'$ with the smallest time $t = \pi_T(f_a')$.
Let $f$ equal $f_a'$ except $\pi_L(f) =$ \ded{addr}.
By definition of $I$ above, $f$ is an input fact in $I$.
By the partial partitioning rewrite, there is no other co-occurring input fact $f''$ in $I'$ with $\pi_T(f'') = t'$, since \ded{processedI} will not contain the index of \ded{rCommit} until the next timestep.
Therefore, $I$ does not contain any other input fact at $t$.
Let $f_1'$ be the most recent fact in any input relation $r_1$ in $I'$ with $t_1' = \pi_T(f_1')$, $t' > t_1'$, and $p = \pi_L(f_1')$, such that there does not exist $f_2'$ input fact with $t_2' = \pi_T(f_2')$ in $I'$ where $t' > t_2' > t_1'$.
Let $f_{1a}'$ correspond to $f_1'$ with the smallest time $t_1 = \pi_T(f_{1a}')$.
Let $f_1$ equal $f_{1a}'$ except $\pi_L(f_1) =$ \ded{addr}.
By definition of $I$ above, $f_1$ is an input fact in $I$.
By the inductive hypothesis, we have $I'_{\overline{r'},p,t_1'} = I_{\overline{r},p,t_1}$.

Since no input facts exist in $I'$ for node $p$ between $t_1'$ and $t'$, if we can show that no input facts exist in $I$ for partition $p$ between $t_1$ and $t$, then we can reuse the proof from \Cref{app:state-machine-decoupling-proof} combined with the partitioning proof from \Cref{app:partitioning-by-co-hashing-proof} to show that $I'_{\overline{r'},p,t'} = I_{\overline{r},p,t}$.

\begin{lemma}[Order consistency]
\label{lemma:order-consistency}
Given facts $f_1,f_2$ of an input relation of $C_1$ in $I$, $\pi_T(f_1)$ is less than $\pi_T(f_2)$ if and only if for each node $p$, for each pair of corresponding facts $f_1',f_2'$ in $I'$ where $\pi_L(f_1') = \pi_L(f_2') = p$, we have $\pi_T(f_1') < \pi_T(f_2')$.
\end{lemma}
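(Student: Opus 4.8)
The plan is to introduce the proxy-assigned index as the common linchpin and to show that \emph{both} sides of the biconditional are equivalent to the ordering of these indices. Recall that in the Replication rewrite the proxy is the sole sequencer: it assigns each replicated input fact $f$ a unique index $i(f)$, embeds it in the broadcast \ded{rCommit}$(i,\ldots)$ message, and delivers the identical message to every partition. So the first step is to observe that $i(\cdot)$ is well defined and \emph{node-independent}: the copy of $f_k$ that is sealed at node $p$ (call it $f_k^{(p)}$) carries exactly the proxy's index $i(f_k)$, no matter which $p$ we pick. This is what lets us compare the copies at different nodes through a single quantity.

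Second, I would establish per-node monotonicity of sealing times. Inspecting the sealing rules, \ded{rSealed(next,...)} fires only when \ded{maxProcessedI} equals $\ded{next}-1$ and \ded{rCommit(next,...)} has arrived, while \ded{processedI} (hence \ded{maxProcessedI}) advances only on the following timestep. Consequently, at each node the indices are sealed consecutively $0,1,2,\ldots$, at most one per timestep, and therefore at strictly increasing times. This yields the local invariant I need: at any fixed node $p$, whenever both copies exist, $\pi_T(f_1^{(p)}) < \pi_T(f_2^{(p)})$ holds if and only if $i(f_1) < i(f_2)$. Since indices are unique, trichotomy then shows that the right-hand condition of the lemma---$\pi_T(f_1') < \pi_T(f_2')$ at \emph{every} node $p$---holds precisely when $i(f_1) < i(f_2)$.

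Third, I would tie the left-hand side to the same index order using the construction of $I$. By the input-time selection earlier in the partial partitioning proof, the time of an input fact in $I$ is the \emph{smallest} sealing time of its corresponding copies across nodes, so $\pi_T(f_k) = \min_p \pi_T(f_k^{(p)})$. If $i(f_1) < i(f_2)$, then $\pi_T(f_1^{(p)}) < \pi_T(f_2^{(p)})$ at every $p$; evaluating at the node $p^\ast$ that minimizes $\pi_T(f_2^{(p)})$ gives $\min_p \pi_T(f_1^{(p)}) \le \pi_T(f_1^{(p^\ast)}) < \pi_T(f_2^{(p^\ast)}) = \min_p \pi_T(f_2^{(p)})$, i.e.\ $\pi_T(f_1) < \pi_T(f_2)$ in $I$; the case $i(f_2) < i(f_1)$ is symmetric and yields the reverse strict inequality. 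Because distinct replicated inputs receive distinct times in $I$ (no two inputs co-occur, as argued earlier in the proof), trichotomy closes the loop and gives $\pi_T(f_1) < \pi_T(f_2)$ in $I$ iff $i(f_1) < i(f_2)$. Chaining the two equivalences through the common condition $i(f_1) < i(f_2)$ then proves the biconditional.

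I expect the main obstacle to be this third step, because the left-hand time in $I$ is a \emph{minimum over nodes}, and the minimizing node for $f_1$ need not be the minimizing node for $f_2$; one cannot compare the two minima node-by-node directly. The resolution is the interleaving argument above---evaluate both facts at whichever node achieves the minimum for the \emph{later} fact---combined with a careful appeal to the uniqueness of the proxy indices and to the \emph{strict} (not merely weak) monotonicity of per-node sealing times, which is exactly what excludes the degenerate case $\pi_T(f_1) = \pi_T(f_2)$ and makes the trichotomy arguments valid.
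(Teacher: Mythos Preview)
Your proposal is correct and follows essentially the same approach as the paper: both arguments hinge on the proxy-assigned index (your $i(f)$, the paper's order function $O(f')$), establish that per-node sealing times are strictly monotone in this index, and then compare the minima across nodes by evaluating both facts at the node that achieves the minimum for one of them. The paper packages this as a single proof by contradiction (assume $\pi_T(f_1) < \pi_T(f_2)$ yet some node has $\pi_T(f_1') > \pi_T(f_2')$, then derive that the supposed minimum-time copy of $f_2$ is not actually minimal), whereas you decompose it into two equivalences chained through $i(f_1) < i(f_2)$; but the substance is the same.
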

\begin{proof}
Assume by contradiction that for some node $p$, $\pi_T(f_1') \ge \pi_T(f_2')$.
By the partial partitioning rewrite, we know that $\pi_T(f_1') = \pi_T(f_2')$ is impossible for input relations of $C_1$.
Let the order $O(f')$ of a fact $f'$ in an input relation \ded{rSealed} of $C_1$ in $I'$ be the value of the index attribute of its parent fact in \ded{rCommit}.
Then $\pi_T(f_1') > \pi_T(f_2')$ if and only if $O(f_1') > O(f_2')$, which holds across all nodes $p$ by the partial partitioning rewrite.
Now let $f_{1a}'$ be the corresponding fact of $f_1$ with the smallest time in $I'$, and $f_{2b}'$ for $f_2$.
By construction of $I$, $\pi_T(f_{1a}') = \pi_T(f_1)$ and $\pi_T(f_{2b}') = \pi_T(f_2)$.
Let $f_{2a}'$ be the corresponding fact of $f_2$ where $\pi_L(f_{2a}') = \pi_L(f_{1a}')$; $f_{1a}'$ and $f_{2a}'$ are the corresponding inputs for $f_1$ and $f_2$ on one specific node.
Define $f_{1b}'$ for $f_{2b}'$ similarly.
Since ordering is consistent across nodes, $O(f_1') > O(f_2')$ implies $O(f_{1a}') > O(f_{2a}')$, which implies $\pi_T(f_{1a}') > \pi_T(f_{2a}')$.
Since $\pi_T(f_1) < \pi_T(f_2)$, we know $\pi_T(f_{1a}') < \pi_T(f_{2b}')$, therefore $\pi_T(f_{2a}') < \pi_T(f_{2b}')$.
This contradicts the definition of $f_{2b}'$ as the fact in $I'$ corresponding to $f_2$ with the smallest time, since $f_{2a}'$ corresponds to $f_2$ with a smaller time.
Proof by contradiction.
\end{proof}

Assume for contradiction that there exists $f_2$ of $r_2$ in $I$, $t_2 = \pi_T(f_2)$, $t > t_2 > t_1$, such that the most recent fact of $I'$ does not correspond to the most recent fact of $I$.
Let $f_2'$ be the corresponding input fact in $I'$ with $t_2' = \pi_T(f_2')$, $p = \pi_L(f_2')$ such that $t' > t_2'$.

If $r_2$ and $r_1$ are both inputs of $C_1$, then by \Cref{lemma:order-consistency}, $t > t_2 > t_1$ implies $t' > t_2' > t_1'$, which contradicts the definition of $f_1'$ as the most recent input fact.

If $r_2$ is an input of $C_1$ and $r_1$ is an input of $C_2$, then $t > t_2$ implies $t' > t_2'$, and $t_1 = t_1'$.
Since $t_2 > t_1$ and $t_2' \ge t_2$ by construction of $I$, we have $t' > t_2' > t_1'$ which is again a contradiction.

Now consider $r_2$ as an input of $C_2$ such that $t_2' = t_2$.
If $r_1$ is an input of $C_1$, then let $f_{1a}'$ be the corresponding input fact in $I'$ with the smallest time $t_{1a}' = \pi_T(f_{1a}')$.
By definition, $t_1' \ge t_{1a}'$.
By construction of $I$, $t_{1a}' = t_1$, therefore $t > t_2 > t_1$ implies $t' > t_2' > t_{1a}'$.
By the partial partitioning rewrite, no input facts can arrive on partition $p$ between $t_{1a}'$ and $t_1'$, since the \ded{outstandingVote} relation will force input relations to buffer.
Therefore, given $t_2' > t_{1a}'$ and $t_1' \ge t_{1a}'$, we must have $t_2' > t_1' \ge t_{1a}'$.
Combined with $t' > t_2' > t_{1a}'$, we now have $t' > t_2' > t_1' \ge t_{1a}'$; $t_2' > t_1'$ contradicts the definition of $f_1'$ as the most recent input fact.

Otherwise, if $r_1$ is an input of $C_2$, then $t_1' = t_1$.
Then $t_2 > t_1$ implies $t_2' > t_1'$, which contradicts the definition of $f_1'$ as the most recent input fact.

We've proven that the most recent input fact $f_1'$ in $I'$ up to $t'$ corresponds to the most recent input fact $f_1$ in $I$ up to $t$, and we can reuse earlier proofs to prove claim 2.

\david{Prove claim 3.}
We show claim 3 holds, using the same variable definitions.
By the partial partitioning rewrite, we know that if an input relation \ded{rSealed} of $C_2$ is not empty, then input relations \ded{rSealed} of $C_1$ must be empty.
Let $f'$ be the input fact in $I'$ at time $t'$ with corresponding $f$ in $I$ at $t$.
Note that this proof can be generalized to a set of facts $\overline{f'}$ at time $t'$.
By construction of $I$, $t' = t$.
Again assuming the most recent input fact $f_1'$ of $r_1$ in $I'$ at time $t_1'$ corresponds to $f_1$ in $I$ at $t_1$, we show that there is no $f_2$ of $r_2$ in $I$ at $t_2$ where $t > t_2 > t_1$.

If $r_2$ and $r_1$ are both inputs of $C_1$, then $t_2 > t_1$ implies $t_2' > t_1'$ by \Cref{lemma:order-consistency}.
Since $t = t'$, so $t > t_2 > t_1$ implies $t' > t_2' > t_1'$, which contradicts the definition of $f_1'$ as the most recent input fact.

If $r_2$ is an input of $C_1$ and $r_1$ is an input of $C_2$, then $t_1 = t_1'$.
Since $t_2' \ge t_2$ by construction of $I$, $t_2 > t_1$ implies $t_2' > t_1'$, and since $t = t'$, $t > t_2 > t_1$ implies $t' > t_2' > t_1'$, a contradiction.

If $r_2$ is an input of $C_2$ and $r_1$ is an input of $C_1$, then $t_2 = t_2'$.
Since $t = t'$ and $t > t_2$, $t' > t_2'$.
$t' > t_2' > t_1'$ is a contradiction, and $t_2' \ne t_1'$ by the partial partitioning rewrite, so we must have $t' > t_1' > t_2'$.
Let $f_{1a}'$ correspond to $f_1$ in $I'$ with the smallest time $t_{1a}' = \pi_T(f_{1a}')$.
By construction of $I$, $t_{1a}' = t_1$.
By definition, $t_1' \ge t_{1a}'$.
By the partial partitioning rewrite, no input facts can arrive on node $p$ between $t_{1a}'$ and $t_1'$.
Therefore, given $t_1' \ge t_{1a}'$ and $t_1' > t_2'$, we must have $t_1' \ge t_{1a}' > t_2'$.
$t_{1a}' > t_2'$ implies $t_1 > t_2$, which contradicts our assumption that $t > t_2 > t_1$.

If $r_2$ and $r_1$ are both inputs of $C_2$, then $t_2 = t_2'$ and $t_1 = t_1'$.
Combined with $t = t'$, $t > t_2 > t_1$ implies $t' > t_2' > t_1'$ which is a contradiction.

By induction on previous facts of input relations, we also have $I'_{\overline{r'},p,t_1'} = I_{\overline{r},p,t_1}$.
Since inputs are the same between $I'$ and $I$ at time $t'$ and $t$, we can similarly use the proofs from \Cref{app:partitioning-by-co-hashing-proof,app:state-machine-decoupling-proof} to prove claim 3, completing the proof of correctness.

\subsection{Partitioning sealing}
\label{app:partitioning-sealing}

\subsubsection{Sealing}
\label{sec:sealing}

Sealing~\cite{blazes} is a syntactic sugar we introduce to simulate sending multiple output facts in a single asynchronous message.
Sealed relations can be partially partitioned so each partition can compute and send its own fraction of the sealed outputs.

\david{Syntax.}
Syntactically, \ded{seal} is added to the head of a rule $\varphi$ using aggregation syntax.
We demonstrate how to seal the relation $r$ in the output relation \ded{out} of $C$ below, where $s$ and $u$ are additional illustrative relations:
\begin{lstlisting}[language=Dedalus, float=false, mathescape=true]
# Send rule on component $C$.
out(seal<r>,a,l',t') :- r(...,l,t), s(a,l,t), dest(l',l,t), delay((...,a,l,t,l'),t')
# Receive rule on component $C'$.
u(...,a,l,t) :- out(...,a,l,t)
\end{lstlisting}

This desugars into the following, where \ded{out} is replaced with \ded{outSealed} in $C'$:
\begin{lstlisting}[language=Dedalus, float=false, mathescape=true]
# Component $C$.
rCount(count<...>,a,l,t) :- r(...,l,t), s(a,l,t)
outCount(c,a,l',t') :- rCount(c,a,l,t), s(a,l,t), dest(l',l,t), delay((c,a,l,t,l'),t') |\label{line:sealing-out-count}|
out(...,a,l',t') :- r(...,l,t), s(a,l,t), dest(l',l,t), delay((...,a,l,t,l'),t') |\label{line:sealing-out}|
# Component $C'$.
outReceived(count<...>,a,l,t) :- out(...,a,l,t) |\label{line:sealing-received}|
sealed(a,l,t) :- outReceived(c,a,l,t), outCount(c,a,l,t) |\label{line:sealing-sealed}|
u(...,a,l,t) :- out(...,a,l,t), sealed(a,l,t)
# Only persist until sealed.
out(...,a,l,t') :- out(...,a,l,t), !sealed(a,l,t), t'=t+1
outCount(c,a,l,t') :- outCount(c,a,l,t), !sealed(a,l,t), t'=t+1
\end{lstlisting}

\david{Why sugaring helps us optimize.}
The sugared syntax for sealing guarantees that the relations \ded{rCount}, \ded{outCount}, \ded{outReceived}, etc are only used as described above, and correctness is preserved as long as after partial partitioning, the facts in \ded{outSealed} match the facts originally in \ded{out}.

\subsubsection{Mechanism}
\label{app:partitioning-sealing-mechanism}

Sealing can be partitioned through dependency analysis on the sugared syntax; among the rules $\varphi$ introduced in $C$, all the joins in the body of $\varphi$ were already present in the sugared syntax.

If \ded{out} cannot be partitioned with dependencies, and it has an existence dependency on some input relation \ded{in} of $C_1$, then the rewrite is as follows:

\textbf{Rewrite: Partitioning Sealing.}
After completing the partial partitioning rewrite over the sugared syntax, perform the following rewrites.
On $C$, replace \Cref{line:sealing-out-count,line:sealing-out} with the following, assuming \ded{inCommit} is defined for \ded{in} as specified in \Cref{app:partial-partitioning-mechanism}:
\begin{lstlisting}[language=Dedalus, float=false]
outCount(l,i,c,a,l,t) :- rCount(c,a,l,t), inCommit(i,...,l,t), dest(l',l,t), delay((l,i,c,a,l,t,l'),t') |\label{line:sealing-out-count}|
out(i,...,a,l',t') :- r(...,a,l,t), s(a,l,t), inCommit(i,...,l,t), dest(l',l,t), delay((i,...,a,l,t,l'),t')
\end{lstlisting}

On $C'$, introduce the relation \ded{numPartitions}$(n)$ and populate with it the number of nodes $addr_i$.
Replace \Cref{line:sealing-received,line:sealing-sealed} with the following code.
\begin{lstlisting}[language=Dedalus, float=false]
outReceived(i,count<...>,a,l,t) :- out(i,...,a,l,t)
# Sum the expected messages from all partitions.
outCountSum(i,sum<c>,a,l,t) :- outCount(p,i,c,a,l,t)
# Check if all partitions have sent their counts.
outCountPartitions(count<p>,i,a,l,t) :- outCount(p,i,c,a,l,t)
sealed(a,l,t) :- outReceived(i,c,a,l,t), outCountSum(i,c,a,l,t), outCountPartitions(n,i,a,l,t), numPartitions(n)
\end{lstlisting}

\subsubsection{Proof}
\label{app:partitioning-sealing-proof}

Unlike the proof in \Cref{app:partial-partitioning-proof}, we cannot reuse the proof in \Cref{app:partitioning-by-co-hashing-proof} because \ded{out} is technically neither fully or partially partitioned; instead \ded{outCount} is partially partitioned and the logic of $C'$ is modified.
Instead, we show that for instance $I'$ after transformation, $r =$ \ded{sealedOut}, for all time $t$, there exists $I$ such that $I'_{r,t} = I_{r,t}$.
The proof of correctness for all other relations in $C$ is covered by \Cref{app:partial-partitioning-proof}, as $r$ is an output relation so no other relations in $C$ are dependent on $r$.

\david{The state across replicas when the replicated input arrived on $I'$ corresponds to some state in $I$ when the input arrived.}
Since $r$ has an existence dependency on some input \ded{rSealed} of $C_1$, we only need to consider the times when \ded{rSealed} is not empty.
Consider time $t'$ in $I'$ where \ded{rSealed} contains some fact $f'$.
By partial partitioning, there is no other fact in \ded{rSealed} with the same time as $f'$.
Consider all corresponding facts of $f'$ for each node $p$, and their corresponding times $t_p'$.
By the proof in \Cref{app:partial-partitioning-proof}, there exists a time $t$ in $I$ such that for each $t_p'$, $I'_{\overline{r'},p,t_p'} = I_{\overline{r},p,t}$.

\david{Same potential output times.}
Let $\varphi$ be any rule in $C$ with $r$ in its head.
The body of $\varphi$ are referenced in $C$ and included in $\overline{r'}$ in $I'$ and $\overline{r}$ in $I$, respectively, and since those relations contain the same facts at times $t_p'$ and $t$, the will evaluate to the same output facts $f'$ and $f$, except the range of possible times for $f'$ is $(t_p', \infty)$ while the range is $(t, \infty)$ for $f$.
By construction of $I$, $t_p' \ge t$, so the range of possible times for $f'$ is a subset of the range for $f$, and any immediate consequence producing $f'$ in $I'$ is possible in $I$.
Note that the addition of the body term \ded{inCommit} does not affect the immediate consequence, since we assumed that $r$ already has an existence dependency on \ded{in}.
Similar logic applies for the evaluation of \ded{outCount}, completing our proof.

\section{Non-linearizable execution over partitioned acceptors}
\label{app:non-linearizable-acceptors}

\CompPaxos{} partitions acceptors without introducing coordination, allowing each node to hold an independent ballot.
In contrast, \AutoPaxos{} can only partially partition acceptors and must introduce coordinators to synchronize ballots between nodes, because our formalism states that the nodes' ballots together must correspond to the original acceptor's ballot.
Proposers in \CompPaxos{} can become the leader after receiving replies from a quorum of any $f+1$ acceptors for each set of $n$ nodes; the nodes across quorums do not need to correspond to the same acceptors.
In contrast, the $n$ nodes of each acceptor in \AutoPaxos{} represent one original acceptor, so proposers in \AutoPaxos{} become the leader after receiving replies from all $n$ nodes of a quorum of $f+1$ acceptors.
Crucially, by allowing the highest ballot held at each node to diverge, \CompPaxos{} can introduce non-linearizable executions that remain safe for Paxos, but are too specific to generalize.

We first define what it means for a Paxos implementation to be linearizable.
A \ded{p1a} and its corresponding \ded{p1b} correspond to a request and matching response in the history.
For an implementation of Paxos to be linearizable, the content of each \ded{p1b} must be consistent with its matching \ded{p1a} taking effect some time between the \ded{p1a} arrival time and the \ded{p1b} response time.
The same statements hold for \ded{p2a} and matching \ded{p2b} messages.
Since \ded{p1a} and \ded{p1b} messages are now sent to each node in \CompPaxos{}, we must modify the definition of linearizability for \CompPaxos{} accordingly.
Assume that a \ded{p1a} arriving at acceptor $a$ in Paxos corresponds to the arrival of the same \ded{p1a} messages at all nodes of $a$ in \CompPaxos{}, and a matching \ded{p1b} arriving at proposer $p$ in Paxos corresponds to the arrival of all matching \ded{p1b} messages at $p$ in \CompPaxos{}.

\begin{figure}[t]
    \centering
    \includegraphics[width=0.5\linewidth]{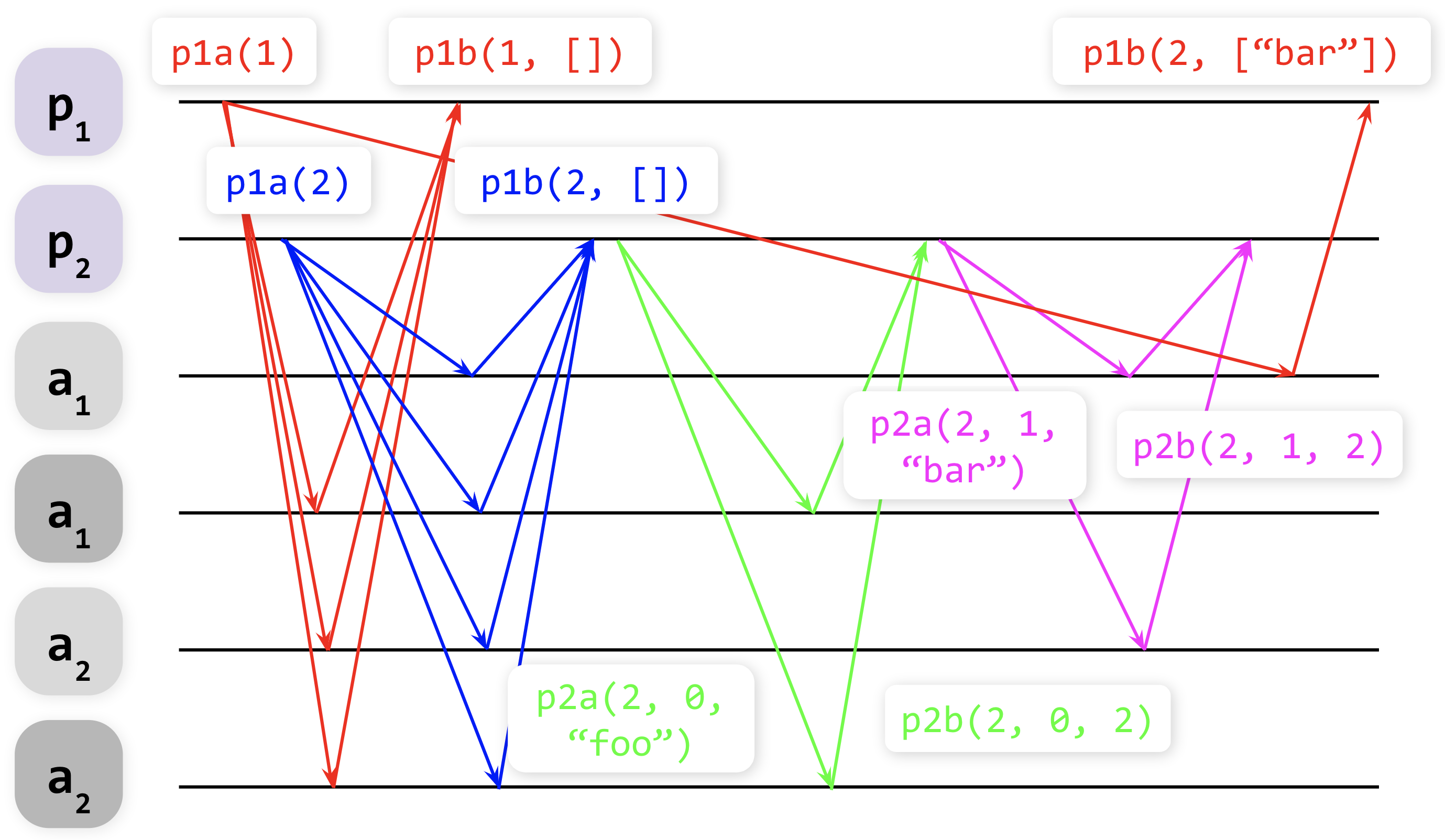}
    \caption{Non-linearizable execution of \CompPaxos{}. Acceptor $a_3$ is excluded for simplicity. Lighter-gray acceptors belong to partition 1, and darker-gray acceptors to partition 2. Each set of requests and matching responses are a different color. \kaushik{why not use the same gold colors used earlier for partitions?}}
    \label{fig:non-linearizability}
\end{figure}

Now consider the execution shown in \Cref{fig:non-linearizability}:
\begin{enumerate}
    \item Proposer $p_1$ broadcasts \ded{p1a} with ballot $1$. It arrives all acceptors except partition $1$ of acceptor $a_1$. The other acceptors return \ded{p1b} with ballot $1$.
    \item Proposer $p_2$ broadcasts \ded{p1a} with ballot $2$. It arrives at all partitions of every acceptor, which return \ded{p1b} with ballot $2$. Proposer $p_2$ is elected leader.
    \item Proposer $p_2$ sends \ded{p2a} with ballot $2$, message ``foo'', and slot $0$ to partition $2$ of every acceptor, which return \ded{p2b} with ballot $2$. ``foo'' is committed.
    \item Proposer $p_2$ then sends \ded{p2a} with ballot $2$, message ``bar'', and slot $1$ to partition $1$ of every acceptor, which return \ded{p2b} with ballot $2$. ``bar'' is committed.
    \item Proposer $p_1$'s \ded{p1a} finally arrives at partition $1$ of $a_1$, which returns \ded{p1b} with ballot $2$, containing ``bar'' in its log. Proposer $p_1$ merges the \ded{p1b}s it has received and concludes that the log contains only ``bar''.
\end{enumerate}

The execution is non-linearizable: by reading ``bar'', $p_1$'s \ded{p1b} must happen-after the write of ``bar'', which happens-after the write of ``foo'', but $p_1$ does not read ``foo'', so it must happen-before the write of ``foo''.
Why is \CompPaxos{} correct despite allowing such non-linearizable executions?

Non-linearizable reads of the log are only possible in Paxos when a proposer fails leader election, in which case, \textit{the log is discarded} and the proposer tries again.
The non-linearizable log is never used.
Intuitively, because phase 1 (leader election) quorums must intersect, in order for a proposer $p_1$ to read a write from $p_2$ that occurred while $p_1$ was attempting leader election, $p_2$ must have completed leader election, overlapping in at least 1 acceptor with $p_1$ and preempting $p_1$.
Thus $p_1$ will fail to become the leader and the log it receives in \ded{p1b} does not matter.

These differences stem from rewrites that are specific to Paxos and require an in-depth, global understanding of the protocol.
By design, our \ourApproach{} framework is protocol-agnostic and considers only local rewrites.
In contrast, \CompPaxos{} is able to admit rewrites that introduce non-linearizable executions as it can prove that the results of these executions are never used.

\jmh{Say all this up front. ``\CompPaxos{} was described as applying simple ideas for scaling Paxos, but in at least one regard is undertook a subtle global optimization that is not possible when adhering to rewriting as we did. Here we explain this optimization of \CompPaxos in detail.''}

\jmh{What am I supposed to take away from all these details? This feels unhelpful to my understanding of this paper. But the whole situation leaves me wondering about Figure 10. I don't know which of Michael's optimizations are turned on in your experiments -- you can easily help me with that in the text. I also don't know what to attribute to Scala vs Dedalus and what to attribute to Manual vs Automatic optimizations. That could be solved by hand-writing all Michael's optimizations in Dedalus, and redoing the ablation, which would be quite helpful. It would be nifty to show that decoupling and partitioning were all Michael really needed, but it would be equally interesting to see Michael's other optimizations adding a non-trivial boost -- challenge for future papers.}}{}

\tr{}{
\received{July 2023}
\received[accepted]{November 2023}}

\end{document}